\g@addto@macro\TPT@defaults{\footnotesize}
\newtheorem{theorem}{Theorem}[section]
\newtheorem{lemma}{Lemma}
\DeclareMathOperator*{\argmin}{arg\,min}
\g@addto@macro\TPT@defaults{\footnotesize}
\newcolumntype{L}{D{.}{.}{2,5}}
\newcolumntype{L}{D{.}{.}{2,5}}
\title[An \textsf{achemso} demo]
  {Quantum Machine-Learning for Eigenstate Filtration in Two-Dimensional Materials}
\author{Manas Sajjan}
\affiliation{Department of Chemistry, Purdue University,
West Lafayette, IN-47907, USA}
\author{Shree Hari Sureshbabu}
\affiliation{Elmore Family School of Electrical and Computer Engineering, Purdue University, West Lafayette, IN-47907, USA}
\author{Sabre Kais}
\affiliation{Department of Chemistry, Department of Physics and Astronomy, and Purdue Quantum Science and Engineering Institute, Purdue University, West Lafayette, IN-47907, USA}
\email{kais@purdue.edu}
\begin{document}

\begin{abstract}

Quantum machine learning algorithms have emerged to be a promising alternative to their classical counterparts as they leverage the power of quantum computers. Such algorithms have been developed to solve problems like electronic structure calculations of molecular systems and spin models in magnetic systems. However, the discussion in all these recipes {\color{black} focuses} specifically on targeting the ground state. Herein we demonstrate a quantum algorithm that can filter any energy eigenstate of the system based on either symmetry properties or a predefined choice of the user. The workhorse of our technique is a shallow neural network encoding the desired state of the system with the amplitude computed by sampling the Gibbs - Boltzmann distribution using a quantum circuit and the phase information obtained classically from the non-linear activation of a separate set of neurons. We show that the resource requirements of our algorithm are strictly quadratic. To demonstrate its efficacy, we use state-filtration in monolayer transition metal-dichalcogenides which are hitherto unexplored in any flavor of quantum simulations. We implement our algorithm not only on quantum simulators but also on actual IBM-Q quantum devices and show good agreement with the results procured from conventional electronic structure calculations. We thus expect our protocol to provide a new alternative in exploring band-structures of exquisite materials to usual electronic structure methods or machine learning techniques that are implementable solely on a classical computer.

\end{abstract}

\section{Introduction}

Machine learning concerned with identifying and utilizing patterns within a data set has gained tremendous importance within the last decade. Even though the germinal idea can be traced back to the 1950s \cite{COHEN20211}, it is safe to say that the domain has become a pioneering field of research within the last few years due to escalation in computational prowess and data availability, and have metamorphosed several disciplines including autonomous driving \cite{Auto_drive}, image-recognition\cite{he2016deep}, speech recognition\cite{sak2015learning}, natural language processing \cite{Hoy2018}, computer games \cite{Silver2016}, and even refugee integration\cite{Bansak2018}. Consequently the integration of the technique in solving problems of physico-chemical interest \cite{RevModPhys.91.045002} have also been explored with remarkable success whether in predicting ground-state density functionals\cite{PhysRevLett.108.253002, Brockherde2017}, self-energy in Dynamical Mean-Field Theory (DMFT) for {\color{black} the} Anderson model\cite{Arsenault2014}, atomistic potentials and forcefields for molecular dynamics\cite{PhysRevLett.104.136403, Li2015} or unsupervised learning of phases of {\color{black} the} 2D-Ising Hamiltonian\cite{Wang2016a}. Similar advancements have also been made in the field of Deep Learning\cite{Lecun2015} and Artificial Neural Networks (ANN) which has been successfully used to learn phase transition parameters\cite{Carrasquilla_2017, ChNg2017} or in quantum phase recognition \cite{Cong_2019}. Among the various architectures in this category, Restricted Boltzmann machine (RBM) based generative models being a universally powerful approximator for any probability density\cite{Roux_RBM, Melko2019a} have particularly {\color{black} gained} attention. RBMs have been successfully used to reconstruct quantum states in tomography from measurement statistics\cite{torlai2018neural}. Carleo and Troyer showed how a neural network encoding a shallow RBM ansatz requires fewer parameters than certain {\color{black} kinds of matrix product states} 
and can predict the ground state energy and unitary dynamical evolution of simple spin models with high accuracy \cite{carleo2017solving}.

However, all the algorithms discussed above have trained machine learning or deep-learning models on a classical computer to effectively recreate either a quantum state or its essential features. The past decade {\color{black} has} also witnessed unprecedented development in  quantum computing as a new paradigm which is fundamentally different than its classical counterpart in processing and storing data and performing logical operations\cite{preskill2018quantum} harnessing the power of quantum superposition and non-classical correlations like entanglement. A natural question that has spawned is whether such quantum machines can interpret and produce statistical patterns in data which are either difficult for classical machine learning algorithms or the performance {\color{black} of machine learning algorithms on quantum computer can outperform the classical variants in efficiency }\cite{Biamonte2017}. This has naturally motivated the development of a host of quantum -machine learning algorithms like Quantum Principal Component Analysis (PCA) \cite{Lloyd2014}, Quantum Support Vector Machines (QSVM)\cite{rebentrost2014quantum}, Quantum Reinforcement Learning \cite{Dunjko2016}, quantum supervised and unsupervised learning \cite{lloyd2013quantum}, {\color{black} kernel design for Gaussian processes\cite{Gauss_kernel}, Gaussian process regression\cite{PhysRevA.99.052331}}, quantum classifier\cite{neven2009training} or a plethora of linear algebra routines like HHL\cite{PhysRevLett.103.150502}, QSVD\cite{PhysRevA.97.012327}, qBLAS\cite{QBlas} which forms the backbone of the quantum versions of many other machine learning algorithms. Each of these methods has reported theoretical speedup over the best-known classical algorithm under certain specific circumstances \cite{Ciliberto2017}. Similar investigations have also been undertaken for artificial neural networks to discover any unforeseen quantum advantage. For instance, Amin and co-workers have demonstrated a Quantum Boltzmann Machine \cite{Amin2018} by adding an off-diagonal transverse field to the training model thereby making it more expressive to treat larger classes of problems \cite{PhysRevA.96.062327}. Weibe \textit{et al} have shown how sampling from a Gibbs distribution as is required for training an RBM can be distinctly accelerated using a quantum processor\cite{wiebe2014quantum}.

Motivated by such recent developments, Xia and Kais~\cite{Xia2018} proposed an actual quantum circuit using polynomial resources to correctly learn the amplitude of the RBM ansatz encoded within a neural network representing the state of a quantum system. The work also extended the neural network to  three layers to learn the sign of the various components of the encoded wavefunction. The algorithm was benchmarked by showing the evaluation of ground states on simple molecular systems like $\rm{H}_2$, $\rm{LiH}$, etc thereby formally extending the efforts mentioned above to actual electronic structure calculations which are considered to be powerful applications of near-term quantum devices. {\color{black}
Indeed, interesting algorithmic advances have been made recently that can capture both the ground and excited state of such electronic structure problems with good accuracy \cite{Alan_2005, PhysRevLett.126.070504, doi:10.1002/9781118742631.ch01, Peruzzo2014, Hardware_eff_Nat,  DASKIN201887, VQE_review}.}
Kanno {\it et al.}~\cite{kanno2019manybody} modified the above method to encompass the complex phase of each component of the wavefunction by adding an additional neuron to the third layer. {\color{black} However both the work simulated the performance of the algorithm for ground states only on noiseless classical devices. In fact, due to conditional dependence on the sequence of measurements of the ancilla register, straight-forward implementation of the algorithm on a present-day actual NISQ device is difficult.

The main contributions of this manuscript are thus the following:- 1) Unlike previous efforts we focus our attention beyond just the ground state and devise a quantum machine learning algorithm with a three-layered RBM being trained to learn any arbitrary state of the system retaining the quadratic resource requirements.  To train the network with the RBM ansatz, we employ a hardware-implementable version of the above quantum circuit which as we shall discuss explicitly makes our algorithm require quadratic resources in all fronts like circuit width, circuit depth, parameter count.
2) A generic lower bound for the successful sampling of the quantum circuit in the algorithm is derived in terms of the parameters of the network. The performance of the lower bound is thoroughly characterized and 
specific limiting cases leading to known bounds are formally deduced and discussed. Based on this we also discuss in detail the measurement statistics of our algorithm for systems studied in this report and in general.
3) We also present a simple yet formal proof of feasibility of the cost function used to train the network. Even though such functions have been used in classical algorithms and are beginning to being used in other quantum algorithms beyond the precincts of quantum machine learning, a formal proof is lacking in literature which we supply here for completeness.
4) Furthermore unlike most reports on quantum machine learning and quantum computing in general which have studied molecular systems only, we apply our algorithm on important 2D materials like monolayer transition metal di-chalcogenides (TMDCs) which are hitherto unexplored on a quantum computer using any algorithm let alone quantum machine learning. These materials have been shown to possess tunable band-gap for many novel applications\cite{Choi2017, https://doi.org/10.1002/nano.202000047, Lv2015, Manzeli2017}. We make a comprehensive study of such materials by showing how our algorithm can not only learn the true band gap but also by resolving finer yet important features like trigonal warping and spin-orbit coupling (SOC) which dictates the low-energy physics near the K-valley. The importance of understanding excited states beyond just the valence band for such periodic materials underlies its function in photovoltaics\cite{C8CS00067K, ALHARBI20151073}. If applied to other systems, excited states can be an insightful resource like in elucidating the reaction pathways across conical {\color{black}intersections} arising in processes like vision\cite{Palczewski2012, PhysRevLett.98.023001}, photosynthesis\cite{Cerullo2395, C8CP05535A}, magneto-reception\cite{Rodgers353, PhysRevE.90.042707}, and even bio-chemistry of luciferin \cite{Chung2008} to name a few.
5) We further demonstrate in a unified way how a user can sieve any desired state in such materials using not only energy as in point (4) but other inherent symmetries of the Hamiltonian too.
6) All numerical experiments are implemented on not only quantum simulator (Qiskit) but on actual NISQ devices using the quantum processors at IBM~\cite{aleksandrowicz2019qiskit}. The performance of the algorithm is benchmarked thoroughly in each case using quantifiers like energy errors of the target state, state composition, constraint violation, infidelity with the target state learnt by the neural network etc. The usage of certain kinds of error-mitigation techniques and role of initial parameterization, enhancing the capacity of the model through additional spins in the network is thoroughly discussed. We have also included results from a molecular example wherein multi-reference correlation is important due to geometric distortion. To the best of our knowledge, all of these are first of its kind in any flavor of quantum-machine learning. We show that the performance of our algorithm is in excellent agreement with the exact value in each case.}

The organization of the paper is as follows. In section \ref{Sec:Theory} we discuss the theoretical underpinning of our algorithm with an original proof of the feasibility of our cost function employed for training the network. In section \ref{Sec:Algo} we elaborate on the geometry of the network and the details of the algorithm required for learning the desired feature with the {\color{black} associated} resource requirements and implementation techniques. We prove an explicit lower bound on the probability of successful {\color{black} events} on our algorithm (see Section 2 Supplementary Information). In section \ref{Sec:results} we discuss the application of the algorithm in simulating excited states or any arbitrary states in two important TMDCs - $\rm{MoS}_2$ and $\rm{WS}_2$ based on user-defined constraints. We conclude in section \ref{Sec:conclusion} with a brief discussion of possible future extensions. 

\section{Theory}
\label{Sec:Theory}

Our objective is to develop an efficient algorithm to train a neural network to perform the following minimization in a $d$-dimensional space
\begin{align} \label{Const_main}
    &\min_{\forall \psi \in S}\:\:\: \langle \psi| \hat{H} |\psi \rangle \nonumber \\
    S &= \{|x\rangle\:\: |\:\:\: \hat{O}|x\rangle = \omega|x\rangle\:\: \forall\: |x\rangle \in \mathbb{C}^d\}
\end{align}
where $\hat{H} \in \mathbb{C}^{d\times d}$ is the hermitian Hamiltonian defining the problem. Similarly $\hat{O} \in \mathbb{C}^{d\times d}$ is the user-defined hermitian operator. $\omega$ is the eigenvalue (real-valued) of the operator  
$\hat{O}$ and $|x\rangle$ is the corresponding eigenvector. {\color{black} The set $S$ is the collection of all such eigenvectors with a specific eigenvalue $\omega$.}  
The operators $\hat{O}$ which we shall discuss will generally have more than one element in set $S$ due to degeneracy in the eigenspace labelled by $\omega$. By construction, the form of the algorithm shall always normalize the state $|\psi\rangle$ and hence normalization as a further constraint is unnecessary. We will return to this point later. {\color{black}
The primary goal of the network is to then encode a normalized state-vector $|\psi\rangle$ which is a formal solution to Eq.\ref{Const_main}. The corresponding state so obtained is from the eigenspace of $\hat{O}$ with eigenvalue $\omega$. If several such choices exist, the network learns the one with minimum energy.}

To solve the quadratic minimization problem with quadratic constraint in Eq. \ref{Const_main} we will define a penalty procedure as 

\begin{align} \label{cost_fn}
    F(|\psi\rangle, \hat{H}, \hat{O}, \lambda) = \langle \psi| \hat{H} |\psi \rangle + \lambda \langle \psi| (\hat{O}-\omega)^2 |\psi \rangle 
\end{align}
where $\lambda \ge 0$ is the penalty parameter. We provide a formal and original proof of equivalence of Eq. \ref{cost_fn} with respect to Eq. \ref{Const_main} based on the following Theorem.

\begin{theorem} \label{main_thm}
Let $\{\lambda_i\}_{i=1}^{\infty}$ be a sequence in the penalty parameter such that $\lambda_1 \le \lambda_2 \le \lambda_3..... \lambda_{\infty} \rightarrow \infty$.
Also let $P=\{|\psi_i\rangle\}_{i=1}^{\infty}$ such that $\forall$ $|\psi_i\rangle$ $\in$ $P$ the following is true.
\begin{align}
    |\psi_i\rangle &= \argmin_{\psi} {\rm{F}(\lambda_i, \hat{H}, \hat{O}, |\psi\rangle)}
\end{align}
In other words, $P$ is the set of minimizers for Eq. \ref{cost_fn} for each penalty parameter $\lambda \in \{\lambda_i\}_{i=1}^{\infty}$.
If $|\psi^*\rangle \in P$  is a limit-point of the convergent sequence $\{\psi_i\}_{i=1}^\infty$ in $P$ i.e $|\psi^*\rangle = \lim_{i \to \infty} |\psi_i\rangle$ then $|\psi^*\rangle$ $\in$ $S$
\end{theorem}

An original proof of Theorem \ref{main_thm} is in Section 1 of the Supporting Information based on the fact that both the 1st and 2nd term in Eq. \ref{cost_fn} are quadratic forms. An intuitive explanation can be provided that would suffice to appreciate the discussion in this report. One can note that in the cost function defined in Eq. \ref{cost_fn} the term $\langle\psi|\hat{H}|\psi\rangle$ imposes the minimization of energy as required in Eq. \ref{Const_main}. The second term i.e. $\langle\psi|(\hat{O}-\omega)^2|\psi\rangle$ is the variance of the operator $\hat{O}$ with the mean being the eigenvalue $\omega$ and is non-negative by construction. For {\color{black}large values of the penalty parameter} $\lambda$, the minimization of the overall cost function is afforded if the variance term is pinned to zero i.e. the state $|\psi^*\rangle$ so chosen is an eigenstate of the operator $\hat{O}$ with eigenvalue $\omega$. The space of such states is defined by the set $S$ in Eq. \ref{Const_main}. If several such choices exist, the role of the 1st term kicks in to guarantee optimality in energy.

While penalized optimization schemes with cost function of the kind in Eq. \ref{cost_fn} {\color{black}has been employed in classical algorithms like in Density Matrix Renormalization Group (DMRG)\cite{doi:10.1146/annurev-conmatphys-020911-125018}, in Quantum Monte-Carlo methods in the past \cite{PhysRevLett.60.1719} and even recently \cite{doi:10.1063/5.0030949}} are also {\color{black}} beginning to gain attention in recent literature on quantum algorithms beyond quantum-machine learning i.e. in algorithms using Unitary-Coupled Cluster Ansatz (UCC) of variational quantum eigensolver (VQE) \cite{Kuroiwa2021}, yet a formal proof is lacking. Besides a more popular choice that has been studied in some detail is constraining the average value of the operator $\langle\psi|\hat{O}|\psi\rangle$ \cite{Ryabinkin2019, Greene-Diniz2021} with the required eigenvalue instead of penalizing the variance as in Eq. \ref{cost_fn}. However, this recent study \cite{Kuroiwa2021} shows Eq. \ref{cost_fn} is a better penalty procedure in terms of feasibility and final error than restraining the average without providing a formal proof of equivalence between Eq. \ref{cost_fn} and Eq. \ref{Const_main}. Ref\cite{Kuroiwa2021} also implemented the same to target symmetry operators on molecular systems using UCC-VQE using Qulacs\cite{suzuki2020qulacs} which is an ideal simulator of a real quantum computer. However, in this report, we shall use Eq. \ref{cost_fn} to develop and train a shallow neural network using a quantum machine learning algorithm with quadratic resource requirements in terms of the size of qubit register, number of gates and parameter counts. The ansatz which the neural network would encode for the quantum state $|\psi\rangle$ would correspond to a probability density represented by
RBM. We benchmark our algorithm on important 2D periodic materials like transition metal di-chalcogenides (TMDCs) and show implementations not only on quantum simulators but on actual NISQ devices (IBM-Q). TMDCs have never been studied before using any quantum algorithm. In the next few sections, we shall show how to filter any specific state of these 2D materials using either symmetry operators of the Hamiltonian or user-defined constructions of operator $\hat{O}$ in a unified manner using the same algorithm. Such an attempt to the best of our knowledge is the first of its kind in QML as all previous reports have focused exclusively on targeting the ground state of the system alone 
\cite{kanno2019manybody,Sureshbabu2021}.

\subsection{Filter for specific excited states} \label{exc_state_section}

To target the first excited state of the system, one can use a user-defined operator ($\hat{O}= |g\rangle \langle g|, \omega=0$) where $|g\rangle$ is the ground state of the system obtained by training the network in a previous computation with $\lambda=0$ in Eq. \ref{cost_fn}. In essence, we require the neural-network to return a state-vector in the null space of operator $|g\rangle \langle g|$. Since the null-space is $d-1$ dimensional, the minimum energy criterion as enforced by the 1st term in Eq. \ref{cost_fn} guarantees the first excited state. This method using the penalty program in Eq. \ref{cost_fn} is formally equivalent to deflation technique if one recognizes the idempotency of $\hat{O}= |g\rangle \langle g|$. Deflation has been the cornerstone of many classical algorithms in the past for obtaining excited states\cite{Miranda-Quintana2013, Shayan_2018} and even a quantum algorithm as well with UCC-VQE \cite{Higgott2019}. But the formal reduction of our penalty procedure to deflation in Eq. \ref{cost_fn} based on Theorem \ref{main_thm} offers a slightly different perspective. Moreover as we shall see shortly, the penalty program in Eq. \ref{cost_fn} is more general and can be used to sieve any state based on arbitrary operator $\hat{O}$. For higher excited states (say the $t$ th) one can add similar terms to Eq. \ref{cost_fn} with the set $\hat{\{O_i\}}_{i=i}^{t-1}$ which forms a set of commuting operators with progressively refined null-space. For the choice of the penalty parameter $\lambda$ in Eq. \ref{cost_fn}, one can choose any number greater than the spectral range of the Hamiltonian $\hat{H}$ as that would always work. The spectral range can be computed from the knowledge of the ground state and $||\hat{H}||_2$.

\subsection{Filter for arbitrary states using symmetry operators}

Eq. \ref{cost_fn} can be used to solve a more general problem with any symmetry operator of the system $\hat{O}$ (by definition such operators satisfy $[\hat{O}, \hat{H}]=0$ and hence share the same eigenspace). The corresponding user-desired eigenvalue $\omega$ labels the symmetry sector (set $S$ in Eq. \ref{Const_main}). Unlike in the previous case in section \ref{exc_state_section}, usual symmetry operators need not satisfy idempotency and hence relaxation to deflation is impossible. To demonstrate our point, here we shall use $\hat{O}=L^2$ where $L^2$ is the squared-orbital angular momentum operator, a symmetry for 2D materials. $\omega$ would be set to the desired eigenvalue of $L^2$ . We shall see that the network will always learn the lowest energy eigenstate correctly despite multiple-fold degeneracy. To sieve other states from the entire degenerate subspace one can use a combination filter of $\hat{O_1}=L^2$ and $\hat{O_2}=|v\rangle\langle v|$ where $|v\rangle$ is the lowest energy state in the symmetry subspace obtained from the RBM. The penalty parameter $\lambda$ can be chosen using the prescription in \cite{Kuroiwa2021}.

\section{Algorithm} \label{Sec:Algo}

\begin{figure}[!htb]
    \centering
    \includegraphics[width=1.05\textwidth]{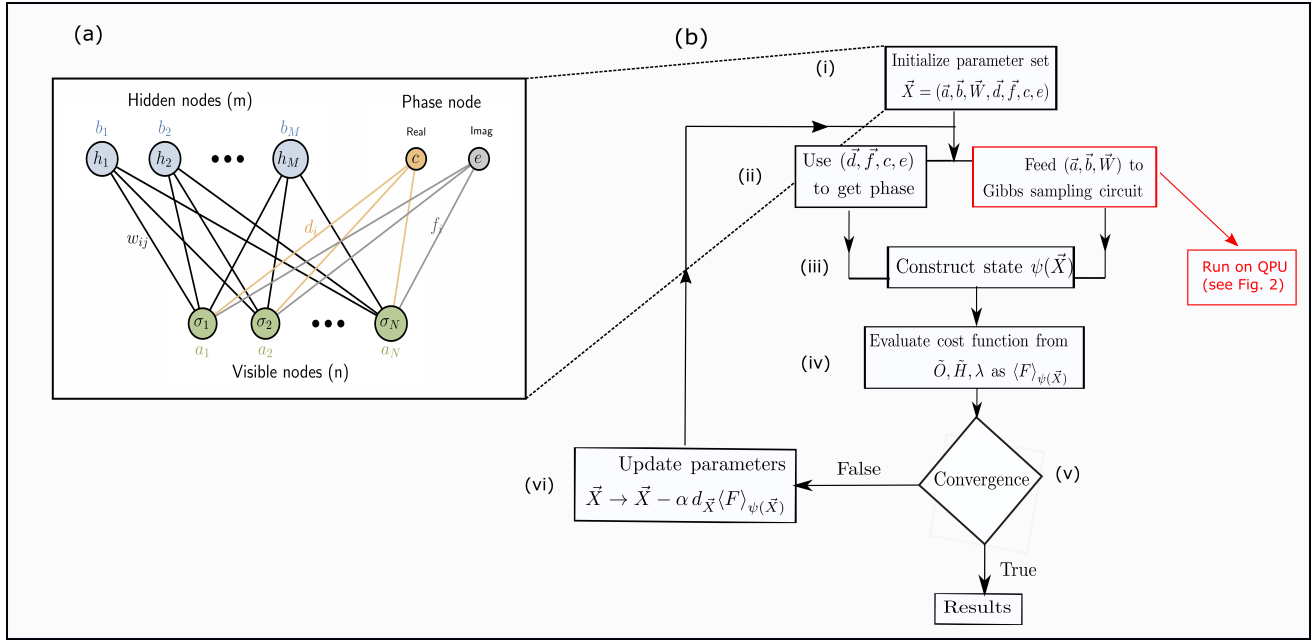}
    \caption{{\color{black}(a) The RBM architecture used in this work. The visible node contains $n$ neurons (green), the hidden node has $m$ neurons (blue) and the phase node contains 2 neuron, one to model the real part(orange) of the phase of the wavefunction and the other to model the imaginary part (grey). The weights and biases of the respective units are displayed. The RBM ansatz for the required state is defined from the Boltzmann distribution over the state-space of the visible-hidden units (b) The QML algorithm used to perform the variance penalized optimization. The part of step (ii) marked within the red box is performed on a quantum processor (QPU). All other steps are performed on a classical computer. Each step is marked with a roman numeral. We follow each of these roman numerals for discussing the algorithm in section \ref{Algo_out}}}
    \label{Fig:RBM_algo}
\end{figure}

\subsection{The Model} \label{The_Model}

{\color{black}
In the early 1980s, Hopfield networks \cite{Hopfield2554} defined a probability distribution over a set of random variables which is encoded within the nodes of a unidirected graph using the physical notion of energy of interaction between the nodes. Boltzmann machines (BM) are extensions of such a network that categorizes the node space into visible/physical layer and hidden/latent layers maintaining all to all connectivity\cite{ACKLEY1985147}. Restricted Boltzmann Machine (RBM) \cite{10.1162/089976602760128018, Fischer2014, Hinton504, Mehta2019, RevModPhys.91.045002, Melko2019a} is a practically useful sub-category of BM which permits interaction only between the visible layer and hidden layer.
The energy function
used in the RBM model is thus that of a partially connected classical Ising network and the ansatz for the probability distribution is the corresponding thermal distribution. The ansatz is optimized to mimic the underlying probability distribution of the given data using free parameters called $\textit{weights}$ and $\textit{biases}$\cite{Hinton504, Fischer2014,Mehta2019, Roux_RBM, doi:10.1080/00018732.2017.1341604,Graph_reg_2021,Graph_reg_2018}. The goal of this paper is to use the RBM distribution to encode the amplitude field of an arbitrary quantum state $|\psi\rangle$ which is a solution to Eq. \ref{Const_main}. Such neural-network quantum states (NQS) have been successfully employed in a variety of problems recently \cite{carleo2017solving, RevModPhys.91.045002, Melko2019a, PhysRevB.97.035116} by training the \textit{weights} and $\textit{biases}$ using a classical computer. Herein we shall train the network by constructing the RBM distribution using a quantum circuit and discuss the quantum advantages.

The RBM network we use in this report consists specifically of three layers each having multiple neurons. The schematic of the network architecture is presented in Fig.
\ref{Fig:RBM_algo}(a). The first layer is the visible node consisting of $n$ neurons, the second layer is the hidden node consisting of $m$ neurons and the last layer is a phase node consisting of two neurons. While the $n$ neurons are responsible for encoding the actual state, the purpose of the hidden neurons $m$ is to add more controllable parameters to make the joint probability distribution (to be defined in Eq. \ref{rbm_dist} soon) more expressive and induce higher order correlation among $n$ neurons \cite{Melko2019a}. Variables encoded by the visible node neurons (henceforth denoted by $\{\sigma_i\}_{i=1}^n$) and those by the hidden node neurons (henceforth denoted by $\{h_j\}_{j=1}^m$) are both binary random variables as $\sigma_i$ and $h_j$ $\in \{1, -1\}$.
As depicted in Fig. \ref{Fig:RBM_algo} (a), the bias vector of the visible neurons is denoted as $\Vec{a} \in \mathbb{R^{\rm{n}}}$, bias vector of hidden neurons is denoted as $\Vec{b} \in \mathbb{R^{\rm{m}}}$, the interconnecting weights of the visible and hidden neurons are denoted as $\Vec{W} \in \mathbb{R^{\rm{n} \times \rm{m}}}$. The joint RBM distribution\cite{Fischer2014, Melko2019a, Mehta2019, Hinton504,RevModPhys.91.045002} $P(\Vec{a}, \Vec{b}, \Vec{W}, \vec{\sigma},\vec{h})$ defined over the variables ($\vec{\sigma}, \vec{h}$) is 
\begin{align}
P(\Vec{a}, \Vec{b}, \Vec{W}, \vec{\sigma},\vec{h}) &= \frac{e^{\sum_{i}a_i\sigma_i + \sum_{j}b_j h_j + \sum_{ij}w_{ij}\sigma_i h_j}}
{\sum_{\{\sigma h\}}e^{\sum_{i}a_i\sigma_i + \sum_{j}b_j h_j + \sum_{ij}w_{ij}\sigma_i h_j}} \label{rbm_dist}
\end{align}

For an electronic Hamiltonian with $r$ spin-orbitals and $N$ electrons, a naive Jordan Wigner mapping (JW) \cite{JW_vs_BK} would make $n=r$ or (or $n \approx O(log_2 (r))$ for Bravi-Kitaev mapping \cite{JW_vs_BK}). However, it is well understood now that qubit requirements can be tapered by using additional symmetries like $Z_2$ \cite{Setia_Z2}. Chemically inspired process of reducing qubit cost  like using an active space \cite{PhysRevX.10.011004} (wherein number of physical qubits required is still $r$ but logical qubits required are much less as some qubits have frozen occupation/eigenvalue with $Z$ operator) or using point-group symmetry or angular-momentum symmetry of the required state\cite{Setia_Z2, PhysRevResearch.3.013039} are also being recently employed. Thus a direct relationship of $n$ with $r$ would depend on the specification of the mapping and tapering used. Whatever may be the method, if the final $\hat{H}$ matrix is $\mathbb{C}^{d\times d}$ (as used in Section \ref{Sec:Theory}) then it is safe to say that in our model $n=\rm{log}_2(d)$. The number of hidden units $m$ in our model is user-defined (for almost all data in this manuscript we have used $n=m$) but the hidden node density $\alpha = \frac{m}{n}$ can be tuned to enhance the final accuracy desired. We shall return to this point later. Neurons in the phase node are always 2 in number.

The purpose of the neurons in the phase node is to account for complex values and capture the phase of the wavefunction \cite{kanno2019manybody} unlike in conventional two-layer RBM networks\cite{torlai2018neural} which faithfully recovers only the amplitude.
As shown in Fig.\ref{Fig:RBM_algo}(a), for the phase node, the biases are denoted by $\{c, e\} \in \mathbb{R^{\rm{2}}}$ where $c$ is the bias for the neuron capturing the real part of the phase and $e$ is the bias for the neuron encoding the corresponding imaginary part. The phase node shares interconnections with the visible node only and is defined by $\Vec{d} \in \mathbb{R^{\rm{n}}}$ for the real part of the phase and $\Vec{f} \in \mathbb{R^{\rm{n}}}$ for the associated imaginary part. The corresponding phase function for the quantum state $|\psi\rangle$ defined using these nodes is

\begin{align}
s(\Vec{d}, \Vec{f}, c, e, \vec{\sigma}) &= \tanh\left[(c + \sum_{i}d_i\sigma_i) + i(e + \sum_{i}f_i\sigma_i)\right]  \label{phase_encod}
\end{align}

Together the set $\Vec{X} = (\Vec{a}, \Vec{b}, \Vec{W}, \Vec{d}, \Vec{f}, c, e)$ thus defines the complete set of trainable parameters of the model which the network shall learn iteratively to mimic the coefficients of the quantum state $|\psi\rangle$ in the chosen basis. We shall discuss the algorithm to do this in the next section.

}

\subsection{Outline of the Method} \label{Algo_out}
The entire algorithm is schematically depicted in Fig. \ref{Fig:RBM_algo}(b). It goes as follows. 

\begin{enumerate}[(i)]
\item {\color{black}The}  first step is to initialize the parameters in the parameter vector $\Vec{X} = (\Vec{a}, \Vec{b}, \Vec{W}, \Vec{d}, \Vec{f}, c, e)$ on a classical computer. All parameters are randomly initialized in the parameter range [-0.02, 0.02] to avoid the vanishing gradient of the activation function for the phase node \cite{Xia2018}. Sometimes if random initialization returns a poorly converged result, we use the initial parameter set of a converged point in a similar problem as the starting guess, a process known as warm optimization. 

\item {\color{black} In} the second step the set $(\Vec{a}, \Vec{b}, \Vec{W})$ is fed into a quantum circuit for Gibbs sampling shown in Fig. \ref{Fig:RBM_circ}. {\color{black} This step is performed on a quantum computer}. The circuit requires $n+m$ qubits to encode the visible node and the hidden node respectively and additionally $m\times n$ ancillary qubits. The entire register is initialized to $|0\rangle$. The purpose of the circuit is to sample a bit string $(\vec{\sigma}, \vec{h}) \in \{1,-1\}^{m+n}$ from the RBM distribution $P(\Vec{a}, \Vec{b}, \Vec{W}, \vec{\sigma},\vec{h})$ defined in before in Eq. \ref{rbm_dist} \cite{Sureshbabu2021}.
In reality the circuit actually draws a sample $(\vec{\sigma}, \vec{h})$ from 
\begin{align}
     Q(\Vec{a}, \Vec{b}, \Vec{W}, \vec{\sigma},\vec{h}) &= \frac{e^{\frac{1}{k}(\sum_{i}a_i\sigma_i + \sum_{j}b_j h_j + \sum_{ij}w_{ij}\sigma_i h_j)}}{\sum_{\{\sigma h\}}e^{\frac{1}{k}(\sum_{i}a_i\sigma_i + \sum_{j}b_j h_j + \sum_{ij}w_{ij}\sigma_i h_j)}}
 \end{align}
and then reconstruct $P(\Vec{a}, \Vec{b}, \Vec{W}, \vec{\sigma},\vec{h}) \propto Q(\Vec{a}, \Vec{b}, \Vec{W}, \vec{\sigma},\vec{h})^k$. The real-valued parameter $k$ will be discussed shortly.
\begin{figure}[!htb]
    \centering
    \includegraphics[width=0.70\textwidth]{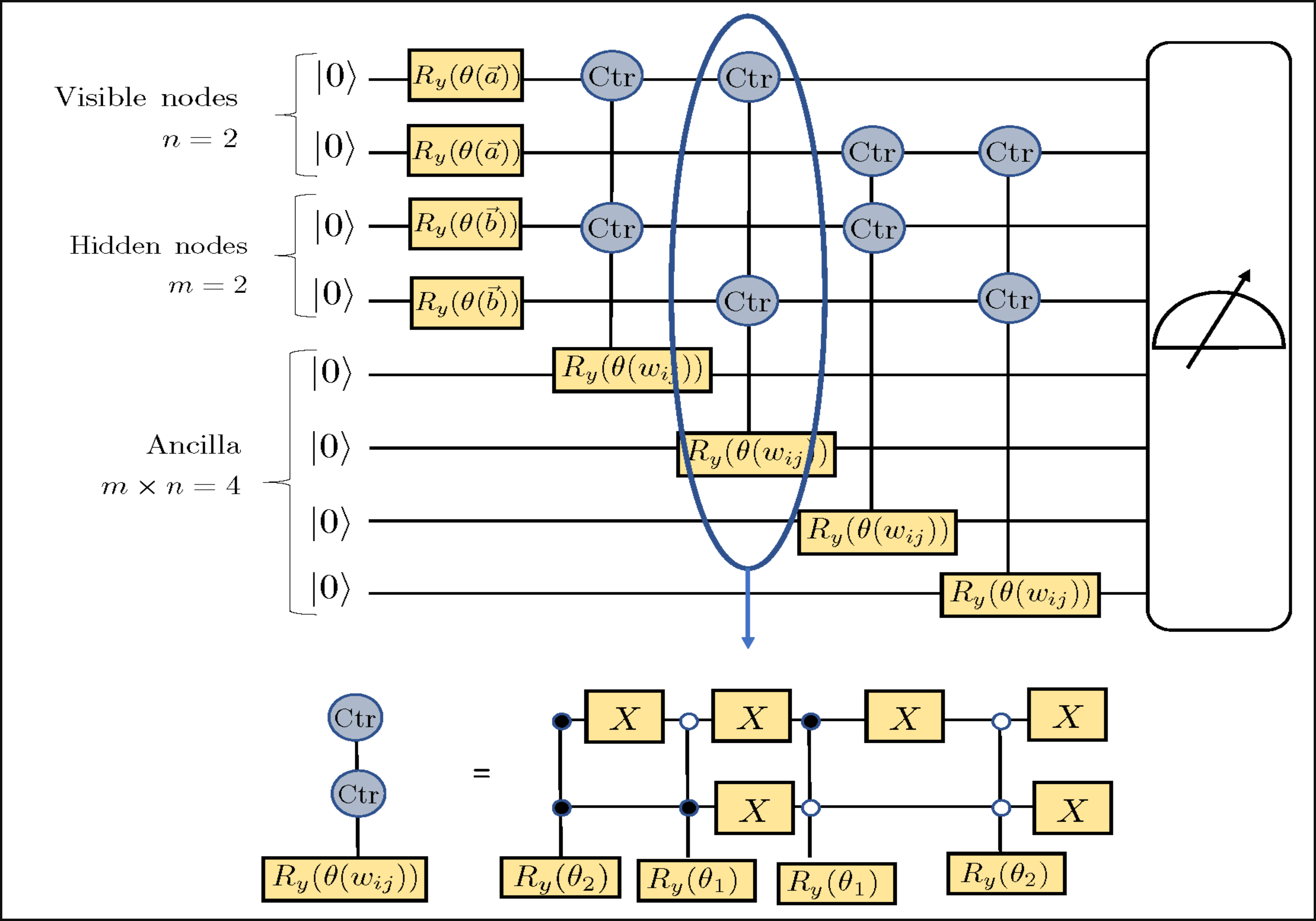}
    \caption{{\color{black} The Gibbs sampling quantum circuit used to create the Boltzmann distribution in Fig. \ref{Fig:RBM_algo}(b) (highlighted within the red box in Fig. \ref{Fig:RBM_algo}(b) step(ii)) for the case of $n=m=2$. The circuit contains single-qubit $R_y$ gates parameterized by biases ($\vec{a}$, $\vec{b}$) of hidden and visible neurons and $C-C-R_y$ gates parameterized by weights $\vec{W}$ between the hidden and visible neurons. Each $C-C-Ry$ gate is conditioned to rotate by different angles $\theta_1$ and $\theta_2$ for different choices of configurations of the control qubits. This can be implemented by use of $X$ gates as illustrated at the bottom. The open circles show a node in state $|0\rangle$ and the closed circles show a node in state $|1\rangle$. At the end of the circuit all qubits are measured and configurations wherein the ancilla qubits are all in state $|1\rangle$ are post-selected (see text for details). For $(n+m)$ visible and hidden neurons, there will be $(n+m)$ visible and hidden qubits and also $(n+m)$ single $R_y$ gates as there are that many biases. However since the $C-C-R_y$ gates are always controlled by 1 visible and 1 hidden qubit, there will be $m\times n$ such possibilities each of which targets one ancilla thereby making the size of the ancilla register $m \times n$. Thus there will be $O(n\times m)$ gates and number of qubits in the circuit. We discuss this further in section \ref{res_req}.}}
    \label{Fig:RBM_circ}
\end{figure}


The state of the visible node qubits and hidden node qubits are denoted henceforth as $|\sigma_i\rangle$ and $|h_j\rangle$ respectively. Note when $\sigma_i\:\: (\rm{or}\:\: h_j) = -1$,  $|\sigma_i\rangle\:\: (\rm{or}\:\: |h_j\rangle) = |0\rangle$ and $|1\rangle$ otherwise. In the circuit shown in Fig. \ref{Fig:RBM_circ} the single-qubit $R_y$ gates acting only on the visible and hidden units have rotation angles parameterized by $(\Vec{a}, \Vec{b})$ and are responsible for creating the non-interacting part of the distribution in $Q(\Vec{a}, \Vec{b}, \Vec{W},\vec{\sigma},\vec{h})$ while the interaction terms $\{\sum_{i,j} w_{ij}\sigma_ih_j\}$ are turned on through using $C-C-R_y$ gates acting on ancilla register as the target. The rotation angles of these doubly-controlled $R_y$ gates are parameterized by $\Vec{W}$ and are different for different configurations of the control qubits (always 1 hidden and 1 visible). Various such choices can be realized by using $X$ gates as shown in Fig. \ref{Fig:RBM_circ}. After all such operations, we measure all the $(m+n+m\times n)$ qubits and post-select the results wherein the ancilla qubits have collapsed to state $|1111...1_{mn}\rangle$ only. We show that the probability of such a successful event {\color{black}has} a generic lower bound determinable in terms of the parameters of the network  $(\Vec{a}, \Vec{b}, \Vec{W})$ (for details of the derivation of the generic bound refer to Section 2 of Supplementary Information). This master lower bound generalizes the previously noted one\cite{Xia2018} as a special case. The role of the real-valued parameter $k$ kicks in here. It serves as a regulator and is chosen in simulation to make the aforesaid lower bound a constant value (see Section 2 in Supplementary Information). After the post-selection, the corresponding states of the visible and hidden units are equivalent to all possible bit strings sampled from the distribution $Q(\Vec{a}, \Vec{b}, \Vec{W}, \vec{\sigma},\vec{h})$ from which the desired distribution $P(\Vec{a}, \Vec{b}, \Vec{W}, \vec{\sigma},\vec{h})$ is constructed.
{\color{black} The primary quantum advantage in our algorithm comes at this step where the full RBM distribution is constructed. Indeed we shall elaborate in Section \ref{res_req}, that there exist no polynomial-time classical algorithms for the construction of full RBM distribution. In our case, we can access the full distribution using quadratic resources by leveraging a quantum computer. The physical reason for this advantage is rooted in quantum parallelism which before a projective measurement allows the general state of the $(m+n+m\times n)$ qubits to be a superposition of all possible bit-strings with the coefficients sampled from the full RBM distribution. Many such measurements are necessary to construct the RBM distribution encoding the target state as post-measurement we can retrieve only one such bit-string. As explained above, the $k$ parameter in our model is useful here as it can be adaptively chosen by the user to control the measurement statistics (see Section 2 in Supplementary Information). Besides, for all systems primarily treated in this manuscript, we shall show that the chances of the ancilla register collapsing in the favorable state are naturally high even for modest values of the $k$ parameter. (see Section 5 in Supplementary Information) } 
With $P(\Vec{a}, \Vec{b}, \Vec{W}, \vec{\sigma},\vec{h})$ constructed, one can now compute the marginal distribution over the state space of the visible units only as $\Tilde{p}(\Vec{a}, \Vec{b}, \Vec{W}, \vec{\sigma})$ where $\Tilde{p}(\Vec{a}, \Vec{b}, \Vec{W}, \vec{\sigma}) = \sum_{h} P(\Vec{a}, \Vec{b}, \Vec{W}, \vec{\sigma},\vec{h})$. Now $\sqrt{\Tilde{p}(\Vec{a}, \Vec{b}, \Vec{W}, \vec{\sigma})}$ defines the amplitude of wavefunction over basis states of the visible units i.e. $|\sigma_1\sigma_2....\sigma_n\rangle$. The phase of each component of the wavefunction is now constructed classically using $(\Vec{d}, \Vec{f}, c, e)$  and $tanh$ activation of neurons in the phase node as defined before in Eq. \ref{phase_encod}

\item  {\color{black} With} the two information from step (ii), the target wavefunction can now be constructed classically as 
\begin{align}
    \psi(\Vec{X}) = \sum_{\sigma}  \sqrt{\Tilde{p}(\Vec{a}, \Vec{b}, \Vec{W}, \vec{\sigma})}s(\Vec{d}, \Vec{f}, c, e, \vec{\sigma})|\sigma_1\sigma_2...\sigma_n\rangle  \label{wavefn}
\end{align}

\item  {\color{black} With} the wavefunction, the cost function in Eq.
\ref{cost_fn} can now be constructed classically with the $(\hat{H}, \hat{O}, \lambda)$ from the user where $\hat{H}$ and $\hat{O}$ are the Hamiltonian and filter operator for the system being investigated respectively and $\lambda$ is the penalty parameter.

\item  {\color{black} The} next step is to check for convergence criterion or maximum number of iterations (to be discussed later). If either of the criterion is satisfied, results are printed

\item{\color{black} If} either of the criterion from the previous step is not satisfied then the parameter set $\Vec{X}$ is updated using steepest - descent algorithm with a learning rate (set to 0.005 in all our calculations). The updated parameter vector $\Vec{X}$ is fed into step (ii) for the next iteration of the algorithm. We have also used {\color{black} the} ADAM {\color{black}optimizer}\cite{Kingma2015} but there is no significant change in convergence for the systems treated in this report.
{\color{black} It must be emphasized that unlike in classical supervised deep learning models, the learning of our network does not require prior training against a pre-assigned labeled data-set. The network learns the target eigenstate directly through minimization of the cost function (see Eq. \ref{cost_fn}) using the optimizer of choice (gradient descent in this case).}
\end{enumerate}

\subsection{Resource Requirements}\label{res_req}

{\color{black}The power of an RBM ansatz even though underutilized in material science is beginning to gain attention in many areas of fermionic and bosonic physics \cite{Melko2019a, PhysRevB.97.035116, Batista_RBM}}. Using $n$ visible neurons and $m$ hidden neurons, a recent study \cite{PhysRevB.101.195141} {\color{black}has} shown explicitly how a shallow RBM ansatz ($\alpha = \frac{m}{n} =1$) like ours already captures several orders of perturbation theory and is a good approximant to the exact state. Classically, constructing such a full RBM distribution will require tracking  amplitudes from a $2^{m+n}$ dimensional state space and hence has exponential resource requirements in preparation. Ref\cite{Long2010} formalizes and consolidates this statement by proving that a polynomial-time algorithm for classically simulating or constructing a full RBM distribution is not only absent now but is unlikely to exist even in the future as long as the polynomial hierarchy remains uncollapsed. However, such analysis does not preclude the existence of efficient quantum algorithms such as the one considered in this work. The quantum circuit in our algorithm (see Fig. \ref{Fig:RBM_circ}) {\color{black} uses} $m+n+m \times n$ qubits only for constructing the state indicating an $O(m\times n)$ scaling in qubit resource {\color{black} which if expressed in terms of hidden node density $\alpha = \frac{m}{n}$ is $O(\alpha n^2)$}  The gate-set comprising single-qubit $R_y$ gates scales as $m+n$ too, {\color{black} one for each of the bias terms $(\vec{a}, \vec{b})$} of the visible and hidden node qubits. {\color{black} Each $C-C-R_y$ gate in the circuit mediate a single interaction term within the $\Vec{W}$ matrix between a spin of the visible layer $\sigma_i$ and a spin of the hidden layer $h_j$. Since there are $m\times n$ such terms, the number of $C-C-R_y$ gates are $m\times n$ too, with the targets being each qubit in the ancilla register}. Toggling between the various configurations of the control qubits (1 visible + 1 hidden) would require 6 $R_x$ gates additionally in each $C-C-R_y$ (see Fig. \ref{Fig:RBM_circ}) and hence the total number of such $R_x$ gates is $6mn$. This indicates the total gate requirements of our sampling circuit is also $O(m\times n)$ {\color{black} which is equivalent to $O(\alpha n^2)$}. {\color{black} The number of variational parameters in our algorithm for amplitude encoding using RBM is $m+n$ for the biases of the two nodes and $m \times n$ for the $\Vec{W}$ matrix. For the phase encoding, the variables are two $n$-dimensional vectors $(\Vec{d}, \Vec{f})$ and two scalars $(c,e)$. Thus the total number of variational parameters is $m \times n+m+3n+2 = \alpha n^2+\alpha n+3n+2$ which is also quadratic. The upshot is then, our algorithm for an RBM ansatz uses $O(\alpha n^2)$ qubits (circuit width), gate-set (circuit depth), and variational parameters to encode any arbitrary quantum state of $n$ qubits 
in a $d = 2^n$-dimensional Hilbert space. Removing redundancy in global phase and normalization, a general such state would require $2(2^{n}-1)$ parameters. One must know, in the RBM construction circuit no specific structure or sparsity has been assumed in the $\Vec{W}$ matrix which if present may lower the requirements further}. Quantum advantages have also been observed in supervised learning using {\color{black} the} RBM distribution \cite{wiebe2014quantum}. The study indicated that for the data-set of size $N$, a quantum circuit with amplitude amplification reduces the complexity of the algorithm from the conventional $O(N)$ to $O(\sqrt{N})$, a quadratic {\color{black}boost}. {\color{black} It must also be emphasized that all the results in this manuscript are primarily treated for the case of $\alpha=1$ as that suffices for the description of the system we study. We show how the results change for changing hidden node density $\alpha$ in Section 6 of the Supplementary Information. Even though $\alpha=1$ is good for systems in this report, for the case where the state is highly entangled, the user may be required to enhance the hidden node density as that increases the number of variational parameters and make the ansatz more expressive \cite{Melko2019a}. That may also be the case for molecular systems under geometric distortion wherein multi-reference correlation is important (we explore this point briefly in Section 10 of Supplementary Information). In this work all our results are compared against exact diagonalization as it affords the best accuracy in a given basis. The exact diagonalization results are obtained using `Numpy' package\cite{Harris2020} in python 3.0 with LAPACK routine.} 


\subsection{Implementation Methods}
We implement the algorithm in three flavors of computation. The first flavor henceforth designated as \textbf{`RBM-cl'} involves implementing the entire gate set of the Gibbs sampling circuit on a classical computer. This computation returns to us the exact state after the termination of the circuit. The second flavor is henceforth designated as \textbf{`RBM-qasm'}. This has been implemented by simulating the Gibbs sampling circuit using Qiskit which stands for IBM's Quantum Information Software Kit (Qiskit)~\cite{aleksandrowicz2019qiskit}. We specifically used the {\it qasm\_simulator} at Aer provider (hence the name \textbf{RBM-qasm}) which is a quantum computer simulator  
and hence can mimic calculations performed on a noisy-intermediate scale quantum computing device even using a classical computer with options to incorporate customizable noise models. Unlike in \textbf{`RBM-cl'} where the exact state is returned, in \textbf{`RBM-qasm'}, the Gibbs sampling circuit in Fig. \ref{Fig:RBM_circ} is interrogated multiple times to build measurement statistics. From the observed bit-strings, the measurement probabilities $P(\vec{a}, \vec{b}, \vec{W} ,\vec{\sigma},\vec{h})$ {\color{black} are} computed and hence the results are subjected to statistical fluctuations due to finite sampling errors. No noise model was used during the simulation in \textbf{`RBM-qasm'}. Finally to see the effect of noise we also investigated the performance of our algorithm on real IBM-Q quantum computers using the Qiskit interface. We used IBM-Q Sydney\cite{IBM_Sydney} and IBM-Q Toronto\cite{IBM_Toronto} interchangeably both of which are 27 qubit machines and hence suitable for our case studies. Calculations of this flavor are henceforth referred to as \textbf{`RBM-IBMQ'}. To reduce the effect of noise on the sampling probabilities we {\color{black} employ Measurement Error Mitigation (MEM) \cite{Barron2020}} directly implementable on Qiskit. {\color{black} We show in this report that MEM alone guarantees smooth and clear self-convergence in training (see Section 4 in Supplementary Information). The final accuracy of the results is affected by both MEM and warm-starting. We have seen without warm-starting convergence can not only be slow but sometimes the network can even be trapped in a local minima. It is in general difficult to assess apriori when the need for warm-starting can arise without a knowledge of the optimization surface as the objective function being optimized for the amplitude and the phase are non-convex in the arguments (see Eq.\ref{rbm_dist} and Eq.\ref{phase_encod}). It has been noted that the algorithm converges better without the need for warm-starting near optima (symmetry points for the system being treated in this report as discussed later).} For the \textbf{`RBM-qasm'} and \textbf{`RBM-cl'} simulations, the maximum number of iterations within which well-converged results to be discussed below were obtained is $\le$ 30,000 either with a warm-start or randomly initialized parameter set depending on the case. {\color{black} The \textbf{`RBM-IBMQ'} simulations were performed by breaking into two sessions/runs with the maximum iteration $\le$ 700 for each session to reduce the job queue . Normally most calculations converged well before 700 iterations were reached within the first run as warm-starting and MEM has been used as described above. For the few that did not, the final parameter set of the first run is punched for initializing the second session to ensure one continuous run.}
{\color{black} It must be emphasized that the entire code-base for training the network is home-built in Python 3.0 using standard packages like Numpy \cite{Harris2020}. As mentioned before, we have extensively used Qiskit though as an interface to communicate with the IBMQ hardware and with {\it qasm\_simulator.}}


\section{Results and Discussion}
\label{Sec:results}

As a test of our method, we target state filtration of energy eigenstates of two well-established transition-metal dichalcogenides (TMDCs) - monolayer Molybdenum di-Sulfide ($\rm{MoS}_2$) and monolayer Tungsten di-Sulfide ($\rm{WS}_2$). Monolayer TMDCs have so far eluded attention in quantum simulations even though it is imperative to study their electronic structures to understand novel properties\cite{nano8070463, Manzeli2017} like high carrier mobility, high photoluminescence due to the direct band-gap, lack of inversion symmetry leading to large spin-orbit coupling and intra-valley transport etc. Indeed such features have made them  attractive candidates 
for applications in Field-Effect Transistors\cite{doi:10.1063/1.4789365}, supercapacitors\cite{choudhary_patel_park_sirota_choi_2016}, spintronics\cite{Ahn2020}, opto-electronics\cite{PhysRevLett.105.136805, https://doi.org/10.1002/advs.201700231}, valleytronics\cite{Mak2018}. We first show how the entire conduction band (CB) in such materials can be simulated using an appropriate choice of operator $\hat{O}$ as the ground state projector as discussed before and then later show how to `sieve' eigenstates based on angular momentum symmetry. In all cases, we implement our algorithm on three flavors of RBM calculations - RBM-cl, RBM-qasm, RBM-IBMQ as discussed.

\subsection{Filter for target excited states - Simulation of low energy bands in $\rm{MoS}_2$ and $\rm{WS}_2$ and effect of Spin-Orbit Coupling}\label{exc_state}

\begin{figure}[!htb]
    \centering
    \includegraphics[width=0.90\textwidth]{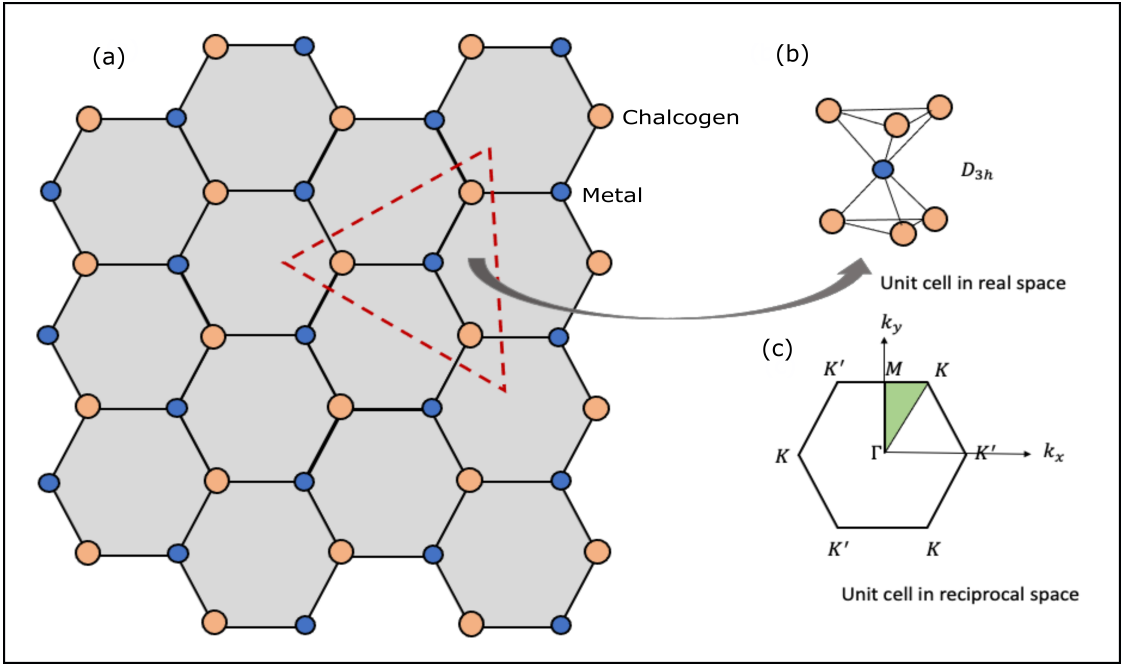}
    \caption{{\color{black} (a) The top view of the TMDC monolayer as studied in this report. The orange atoms are a chalcogen whereas the blue atoms are the metal centre. (b) The real-space trigonal prismatic unit cell highlighting $D_{3h}$ symmetry. This shows that in the TMDC monolayer unlike in graphene, the constituent atoms have a non-coplanar arrangement. (c) The unit cell in reciprocal space showing the important symmetry points $(\Gamma, K, M, K^\prime)$. We shall investigate the energy and other properties within the sector marked in green following the usual $\Gamma-K-M-\Gamma$ path as in \cite{Liu2013}. The co-ordinates of the symmetry points as $(k_x, k_y)$ are : $\Gamma = (0,0), K =(\frac{4\pi}{3a_0}, 0), M = (\frac{\pi}{a_0}, \frac{\pi}{\sqrt{3}a_0})$ where $a_0$ is the metal-chalcogen bond length. For systems studied in this report the metal centre is $\rm{Mo}, \rm{W}$ and the chalcogen is $\rm{S}$}}
    \label{Fig:TMDC_pic}
\end{figure}

The geometrical structure of monolayer TMDCs like $\rm{MoS}_2$ or $\rm{WS}_2$ indicates the presence of a trigonal prismatic real space unit cell \cite{Manzeli2017} with $D_{3h}$ point group symmetry {\color{black} as shown in Fig. \ref{Fig:TMDC_pic}}. The transition metal is at the centre and the sulfur atoms are at the six corners of the triangular prism ({\color{black}see Fig.
\ref{Fig:TMDC_pic}(b)}).  Consequently, the best orbital decomposition to evaluate the band structure of this periodic material should involve not only the $s, p, d$-orbitals of the central metal atom but also of the surrounding sulfur atoms. Indeed several reports exist which treats the electronic structure of such materials using a tight-binding description obtainable from a 5 band, 7-band or an 11-band model using varying degree of inclusion of the orbital set of the metal and the chalcogen\cite{Ridolfi_2015,doi:10.1063/1.4804936, SHAHRIARI2018169,PhysRevB.92.205108}. However, recently a 3-band parameterization has been demonstrated to yield remarkable accuracy in energy over the entire Brillouin zone\cite{Liu2013}. A tight binding Hamiltonian in this description is obtained by fitting the energy curves against DFT calculations (with GGA and LDA functionals) employing  the $d_{z^2}$, $d_{xy}$ and $d_{x^2 -y^2}$ orbitals of the metal centre\cite{Liu2013} only. This choice is based on the fact that for trigonal prismatic coordination, the d-orbital set of the metal splits into three groups-$A_1^\prime$ containing $d_{z^2}$ orbital only, $E^\prime$ containing $d_{xy}, d_{x^2 -y^2}$ and $E^{\prime\prime}$ containing $d_{xz}, d_{yz}$ orbitals. However, reflection symmetry of $D_{3h}$ restricts inter-coupling between the orbitals of $E^{\prime\prime}$ set with the remaining two groups. Indeed $E^{\prime\prime}$ contributes exclusively to higher energy bands and has no role to play in the low-energy physics of the valence and conduction band which is considered in this work. The absence of chalcogen $p$-orbitals is definitely an approximation albeit a good one as seen from Ref \cite{Liu2013}. We shall return to this point shortly.

We use a tight-binding model comprising of third-nearest neighbor (TNN) metal-metal hopping\cite{Liu2013} of the aforesaid three band Hamiltonian for all our calculations henceforth. The parameters of the model are obtained from the more accurate GGA calculation set \cite{Liu2013}. Section 3 of Supplementary Information enlist details of the Hamiltonian and parameters for completeness and brevity. Our working Hamiltonian, for both the systems are thus a $3\times3$ Hermitian matrix. For {\color{black}qubitization} we convert it into a $4\times4$ Hermitian matrix by padding an additional $1\times1$ block with a diagonal entry chosen to be $\ge$ spectral range of $H_{3\times3}$ as that would keep the low-lying eigenvalue structure of the resultant matrix undisturbed for the training to successfully proceed. Thus for both the systems, our neural network comprises of  a visible node with 2 neurons to encode the state, two hidden neurons and additional 2 neurons for the phase node too. For the Gibbs sampling circuit in Fig.
\ref{Fig:RBM_circ}, we thus need 2 qubits to represent the entire visible layer and 2 qubits for the hidden layer. In addition we need 4 ancilla qubits to serve as targets for $(C-C-R_y)$ rotation {\color{black} thereby requiring 8 qubits in total}. For the circuit in Fig. \ref{Fig:RBM_circ}, we use 4 single qubit Rotation gates $(R_y)$, 4 Controlled-Controlled Rotation gates ($C-C-R_y$), and also 24 Bit-flip (X) gates.  The optimization in each case starts from a randomly initialized parameter set. In case if the accuracy is poor, we re-start the algorithm by feeding the initial parameter from the results of a nearby converged $k$-point as a warm start. We see the results are in excellent agreement with the exact diagonalization when a such a warm start is employed along with MEM as described before. For IBMQ implementation we have used `IBM-Sydney' and `IBM-Toronto' both of which are 27 qubit machines.
To reduce the operational time on the actual quantum device for job queue and isolate the effect of gate-infidelity, IBMQ simulations for each $k$-point were often warm-started with an initial parameter set obtained from the initial parameters of the {\it qasm} simulation of a nearby but non-identical $k$-point.

\begin{figure}[H]
    \centering
    \includegraphics[width=1.0\textwidth]{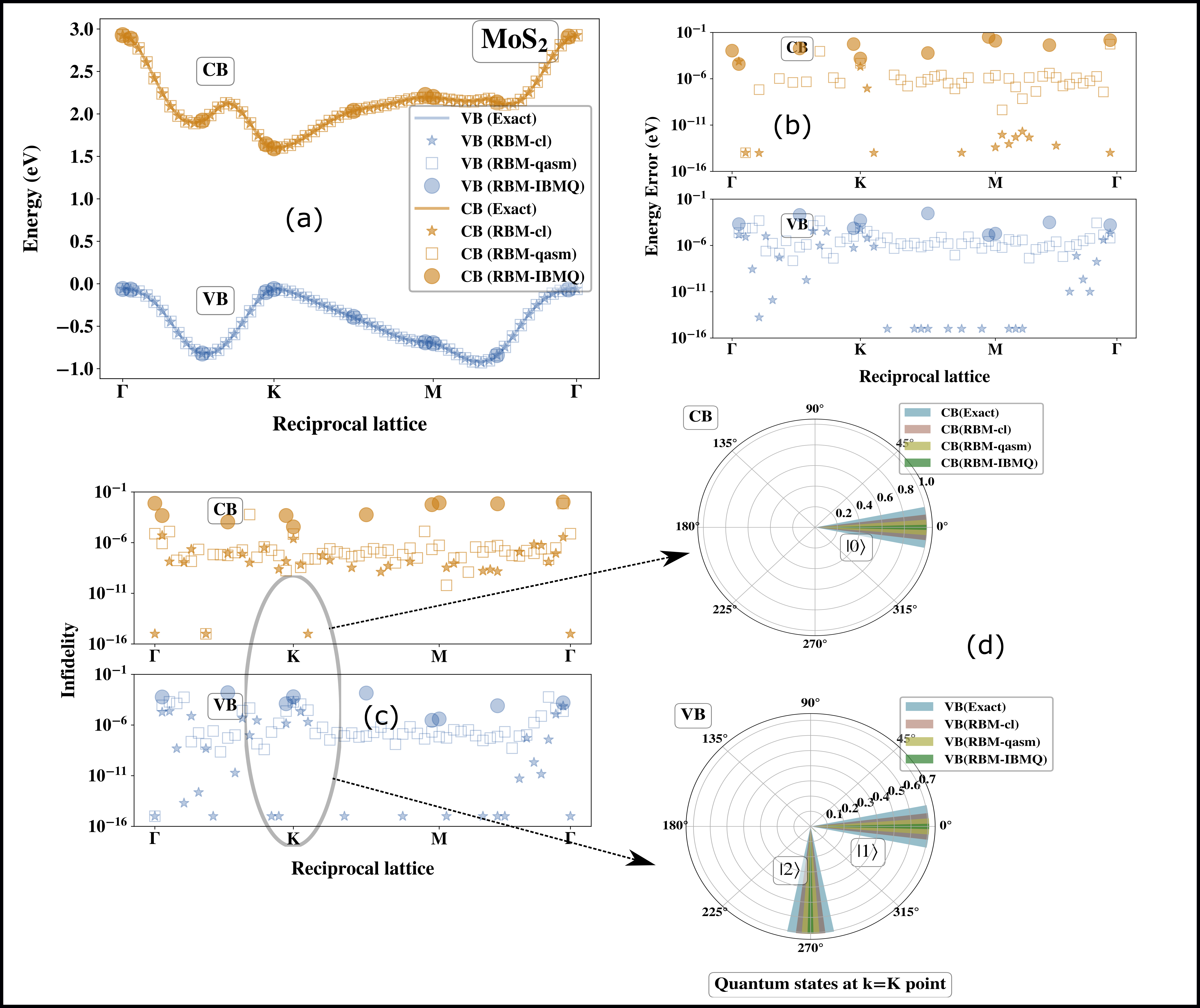}
    \caption{(a) Valence (VB) and conduction band (CB) of $\rm{MoS}_2$ calculated using all flavors of RBM and overlayed against exact diagonalization. The valence band is simulated using $\lambda=0$ in Eq. \ref{cost_fn} and the conduction band using ($O = |v_0\rangle \langle v_0|$, $\omega=0$, $\lambda=5$) in Eq. \ref{cost_fn} where $|v_0\rangle$ is the valence band state at each $k$-point. For IBMQ implementations we used `IBM-Sydney' and `IBM-Toronto'. All parameters are randomly initialized (see Fig. \ref{Fig:RBM_circ}) or warm-started with the initial guess of a converged nearby $k$-point.
    (b) The corresponding energy errors from (a) in eV. (c) The corresponding state infidelities (1-$Fid$) where $Fid=|\langle\Psi_{\rm{RBM}}|\Psi_{\rm{Exact}}\rangle|^2$ (d) The orbital decomposition of the states at $K$-point where $|0\rangle = d_{z^2}$, $|1\rangle = d_{xy}$, $|2\rangle = d_{x^2-y^2}$. The states from RBM calculations matches well with those from exact diagonalization in phase and amplitude. The width for each bar is set differently for visual clarity.}
\label{F:MoS2_exc_state}
\end{figure}

The results from the algorithm using the cost function in Eq. \ref{cost_fn} is displayed in Fig. \ref{F:MoS2_exc_state} for $\rm{MoS}_2$ and Fig. \ref{F:WS2_exc_state} for $\rm{WS}_2$. In Fig. \ref{F:MoS2_exc_state}(a) we have overlayed the energies obtained from our algorithm as a function of the wave-vector index sampled from the Brillouin zone following the usual $\Gamma-K-M-\Gamma$ path ({\color{black} see Fig. \ref{Fig:TMDC_pic}(c)}). The result for the valence band (VB) is denoted in blue and is obtained by setting $\lambda=0$ in Eq. \ref{cost_fn} which corresponds to the usual variational optimization to obtain the ground state at each k-point. The results for the conduction band (CB) are shown in orange in Fig. \ref{F:MoS2_exc_state}(a). They are thereafter computed as a separate set of calculations using $O = |v_0\rangle \langle v_0|$ and $\omega=0$ in the cost function in Eq. \ref{cost_fn} where the corresponding ground state in the valence band (VB) is denoted as $|v_0\rangle$. The penalty parameter is $\lambda=5$. The cost-function now samples a state orthogonal to ground state (null space of the projector $|v_0\rangle \langle v_0|$) for each of the k-points. The minimum energy criterion imposed by the first term in the cost function in Eq. \ref{cost_fn} guarantees obtaining the next higher excited state which happens to be the state space in the conduction band. 

We see for all flavors of our algorithm (RBM-cl, RBM-qasm, RBM-IBMQ) the simulated energy values for both the valence and the conduction band are in good agreement with the ones obtained from exact diagonalization. The corresponding errors in energy are displayed in Fig. \ref{F:MoS2_exc_state}(b) and are usually $\le$ $10^{-4}$ eV for RBM-cl and RBM-qasm which are noiseless pristine implementations but is around $10^{-2}-10^{-4}$ eV for the valence band (VB) and the conduction band for RBM-IBMQ indicating the worsening of performance due to faulty gate implementations in the Gibbs sampling circuit. Fig.
\ref{F:MoS2_exc_state}(c) plots the state infidelities i.e. 1-$Fid$ where $Fid=|\langle\Psi_{\rm{RBM}}|\Psi_{\rm{Exact}}\rangle|^2$.
We see that the infidelities are also quite small for each band with the performance worsened only in the IBMQ variant of the RBM implementation.

Like Fig.
\ref{F:MoS2_exc_state}(a), Fig.
\ref{F:WS2_exc_state}(a) displays the band structure of $\rm{WS}_2$ wherein the energies for both the valence and conduction band are overlayed against the energy values obtained from exact diagonalization.  All three flavors of RBM implementation yield reasonably accurate results as in the case for Fig.
\ref{F:MoS2_exc_state}(a). Fig.
\ref{F:WS2_exc_state}(b) and Fig.
\ref{F:WS2_exc_state}(c) display the energy error and the state infidelities of the state obtained from the RBM calculations against exact diagonalization. The error ranges in each case is similar to what has been discussed for $\rm{MoS}_2$.
\begin{figure}[H]
    \centering
    \includegraphics[width=1.0\textwidth]{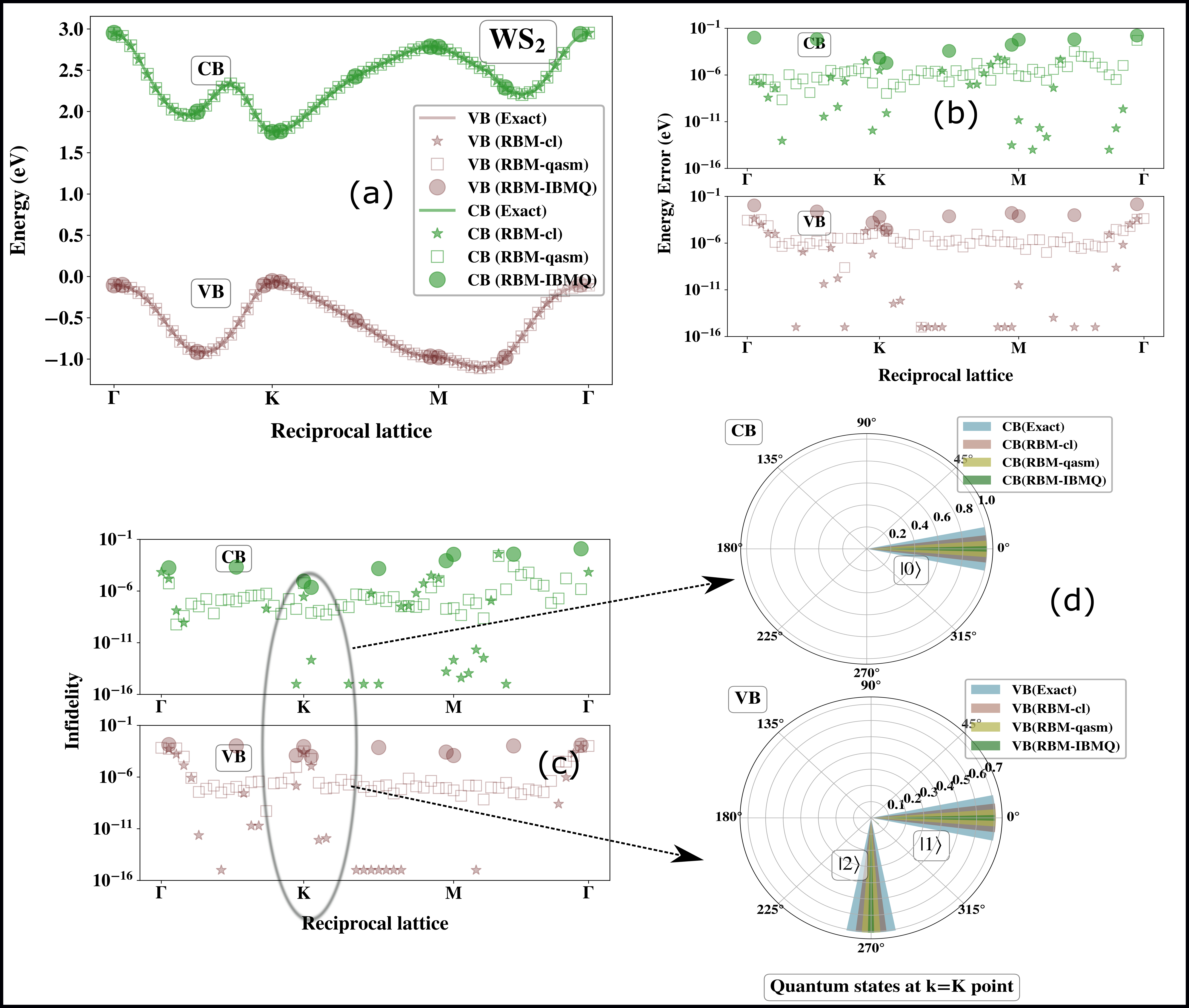}
    \caption{(a) Valence (VB) and conduction band (CB) of $\rm{WS}_2$ calculated using all flavors of RBM and overlayed against exact diagonalization. The valence band is simulated using $\lambda=0$ in Eq. \ref{cost_fn} and the conduction band using ($O = |v_0\rangle \langle v_0|$, $\omega=0$, $\lambda=5$) in Eq. \ref{cost_fn} where $|v_0\rangle$ is the valence band state at each $k$-point. For IBMQ implementations we used `IBM-Sydney' and `IBM-Toronto'. All parameters are randomly initialized (see Fig. \ref{Fig:RBM_circ}) or warm-started with the initial guess of a converged nearby $k$-point.
    (b) The corresponding energy errors from (a) in eV. (c) The corresponding state infidelities (1-$Fid$) where $Fid=|\langle\Psi_{\rm{RBM}}|\Psi_{\rm{Exact}}\rangle|^2$ (d) The orbital decomposition of the states at $K$-point where $|0\rangle = d_{z^2}$, $|1\rangle = d_{xy}$, $|2\rangle = d_{x^2-y^2}$. The states from RBM calculations matches well with those from exact diagonalization in phase and amplitude. The width for each bar is set differently for visual clarity. }
\label{F:WS2_exc_state}
\end{figure}

Fig. \ref{F:MoS2_exc_state}(d) and Fig. \ref{F:WS2_exc_state}(d) displays the orbital decomposition of the states in the conduction and valence band at the most important symmetry point i.e. the $K$- point. In our calculations qubit $|0\rangle = d_{z^2}$, $|1\rangle = d_{xy}$, $|2\rangle = d_{x^2-y^2}$ where $\{0,1,2\}$ are the integer equivalents of the two-qubit bit strings encoding the neurons of the visible node. We see from our calculations however that the exact state generated from the model lines up correctly against the RBM states in both amplitude and phase. While the state of the conduction band at $K$-point is exclusively populated by the $d_{z^2}$, that in the valence band is a superposition of $d_{xy}$ and $d_{x^2-y^2}$ with a phase shift of $3\pi/2$. This is consistent with the orbital decomposition given in Fig. 2 of Ref\cite{Liu2013} and is partly the reason given by the authors to use this three orbitals for generating the tight-binding Hamiltonian as the model yields correct state description near the band-gap. However as is clear from Fig. 2 of Ref\cite{Liu2013}, the orbital composition of the states at the $\Gamma$ and $M$-point has contribution from the p-orbitals of S and s-orbitals of both the metal and the S atoms. This makes the three-band model an approximation for the exact character of the states even though it can replicate the energy very well throughout the Brillouin zone. 

{\color{black} We further concentrate in this report on describing the low-energy physics near the $K$ or $K^\prime$ valley for which, as mentioned before, the state-description of the three-band model suffices. We construct the Hamiltonian \cite{PhysRevLett.108.196802, Kormanyos2013,sfluid_bose_gas, Ominato2020} near the $K$-valley in the basis of the states of the conduction band i.e. $|d_{z^2}\rangle$ (see Fig. \ref{F:MoS2_exc_state}(d) and Fig. \ref{F:WS2_exc_state}(d)) and that of the valence band i.e. $\frac{1}{\sqrt{2}}(|d_{x^2-y^2}\rangle +i|d_{xy}\rangle)$ (see Fig. \ref{F:MoS2_exc_state}(d) and Fig. \ref{F:WS2_exc_state}(d)). The states at the $K^\prime$ valley are related to those at the $K$ valley due to time-reversal symmetry\cite{VanDerDonck2017, sfluid_bose_gas} and hence is ignored from further discussion. The Hamiltonian is:
\begin{align}
    H &= (\frac{E_c}{2} + \frac{\lambda_c s}{2}) (\hat{I}-\hat{\sigma_z}) + (\frac{E_v}{2} + \frac{\lambda_v s}{2}) (\hat{I}+\hat{\sigma_z}) \nonumber \\
    &+ \gamma k_x \hat{\sigma_x} + \gamma k_y \hat{\sigma_y} \nonumber \\
    &+ \frac{\alpha}{2}(k_x^2 + k_y^2)(\hat{I} + \hat{\sigma_z}) +
    \frac{\beta}{2}(k_x^2 + k_y^2)(\hat{I} - \hat{\sigma_z}) \nonumber \\
    &+ \kappa_{TW} (k_x + ik_y)^2 (\hat{\sigma_x} + i\hat{\sigma_y}) + \kappa_{TW} (k_x - ik_y)^2 (\hat{\sigma_x} - i\hat{\sigma_y})
    \label{MD_Ham}
\end{align}
A effective description such as Eq. \ref{MD_Ham} is often referred in literature as the two-band $k\cdot p$ model constructed using Lowdin-Partitioning\cite{Liu2013, Kormanyos2013}.
The first two terms in Eq. \ref{MD_Ham} is the massive term required to create the band-gap($\Delta$) in the material at the $K$-point. These terms are absent in graphene. In most reports this term is written as $\frac{\Delta}{2}\sigma_z$ with a symmetrically located origin but we choose to use the $Ec$ and $Ev$ values obtained from our calculations in Fig. \ref{F:MoS2_exc_state} and Fig. \ref{F:WS2_exc_state}. The additional summands in each of the 1st two terms $(\lambda_v, \lambda_c)$ refer to band-splitting at thre always2 in numbere $K$-point due to spin-orbit coupling (SOC). In the three-band basis, SOC is entirely due to the $L_z$ operator (more on this in the next section) contribution of which in the chosen basis of can be effectively modeled as the first two terms \cite{PhysRevLett.108.196802, Ominato2020,Van_Donck_thesis}. Unlike the Bloch state in the conduction band, the valence band is exclusively dominated by metal orbitals $|d_{x^2-y^2}\rangle$ and $|d_{xy}\rangle$ with non-zero angular momentum leading to strong splitting \cite{Liu2013}. The spin-orbit splitting in the conduction band is weak\cite{PhysRevLett.108.196802, Liu2013,VanDerDonck2017} and below the resolvable limit of NISQ devices and hence has been ignored herein i.e. $\lambda_c =0$. The parameter $s \:\:\:\in \:\:\{1,-1\}$ is the spin index and labels the SOC split valence bands. The 3rd-6th term is the linear and quadratic extrapolation away from the $K$ point and yields a spherically isotropic band surface. The 7th-8th terms (parameterized by $\kappa_{TW}$) break the isotropy and lead to the well-known effect of trigonal warping (TW). The warped band surfaces in these materials are a consequence of the presence of a perpendicular $C_3$ axis due to the $D_{3h}$ symmetry of the associated real-space unit cells (see Fig. \ref{Fig:TMDC_pic}(b)). Further terms in \cite{Kormanyos2013} which removes anisotropy between valence and conduction band are ignored due to their small unresolvable contributions.

\begin{figure*}
    \centering
    \includegraphics[width=1.0\textwidth]{MoS2_SOC_full_panel.png}
    \caption{{\color{black}(a) The exact energy contours in valence band (VB) for s=1 within the three-band approximation for the Hamiltonian in Eq. \ref{MD_Ham} as a function of ($k_x$, $k_y$) near the $K$-point in $\rm{MoS}_2$ (b) Same as in a) but for s=-1 (c) Same as in a) for the conduction band (CB). The crosses in (a), (b) and (c) denotes the ($k_x$, $k_y$) pair wherein calculations for all three flavors of RBM have been executed. (d) Energy errors in eV from three flavors of RBM calculations for points denoted as cross in a) for the valence band (s=1) case computed using $\lambda=0$ in Eq. \ref{cost_fn} in $\rm{MoS}_2$. The x-axis is a flattened point index with $(k_x, k_y)$ pairs marked as crosses in (a) mapped to integers such that the origin is at the $K$-point. From the $K$-point, the flattened point index scale moves spirally outwards grouping all $(k_x, k_y)$ pairs satisfying $|k| = \sqrt{k_x^2 + k_y^2}$ as consecutive integers and then proceeding to the next $|k|$ (e) Same as in d) but with points denoted in b) as crosses for other valence band with s=-1 (f) Same as in d) but for points denoted in c) as crosses for the conduction band computed with $(\lambda=5, \omega=0, \hat{O}=|\nu_0\rangle \langle \nu_0|)$ in Eq. \ref{cost_fn}. (g) The amplitude for the occupancy of $d_{z^2}$ orbital on the metal for states computed at ($k_x$, $k_y$) pairs near the $K$-point from all three flavors of RBM as well as the exact states in valence band (s=1) for $\rm{MoS}_2$. The amplitude of states with the same $|k| = \sqrt{k_x^2 + k_y^2}$ appear bunched together as 'steps' due to flattened point-index scale used. Near the $K$-point the amplitude is the same for all such pairs within a given step due to isotropy of the energy surface. However away from the $K$-point deviations appear due to trigonal warping owing to the $D_{3h}$ symmetry of the unit cells in TMDCs. The states from all flavors of RBM can resolve the influence of warping accurately with the performance worsened for the noisy variant. (h) Same as in g) for valence band (s=-1) (i) Same as in g) for conduction band.}}
\label{Fig:MoS2_SOC_full}
\end{figure*}

Since the Hamiltonian in Eq. \ref{MD_Ham} is $2\times2$, we require a single visible neuron to encode the eigenstates, a single hidden neuron consistent with $\alpha=1$ and 1 additional ancillary qubit. The number of single-qubit $R_y$ gates is 2 and the number of $C-C-R_y$ gates is 1 and 6 $R_x$ gates. Calculations are performed using $\lambda=0$ in Eq. \ref{cost_fn} for the two SOC split valence bands with $s=\pm 1$ and $(\hat{O}=|\nu_0\rangle \langle \nu_0|$, $\lambda =5, \omega=0)$
for the conduction band. For NISQ devices we use `IBM-Sydney' and `IBM-Toronto' interchangeably as before. All calculations are performed for $(k_x, k_y)$ pairs centered at the $K$-point and with a cutoff $|k|$ of 0.1$K$ point to probe the low-energy regime. Since the $(k_x, k_y)$ pairs are near a symmetry point ($K$-point) warm starting was rarely observed to be required in RBM-cl and RBM-qasm but has been occasionally used in RBM-IBMQ for hastening convergence and reducing job queue. Each point on RBM-IBMQ are performed within a single run with Measurement Error Mitigation (MEM) as before for smooth self-convergence and consistency with other results.
Parameters for warping are obtained from \cite{Kormanyos2013}

In Fig. \ref{Fig:MoS2_SOC_full}(a), (b) and (c) we plot the exact 2D band surfaces obtained from Eq. \ref{MD_Ham} for the two SOC split valence bands (s=$\pm$ 1) and the conduction band. The crosses in each plot refer to the $(k_x, k_y)$ pairs wherein all flavors of RBM calculations have been performed. The results of such RBM calculations for each such pair are displayed as energy errors (eV) in \ref{Fig:MoS2_SOC_full} (d)-(f). The x-axis in each such plot is a flattened point index mapping $(k_x, k_y)$ pairs to integers by starting from pairs closest to the $K$-valley at the origin and proceeding spirally outwards. In other words, for a given $|k|$ the flattened point index groups all $(k_x, k_y)$ pairs satisfying $|k| = \sqrt{k_x^2 + k_y^2}$ as consecutive integers and then proceeds to the next $|k|$. We see that the energy error in each case is low for the RBM-cl and RBM-qasm variant ($\le$ $10^{-4}$ eV) for all three bands and $\le$ $10^{-2}$ eV for the IBMQ variant. Thus given the energy scale and extent of the splitting in the valence bands ($s=\pm 1$) in Fig. \ref{Fig:MoS2_SOC_full}(a)-(b) and the scale of the energy errors in Fig. \ref{Fig:MoS2_SOC_full}(d)-(e), it suffices to say that the performance of our algorithm is good enough to resolve band splitting due to features like spin-orbit coupling. 
To study the effect of warping parameters in Eq. \ref{MD_Ham} in the state, we plot in Fig. \ref{Fig:MoS2_SOC_full}(g)-(i) the amplitude of the corresponding states in the basis of $|d_{z^2}\rangle$ for the two SOC split valence bands ($s=\pm 1$) and the conduction band. The x-axis in each case is the flattened pair index as in Fig. \ref{Fig:MoS2_SOC_full}(d)-(f). 
At the $K$-point (origin), the conduction band is exclusively populated by $|d_{z^2}\rangle$ as discussed before but the reverse is true for the valence bands. In each of the plots Fig. \ref{Fig:MoS2_SOC_full} (g)-(i) all $(k_x, k_y)$ pairs which satisfy $|k| = \sqrt{k_x^2 + k_y^2}$ are bunched together as `steps' due to the flattened point index scale chosen. We see that near the $K$-point wherein the effect of warping is not prominent, all such points within a given `step' (same $|k|$) share the same amplitude. However away from the $K$-point deviation starts to become predominant. The amplitudes computed from the states of all three variants of RBM calculations line up well against the exact curve with the IBMQ variant showing some deviations albeit small considering the y-scale in these plots. Our algorithm thus can successfully resolve finer features like trigonal warping too in these Bloch states. A similar panel for $\rm{WS}_2$ is presented in Section 7 of Supplementary Information. Accurate computation of such Bloch states with these finer features preserved is necessary as momentum matrix elements between these states become important in simulating important properties of materials like optical conductivity\cite{Opt_cond,2nd_Harm_gen}, electrical and thermal conductivity\cite{Th_cond_ref} etc.}


\subsection{Filter for arbitrary states using symmetry operators}

In this section, we shall use the same set of TMDCs discussed above to explore how one can sieve arbitrary states based on symmetry constraints. 
To demonstrate the point we use orbital angular momentum symmetry. The $L_z$ operator in the three-band approximation commutes with the Hamiltonian \cite{Liu2013} in absence of spin-orbit coupling as has been considered in this work. The operators $L_x$, $L_y$ are essentially null matrices in the three-band basis of $\{d_{z^2}, d_{xy}, d_{x^2-y^2}\}$ as mentioned in \cite{Liu2013}. Hence $L^2$ enjoys exclusive contribution from $L_z$ and is a symmetry operator in the system. For computation, we use the Hamiltonian of the system at the $K$-point because the three-band approximation as discussed before is extremely accurate therein. 

The complete set of eigenvalues and eigenstates of $L_z$ and hence of $L^2$ operator is given in Section 8 of Supplementary Information. From the knowledge of the spectrum of $L^2$ operator we see that it has two distinct eigenvalues which are $\{0,4\}$ in atomic units. One of the eigenvectors of the doubly-degenerate eigenspace with eigenvalue 4 is the state in the valence band and the other is a higher energy excited state above the conduction band (not shown in Fig. \ref{F:MoS2_exc_state} or Fig. \ref{F:WS2_exc_state}). Both these states are exclusively made from the contribution of $\{d_{xy}, d_{x^2-y^2}\}$ as seen from the state decomposition in Section 8 of the Supplementary Information. The sector with eigenvalue 0 has single-fold degeneracy and is made from the excited state in the conduction band. As discussed before in  Fig. \ref{F:MoS2_exc_state}(d) and  Fig. \ref{F:WS2_exc_state}(d) (also in Section 8 of Supplementary Information) this state is exclusively made from the contribution of the $d_{z^2}$ which explains the absence of z-component angular momentum. We would thus expect that if we choose $\hat{O}=L^2$ and $\omega=\{0,4\}$ in Eq. \ref{cost_fn} for training the network, we should yield the excited state in the conduction band for $\omega=0$ and should yield the ground state in the valence band for $\omega=4$ as that is of lower energy (in compliance with the first term in Eq. \ref{cost_fn}) than the other degenerate eigenstate.

The qubit and gate resource requirements of this simulation are exactly the same as discussed in section \ref{exc_state} with 2 visible node neurons and 2 hidden node neurons for each of the two systems $\rm{MoS}_2$ and $\rm{WS}_2$. The Gibbs sampling circuit in Fig. \ref{Fig:RBM_circ} would {\color{black} need} a total of 8 qubits as before (2 for visible node + 2 for hidden node + 4 ancillary qubits). The gate requirements for the circuit to reproduce the amplitude are thus 4  single qubit Rotation gates $(R_y)$, 4 Controlled-Controlled Rotation gates ($C-C-R_y$) and also 24 Bit-flip (X) gates. We start the optimization with randomly initialized parameters.

In Fig. \ref{MoS2_L2} we display the results of our simulation. Like before, the results from all three flavors of RBM (marked as 2 = RBM-cl, 3 = RBM-qasm and 4 = RBM-IBMQ) are compared against the exact expected state (marked as 1 = Exact). In Fig. \ref{MoS2_L2}(a), the results of energy in eV from the three RBM simulations and the exact one are displayed for the eigenvalue sector $\omega=0$ a.u. This happens to be the conduction band (CB) energy in Fig. \ref{F:MoS2_exc_state}(a). We find an extremely good agreement for all flavors of RBM with the exact value. The corresponding energy error is displayed in Fig. \ref{MoS2_L2}(c) and is in the range of $10^{-5}-10^{-4}$ eV for RBM-cl and RBM-qasm but is within  $10^{-4}-10^{-3}$ eV for the RBM-IBMQ variant. Fig. \ref{MoS2_L2}(b) displays the constraint violation error i.e. how much the state encoded in the neural network after training has an $\langle L^2 \rangle$ equal to the target value of $\omega$ (in this case $\omega=0$ a.u.). We see that the violations are quite small for the noiseless implementations. Even for implementation on actual NISQ devices of IBM-Q, it is close to $10^{-3}$ a.u. 

\begin{figure}[H]
    \centering
    \includegraphics[width=1\textwidth]{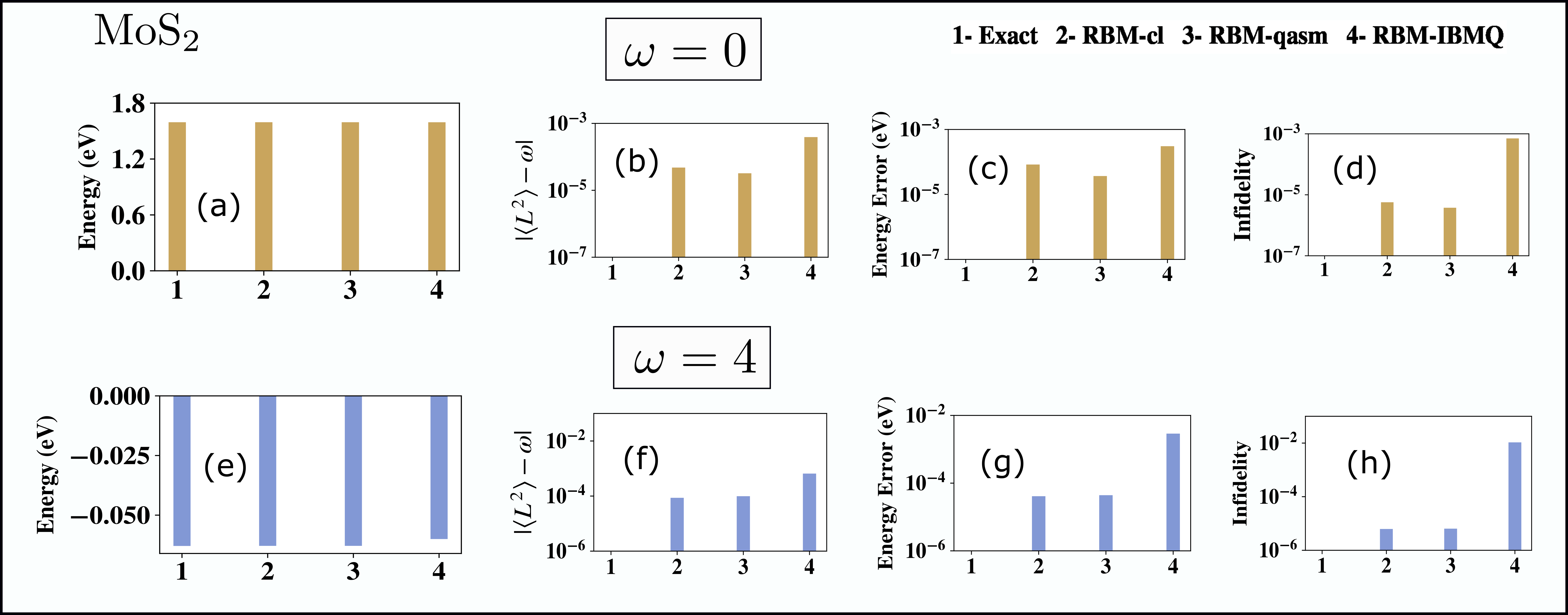}
    \caption{(a) The energy comparison between exact (1), RBM-cl (2), RBM-qasm (3), RBM-IBMQ (4) for computation with $\hat{O}=L^2$ and eigenvalue $\omega=0.0$ a.u. in Eq. \ref{cost_fn}. The exact energy is 1.5950 eV and is the conduction band energy at $K$-point in $\rm{MoS}_2$ shown in  Fig. \ref{F:MoS2_exc_state}. (b) The constraint violation error $|\langle L^2 \rangle - \omega|$ of the state obtained from different flavors of RBM and the desired value $\omega$. (c) The energy error in eV from (a) of the states obtained from RBM. (d) The state infidelities (1-$Fid$ where $Fid=|\langle\Psi_{\rm{RBM}}|\Psi_{\rm{Exact}}\rangle|^2$) obtained from RBM and the exact one (e-h) corresponds to an equivalent set of plots as in (a-d) just described but with the other eigenspace of $L^2$ with eigenvalue $\omega=4$ a.u. The exact energy here is the valence band energy at $K$-point for $\rm{MoS}_2$ shown in Fig. \ref{F:MoS2_exc_state} and is -0.0629 eV.} 
\label{MoS2_L2}
\end{figure}

\begin{figure}[H]
    \centering
    \includegraphics[width=1\textwidth]{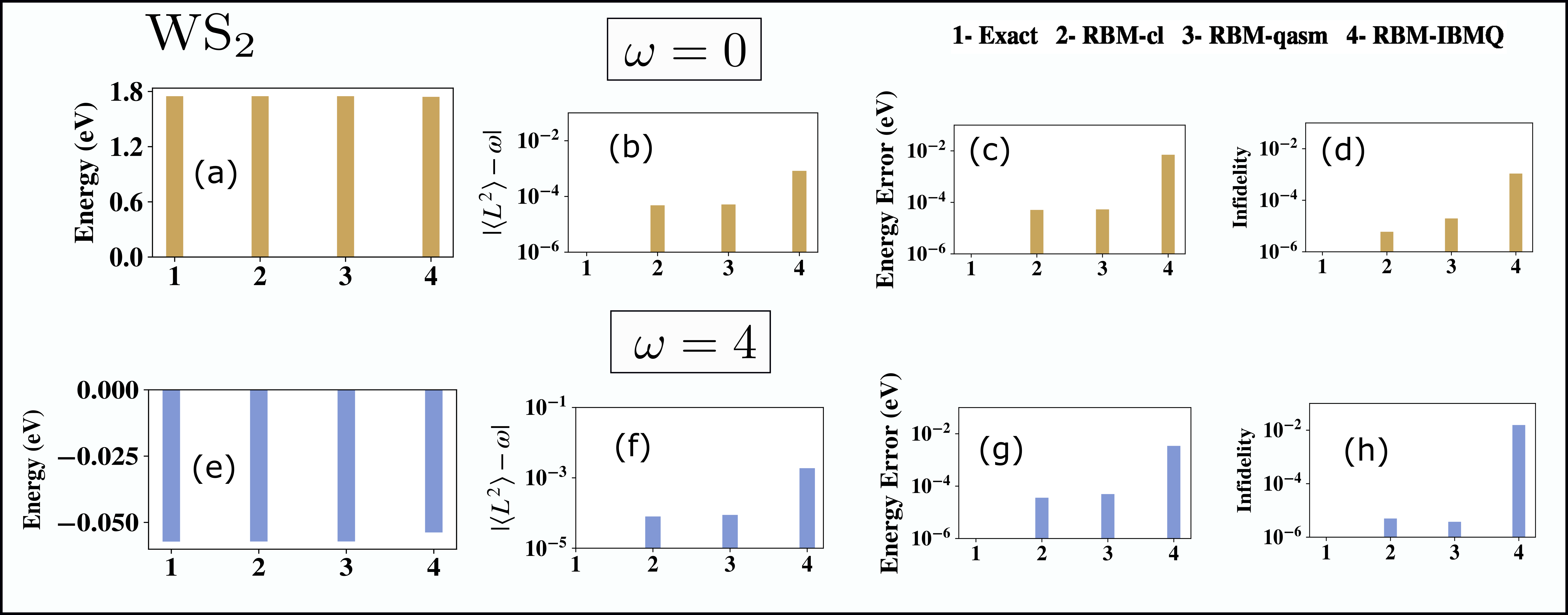}
    \caption{(a) The energy comparison between exact (1), RBM-cl (2), RBM-qasm (3), RBM-IBMQ (4) for computation with $\hat{O}=L^2$ and eigenvalue $\omega=0.0$ a.u. in Eq. \ref{cost_fn}. The exact energy is 1.749 eV and is the conduction band energy at $K$-point in $\rm{WS}_2$ shown in Fig. \ref{F:WS2_exc_state}. (b) The constraint violation error $|\langle L^2 \rangle - \omega|$ of the state obtained from different flavors of RBM and the desired value $\omega$. (c) The energy error in eV from (a) of the states obtained from RBM. (d) The state infidelities (1-$Fid$ where $Fid=|\langle\Psi_{\rm{RBM}}|\Psi_{\rm{Exact}}\rangle|^2$) obtained from RBM and the exact one (e-h) corresponds to an equivalent set of plots as in (a-d) just described but with the other eigenspace of $L^2$ with eigenvalue $\omega=4$ a.u.The exact energy here is the valence band energy at $K$-point for $\rm{WS}_2$ shown in Fig. \ref{F:WS2_exc_state} and is -0.0572 eV.}
\label{WS2_L2}
\end{figure}

Fig. \ref{MoS2_L2}(c) displays the energy error and Fig. \ref{MoS2_L2}(d) displays the state infidelity error (1-$Fid$ where $Fid=|\langle\Psi_{\rm{RBM}}|\Psi_{\rm{Exact}}\rangle|^2$). We see that for all flavors of RBM implementation the infidelities are quite small with the performance worsened for implementation on the actual IBM-Q device. Fig. \ref{MoS2_L2}(e-h) corresponds to similar plots as discussed above but this time in the other eigenvalue sector with $\omega=4$ a.u. We again see that the energy values (in eV) in Fig. \ref{MoS2_L2}(e) matches with the exact for all flavors of RBM-implementation. This state happens to be the ground state in the valence band (VB) shown in Fig. \ref{F:MoS2_exc_state}(a). The corresponding energy errors shown in Fig. \ref{MoS2_L2}(g) are like in the previous case ($\omega=0$) low for RBM-cl and RBM-qasm but in the range of $10^{-3}-10^{-2}$ eV for RBM-IBMQ. Similar analysis as in the case of $\omega=0$ a.u. can also be made for the constraint violation error in Fig. \ref{MoS2_L2}(f) and the state infidelity in Fig. \ref{MoS2_L2}(h). Both of these have low errors with the respective ranges as displayed.

Fig. \ref{WS2_L2} shows a similar plot for the other system studied
$\rm{WS}_2$. Just as before we display the results for $\omega=0$ a.u. in Fig. \ref{WS2_L2}(a-d) and for $\omega=4$ a.u. in Fig. \ref{WS2_L2}(e-h). Fig. \ref{WS2_L2}(a) shows the energy match between the RBM implementations and the exact value for $\omega=0$ a.u. and Fig. \ref{WS2_L2}(e) shows the same for $\omega=4$ a.u. The former is equal to the state in the conduction band at $K$-point (see Fig. \ref{F:WS2_exc_state}(a)) and the latter is the corresponding state in the valence band (see Fig. \ref{F:WS2_exc_state}(a)). We see good agreement for all RBM variants and the exact expected value. The corresponding energy errors are low (see Fig. \ref{WS2_L2}(c) and Fig. \ref{WS2_L2}(g)) with the range for IBMQ implementation being $10^{-3}-10^{-2}$ eV and even lesser for the pristine implementations. The respective constraint violation errors are displayed in Fig. \ref{WS2_L2}(b) and Fig. \ref{WS2_L2}(f) and are small too as seen from the scale. A similar statement can also be made for the state infidelity displayed in Fig. \ref{WS2_L2}(d) and Fig. \ref{WS2_L2}(h). We have seen that in both the systems $\rm{MoS}_2$ and $\rm{WS}_2$, the state infidelity and energy errors {\color{black}are} higher in the $\omega=4$ a.u. eigensector than in $\omega=0$ a.u. eigensector in the IBMQ implementation especially. In fact, the relative energy errors for the said sector are close to 5\% for RBM-IBMQ. However, the corresponding errors (both relative and absolute) {\color{black} are} low for the noiseless implementation (RBM-cl and RBM-qasm) indicating that the higher \% error is attributable to the imperfect implementation of gates in the Gibbs sampling circuit in an IBM-Q machine and hence can be mitigated with future quantum computing devices with better gate fidelities and error-correction schemes.

\section{Conclusion}
\label{Sec:conclusion}

In this study, we have demonstrated an algorithm which can filter arbitrary energy eigenstates in 2D materials like TMDCs using a quantum circuit with quadratic resources. We provided an original proof of feasibility for our cost function employed for the constrained optimization. We also proved a generic lower bound for the successful sampling of our quantum circuit from which previously known bounds can be extracted.
Our circuit trains a three-layered neural network that encodes the desired state using an RBM ansatz for the probability density. As an illustration, we were able to filter energy eigenstates in the conduction band of important TMDCs like $\rm{MoS}_2$ and $\rm{WS}_2$, and faithfully reproduce the band-gap. We were also able to filter arbitrary states based on a user-defined orbital angular momentum symmetry constraint. We trained the network on various flavors of computation using not only a classical computer, {\it qasm} backend quantum simulator in Qiskit but also a real IBMQ machine (IBM Sydney and IBM Toronto) with the objective to see the performance of the algorithm on actual NISQ devices. In all flavors of computation, our algorithm demonstrated very high accuracy when compared to the exact values obtained from direct diagonalization. 

Venturing beyond the ground state to obtain arbitrary states based on user-defined restrictions is the first of its kind in all flavors of QML. Furthermore, the systems of our choice happen to be TMDCs, an important class of 2D-periodic systems which have
never been studied using any quantum algorithm. Periodic systems in general have received scanty attention as far as quantum algorithms are concerned. Only two reports exist\cite{kanno2019manybody, Sureshbabu2021} both of which have simulated just the valence band in graphene and hexagonal Boron Nitride (h-BN). 

{\color{black} It must also be emphasized that a host of classical algorithms have been developed in traditional quantum chemistry that are extremely accurate and polynomially efficient. Over the past few decades, Density functional theory (DFT) has emerged into a leading candidate for accurate computation of wide-variety of electronic structure problems in molecules and materials\cite{doi:10.1063/1.4704546,doi:10.1146/annurev-matsci-070218-010143}. Variants of it are being developed for cases wherein multi-reference correlation would be important too\cite{MCSCF_DFT_2017}. Reduced density-matrix based methods are also polynomially scaling \cite{Mazziotti2011} and have shown excellent accuracy in strongly correlated systems\cite{Mazziotti2012, Montgomery2018}. Tensor-network based methods like Density-Matrix Renormalization Group (DMRG) \cite{PhysRevLett.69.2863, doi:10.1146/annurev-physchem-032210-103338,RevModPhys.77.259} have been developed which even though capable of exploiting rank-sparsity in strongly correlated one-dimensional systems yet loses the polynomial advantage in multi-dimensions. Like our algorithm which attempts to construct the many-body state, a plethora of similar wave-function based \textit{ab-initio} methods exist in traditional quantum chemistry too starting from the uncorrelated Hartree-Fock method to post-Hartree methods which can recover dynamic correlation like perturbative approaches (like MP2\cite{HEADGORDON1988503, MP2_CI_Rghavachari, doi:https://doi.org/10.1002/9781119019572.ch14}), Truncated Configuration-Interaction or CI (like CISD) \cite{MP2_CI_Rghavachari, doi:https://doi.org/10.1002/9781119019572.ch11}, Couple-Cluster (CC) methods \cite{RevModPhys.79.291} (like CCSD, CCSD(T), CCSDT or EOM-CCSD for excited states), recently developed SHCI methods\cite{doi:10.1063/1.5055390,doi:10.1063/1.4998614} to ones which are good for capturing static-correlation like Multi-Configurational Self-Consistent Field ( MCSCF\cite{MCSCF_rev}). A direct comparison of a quantum algorithm like ours with these classical algorithms can be attempted to be made in terms of accuracy and resource cost. In terms of resource requirements, the comparison is made  difficult by the fact that certain parameters like circuit-width, circuit-depth, etc which affect the performance of quantum algorithms like ours, have no classical analogues. If we consider an $N_{elec}$ electron system with $r=r_o + r_{uo}$ spin orbitals/fermionic modes such that $r_o=N_{elec}$ is the occupied orbital set in Hartree-Fock reference and $r_{uo}$ are virtual orbitals excluded from Hartree-Fock reference, then under the assumption that the orbital space rank loosely equates to qubits or number of visible neurons $n$ (see section \ref{The_Model} and \cite{JW_vs_BK}), we have shown in section \ref{res_req} that the spin-orbital cost of our algorithm would be $\approx O(r^2)=O(r_o^2 + r_{uo}^2 + 2r_{uo}r_{o})$. The numerical parameter count of our algorithm is also quadratic i.e. $O(\alpha n^2) \approx O(r^2)=O(r_o^2 + r_{uo}^2 + 2r_{uo}r_{o})$. This is unlike methods like CCSD (Coupled-Cluster Singles Doubles) which has a computational cost of $\approx O(r_o^2r_{uo}^4)$ (for CCSDT it is $\approx O(r_o^3r_{uo}^5)$ and for CCSD(T) it is $\approx O(r_o^3r_{uo}^4)$) \cite{RevModPhys.79.291, doi:https://doi.org/10.1002/9781119019572.ch13}. Also CCSD evaluates $\approx O(r_o^2r_{uo}^2)$ cluster amplitudes as parameters defining the excitations. For chemically important phenomenon like dissociation events which are no longer single-referenced are known to be difficult to treat with CCSD \cite{FAN20061}, even though pair cluster doubles can ameliorate the situation to some extent \cite{Scuseria_pCCD}. That being said, it must also be noted that traditional variant of CCSD unlike ours is non-variational. As far as accuracy is concerned, all results in this report are benchmarked against exponentially scaling exact diagonalization as that affords the exact value in a given basis. Not only the physics of Bloch states in the material TMDC but also a molecular example like $\rm{LiH}$ has been treated using our algorithm (see Section 10 in Supplementary Information). For both the ground and excited states of $\rm{LiH}$ we see good accuracy and improvement of error by enhancing the hidden node density which makes the ansatz more expressive. Studies on larger molecular systems for which the results of exact diagonalization may not be available may be undertaken in future. That will provide a platform for comparison in accuracy with a subset of the aforesaid classical algorithms. Desirable chemical features like size-consistency and size-extensivity may be probed too.}

{\color{black} One must also note that several quantum algorithms already exist which aim at obtaining ground and excited states of fermionic systems\cite{Cao2019}. Non-variational quantum algorithms like Quantum Phase estimation(QPE)\cite{Alan_2005, Lanyon2010,https://doi.org/10.1002/qua.25176} have exponential speed up\cite{Llyod_1999} yet require high circuit depth and long coherent operations which are beyond the limits of near-term hardwares \cite{preskill2018quantum}. Hybrid Variational Quantum Algorithms (VQA)  have also been developed which can ameliorate some of the above problems\cite{VQE_review}. The most notable one in the list is Unitary Coupled-Cluster Variational Quantum Eigensolver (UCC-VQE) \cite{Peruzzo2014}. In its most traditional variant, the unitary ansatz which UCC-VQE uses for state preparation consists of single and double-excitations\cite{theory_VQE,Romero2019, Sci_rep_uccsd}(hence often called the name Unitary Coupled Cluster Singles Doubles or UCCSD) from the reference state. However, the circuit depth in preparing such an ansatz is still large and the circuit is parameterized by many variables which necessitates a high-dimensional classical optimization routine \cite{VQE_lanl_rev} to update the parameters. To be concrete, for $r_o$ and $r_{uo}$ with the same meaning as described in previous paragraph, the UCCSD-VQE uses \cite{Cao2019} $O(r_o)$ qubits, $O(r_o^2r_{uo}^2)$ cluster amplitudes as parameters and $O(fr_o^4)$ gates where $f= O(r_o)\:\: \rm{or}\:\: O(log(r_o))$ depending on the qubit-mapping. Besides the UCC-VQE method can suffer from errors incurred due to operator ordering or Trotterization \cite{Trotter_UCC}. Also, the ansatz requires a high-degree of qubit connectivity for non-local operations which may not always be available in all hardware \cite{QAlgo_chem_rev}. Hardware-efficient ansatz \cite{Hardware_eff_Nat, PhysRevA.98.022322} has been developed to help solve the above issues which use an alternating framework of single-qubit gates and fixed entangling operations which can be chosen with the specific capabilities of the device in hand. However, unlike UCC-VQE, such an ansatz is not physically inspired and often suffers from trainability issues during parameter optimization \cite{VQE_review, BP_qnet}. Besides the number of parameters grow as a function of entangling blocks and can even surpass the size of the Hilbert space \cite{Cao2019, PhysRevA.98.022322}. A third variant that has low circuit dept
h and parameter cost is the ADAPT-VQE approach \cite{Grimsley2019}. Unlike in the previous two cases, this variant constructs the circuit from a pre-selected pool of operators and
changes the circuit architecture adaptively by adding operators from the pool which affects the energy gradient the most. The chosen pool decides the parameter count and gate-counts in the circuit. In this method, the number of measurement shots can be high for computing the gradients \cite{VQE_lanl_rev} and also it is generally not clear how to pre-select the operator pool and what guarantees that the pool is complete i.e. the ansatz it produces is expressive enough. Many different variants for each method have been constructed for which the reader is referred to many excellent reviews like \cite{VQE_review,QAlgo_chem_rev, Cao2019}. For excited states \cite{VQE_review}, deflation assisted VQE as described before \cite{Higgott2019} exist but for its implementation used the UCCSD ansatz which inherits some of the above problems of high parameter count and gates. A recent promising method known as Weighted Subspace- Search VQE uses an input array of several orthogonal states to construct a weighted Lagrangian as the cost function \cite{PhysRevResearch.1.033062}. In this case, the input states are mapped to the excited states of the system using a parameterized ansatz circuit. Depending on the nature of the ansatz circuit, the algorithm can have different gate-count or parameter count and hence it is hard to mention a general estimate.\\   
Our algorithm is also a hybrid variational algorithm like the ones in the aforesaid list but requiring quadratic resources always (see Section \ref{res_req}). However there are some key differences as well which need to be acknowledged. Unlike the above list of algorithms which prepare a unitary ansatz on a quantum computer to mimic the state, our algorithm proposes to construct a probability distribution that mimics the amplitude field of the target state on a quantum computer. As a result our algorithm is a distribution sampling protocol on a quantum computer using a non-unitary ansatz (RBM) which is manifested in the usage of ancilla and its subsequent measurement collapse. The measurement statistics of such a collapse are discussed in detail in Section 2 and Section 5 in Supplementary Information. Since the distribution encoding the amplitude field is based on RBM, unlike the ADAPT-VQE method, our protocol is largely \textit{problem agnostic}. This is due to the fact that RBM can act as a universal approximant to any probability density \cite{Roux_RBM} and hence can be used for a variety of problems provided it is made sufficiently expressive with an adequate hidden node density. Also unlike other algorithms wherein the nature of the excitations or operator pool used decides the cost-function gradient, in our case the distribution function being RBM always permits training the network with analytical gradients. Besides we have already demonstrated in section \ref{res_req} using Ref \cite{Long2010} that an analogous classical construction of RBM distribution has an exponential overhead whereas using a quantum algorithm like ours one can construct it using quadratic resources thereby illustrating the distinct quantum-classical advantage in our algorithm directly.}

Further extension of this algorithm can be made to compute operators using Hellmann-Feynmann method~\cite{Oh2009}, to characterize the influence of noise on the algorithm and to see it being extended to study  other interesting phenomena on 2D materials like Rashba splitting in polar TMDCs\cite{PhysRevB.95.165401} or even effect of strain\cite{Peng2020}. {\color{black} One must also note that in this work we construct the full $d=2^n$-dimensional eigenstate from the amplitude encoding using the RBM ansatz (Eq. \ref{rbm_dist}) and the phase encoding using the Eq. \ref{phase_encod}. This is because the primary quantum advantage of our algorithm lies in the fact we use quadratic resources to learn the full RBM distribution which classically would require exponential resources as necessitated in \cite{Long2010}. Besides access to the full state allows us to compute matrix elements of arbitrary operator between eigenstates important for spectral information i.e. learning in excitonic features\cite{Opt_cond} or thermal and electronic conductivity \cite{Th_cond_ref} which as said before are important future extensions of this work. Also once trained for a given system, the neural-network in our algorithm can be used to learn eigenstates of a closely related system accurately with faster convergence and lesser number of iterations indicating partial transferability of these models (see Section 9 of Supplementary Information for details). Benefits and scope of such `transferable training' for other chemically motivated systems will be investigated in future. It must be noted that the symmetry partitioning of the metal orbitals in TMDCs guaranteed in \cite{Liu2013} have reduced the effective size of the orbital space and qubit requirements in this study. However understanding spectral information in excitonic physics would require more involved models with a larger orbital space. A way forward may be focusing on low-energy excitons with a certain symmetry (like overall spin-angular momentum) characteristics only. For molecular systems such symmetry inspired cost-reductions are already beginning to be noticed \cite{Setia_Z2, PhysRevResearch.3.013039} as discussed earlier. However such an initiative for materials is largely an uncharted territory. Further reduction in qubit resource requirements of our algorithm may also help, even though the non-unitary nature of the ansatz as discussed before makes it harder. From the hardware point of view, robust large scale error mitigation strategies are beginning to be made available now \cite{Error_mit_google,GEM_qiskit} and devices with over 1000 qubits with low qubit decoherence errors and gate infidelities are also being promised in recent future \cite{IBM_roadmap}. Such resources would will certainly be beneficial to extensions of studies like these}. 

From the algorithmic point of view, besides being quadratic scaling in qubit and gate requirements and parameter count, our algorithm does not have any dependence on oracular objects like qRAM \cite{Ciliberto2017} which is responsible for creating a superposition of all possible basis states and is known to commonly sought in most quantum machine learning modules. As futuristic quantum devices are being developed with proper error mitigation schemes, we expect to have more such cross-pollination between machine learning algorithms and quantum computing with the promise to study electronic structure and dynamics in new complex materials. 


\section{Data and model availability}
The information about the input Hamiltonian and $L_z/L^2$ operator corresponding to $\rm{MoS}_2$ and $\rm{WS}_2$ can be found in the Supplementary Information in Section 3 and 8 respectively. Data will be made available upon reasonable request to the corresponding author. The codes associated with the classical simulation, simulation on the {\it qasm} backend, and the implementation on IBM's quantum computing devices is available with the corresponding author upon reasonable request. 

\section{Supporting Information}
Additional data can be found in Supplementary Information. Section 1 of Supplementary Information provides a proof of convergence of Theorem 2.1. in the manuscript. Section 2 shows the derivation of the generic lower bound for successful sampling. Section 3 describes the Hamiltonian matrix used for $\rm{MoS}_2$ and $\rm{WS}_2$. Section 4 describes the importance of Measurement Error Mitigation. Section 5 discusses the Measurement Statistics for the systems in the report. Section 6 shows how the results are affected by changing hidden node density. Section 7 shows the SOC splitting data for $\rm{WS}_2$. Section 8 discusses the $L^z/L^2$ operator corresponding to $\rm{MoS}_2$ and $\rm{WS}_2$. Section 9 discusses transferability of learning from trained network to other systems. Section 10 discusses the molecular example of $\rm{LiH}$.

\begin{acknowledgement}
We would like to thank Dr. Ruth Pachter, AFRL, for many useful discussions. AFRL support is acknowledged. We would also like to thank Sangchul Oh for his help during the preparation of the manuscript. We acknowledge funding by the U.S. Department of Energy (Office of Basic Energy Sciences) under Award No. DE-SC0019215 and the National Science Foundation under award number 1955907. This material is also based upon work supported by the U.S. Department of Energy, Office of Science, National Quantum Information Science Research Centers. We also acknowledge the use of IBM-Q and thank them for the support. The views expressed are those of the authors and do not reflect the official policy or position of IBM or the IBM Q team.

\end{acknowledgement}

\providecommand{\latin}[1]{#1}
\makeatletter
\providecommand{\doi}
  {\begingroup\let\do\@makeother\dospecials
  \catcode`\{=1 \catcode`\}=2 \doi@aux}
\providecommand{\doi@aux}[1]{\endgroup\texttt{#1}}
\makeatother
\providecommand*\mcitethebibliography{\thebibliography}
\csname @ifundefined\endcsname{endmcitethebibliography}
  {\let\endmcitethebibliography\endthebibliography}{}


\pagebreak
\begin{center}
\textbf{\Large Supplementary Information for Quantum Machine-Learning for Eigenstate Filtration in Two-Dimensional Materials}
\end{center}
 
\setcounter{equation}{0}
\setcounter{figure}{0}
\setcounter{table}{0}
\setcounter{page}{1}
\setcounter{section}{0}
\makeatletter
\renewcommand{\theequation}{S\arabic{equation}}
\renewcommand{\thefigure}{S\arabic{figure}}
\renewcommand{\bibnumfmt}[1]{[S#1]}
\renewcommand{\citenumfont}[1]{S#1} 
  
\section{Proof of Theorem 2.1}

The proof of feasibility of general penalty functions is known in optimization theory. Since both our objective function and penalty term are quadratic forms herein we construct an original, formal and a simple proof for Theorem 2.1.

Let us recollect the cost function $\rm {F(\lambda, \hat{H}, \hat{O}, |\psi\rangle)}$ defined in text:
\begin{align}
    \rm {F} (\lambda, \hat{H}, \hat{O}, |\psi\rangle) = \langle \psi| \hat {H} |\psi\rangle + \lambda \langle \psi|(\hat{O}-\omega)^2|\psi\rangle \label{cost_fn_supp}
\end{align}
\newpage
\subsection{Definitions} \label{defn}
We re-iterate the following definitions as considered in the main text
\begin{enumerate}
    \item $\hat{H} \in \mathbb{C}^{d\times d}$ and $\hat{H}=\hat{H}^\dagger$. This is the Hamiltonian operator in the problem and we denote the spectrum of $\hat{H}$ by $\Vec{\sigma}=[\sigma_{0}, \sigma_{1},.....\sigma_{n}]^T$
    where $\sigma_{0} \le \sigma_{1} \le .... \sigma_{n}$ . We shall assume that the entries of $\hat{H}$ in the chosen basis is finite.
    
    \item $\hat{O} \in \mathbb{C}^{d\times d}$ and $\hat{O}=\hat{O}^\dagger$ is an user-defined operator for the problem. $\omega$ is an eigenvalue of operator $\hat{O}$. We denote the spectrum of operator $(\hat{O}-\omega)^2$ as $\Vec{\eta} = [\eta_{0}, \eta_{1},\eta_{2}....\eta_{n}]^T$ where $\eta_0 \le \eta_1 \le \eta_2....\le \eta_n$. Further $\eta_i \ge 0 \:\:\forall\:\: i$ as $(\hat{O}-\omega)^2 \succeq 0$ (positive-semidefinite by construction)
    
    \item $Null(\hat{A})$ = $\{|x\rangle |\:\: \hat{A}|x\rangle =0, \:\:\forall \:\:|x\rangle \in \mathbb{C}^d\}$ where $\hat{A}$ is any arbitrary operator $\in \mathbb{C}^{d\times d}$.
    
    \item $\lambda \in \mathbb{R}_{++}$ is a penalty parameter
    
    \item $|\psi\rangle \in \mathbb{C}^d$
    is the state-vector of the system sought from the minimization scheme by training the neural network.
    
    \item $\{\lambda_i\}_{i=1}^{\infty}$ is a sequence in the penalty parameter such that $\lambda_1 \le \lambda_2 \le \lambda_3..... \lambda_{\infty} \rightarrow \infty$
    
    \item $P=\{|\psi_i\rangle\}_{i=1}^{\infty}$ such that $\forall$ $|\psi_i\rangle$ $\in$ $P$ the following is true.
\begin{align}
    |\psi_i\rangle &= \argmin_{\psi} {\rm{F}(\lambda_i, \hat{H}, \hat{O}, |\psi\rangle)}
\end{align}
In other words $P$ is the set of minimizers for Eq. \ref{cost_fn_supp} for each penalty parameter $\lambda \in \{\lambda_i\}_{i=1}^{\infty}$.

\item $|\psi^*\rangle = \lim_{i \to \infty} |\psi_i\rangle$ be the limit point of a convergent sequence in $P$

\item All vectors $\in \mathbb{C}^d$ discussed below will be considered normalized unless otherwise stated
\end{enumerate}

Using the definitions above we construct the following lemmas.
\newpage

\begin{lemma}\label{lem1}
For any $|\psi\rangle \in \mathbb{C}^d$, $\langle \psi|\hat{A}|\psi\rangle \le \sqrt{Tr(\hat{A}^\dagger \hat{A})}$ where $\hat{A}$ is any arbitrary hermitian operator $\in \mathbb{C}^{d\times d}$.
\begin{proof}
Let us denote the variance of operator $\hat{A}$ as $Var(\hat{A})$ evaluated in an arbitrary state $|\psi\rangle \in \mathbb{C}^d$. $Var(\hat{A})$ by definition is always non-negative. From this we can claim
\begin{align}
    Var(\hat{A}) = \langle \psi|\hat{A}^2|\psi\rangle - (\langle \psi|\hat{A}|\psi\rangle)^2
    &\ge 0  \:\:\:\:\because (\rm{by\:\: definition}) \nonumber \\
    \langle \psi|\hat{A}^2|\psi\rangle &\ge (\langle \psi|\hat{A}|\psi\rangle)^2 \nonumber \\
    \langle \psi|\hat{A}^\dagger\hat{A}|\psi\rangle &\ge (\langle \psi|\hat{A}|\psi\rangle)^2 \:\:\:\:\because (\hat{A}=\hat{A}^\dagger) \label{eqn_exp_val} 
\end{align}
Now let us consider a complete set of eigenvectors of $\hat{A}^\dagger \hat{A}$ denoted as $\{|s_i\rangle\}_{i=1}^d$ with corresponding eigenvalues
$\{s_i\}_{i=1}^d$ which are non-negative as $\hat{A}^\dagger \hat{A} \succeq 0$. One can resolve the state $|\psi\rangle$ in the basis $\{|s_i\rangle\}_{i=1}^d$ as follows
\begin{align}
    |\psi\rangle = \sum_{i=1}^d \langle s_i|\psi\rangle |s_i\rangle\ \label{psi_res}
\end{align}
Using Eq. \ref{psi_res} to express $\langle \psi|\hat{A}^\dagger\hat{A}|\psi\rangle$ we have
\begin{align}
    &\langle \psi|\hat{A}^\dagger\hat{A}|\psi\rangle \nonumber \\
    &=\sum_{i=1}^d s_i |\langle s_i|\psi\rangle|^2 \nonumber \\
    & \le \sum_{i=1}^d s_i \:\:\:\:\:\:\because s_i \ge 0 \:\:\rm{and}\:\: 0 \le |\langle s_i|\psi\rangle|^2 \le 1 \nonumber \\
    & = Tr(\hat{A}^\dagger A) \label{tr_defn}
\end{align}
We can thus make the following claim
\begin{align}
    &Tr(\hat{A}^\dagger A) \nonumber \\
    &\ge \langle \psi|\hat{A}^\dagger\hat{A}|\psi\rangle \:\:\:\: \because Eq. \ref{tr_defn}\nonumber \\
    &\ge (\langle \psi|\hat{A}|\psi\rangle)^2 \:\:\:\: \because Eq. \ref{eqn_exp_val}\nonumber \\
    \implies \langle \psi|\hat{A}|\psi\rangle \le \sqrt{Tr(\hat{A}^\dagger A)}
\end{align}
\end{proof}
\end{lemma}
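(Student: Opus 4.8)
The plan is to split the claimed bound into two independent estimates and chain them. First I would observe that, because $\hat{A}=\hat{A}^\dagger$, the quantity $\langle\psi|\hat{A}|\psi\rangle$ is real, while the right-hand side $\sqrt{Tr(\hat{A}^\dagger\hat{A})}$ is non-negative (it is the square root of the trace of a positive-semidefinite operator). Hence if $\langle\psi|\hat{A}|\psi\rangle\le 0$ the inequality is immediate, and I may assume $\langle\psi|\hat{A}|\psi\rangle\ge 0$ and instead prove the stronger squared statement $\big(\langle\psi|\hat{A}|\psi\rangle\big)^2\le Tr(\hat{A}^\dagger\hat{A})$, from which the assertion follows by taking square roots. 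The whole argument is stated for the normalized $|\psi\rangle$ fixed in the definitions.

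The first estimate relates $\big(\langle\psi|\hat{A}|\psi\rangle\big)^2$ to the expectation of $\hat{A}^\dagger\hat{A}$. I would get this either from the non-negativity of the variance of $\hat{A}$,
\begin{align}
\langle\psi|\hat{A}^2|\psi\rangle-\big(\langle\psi|\hat{A}|\psi\rangle\big)^2\ge 0,
\end{align}
or, equivalently, from the Cauchy--Schwarz inequality applied to $|\psi\rangle$ and $\hat{A}|\psi\rangle$, which gives $\big(\langle\psi|\hat{A}|\psi\rangle\big)^2\le\|\hat{A}|\psi\rangle\|^2=\langle\psi|\hat{A}^\dagger\hat{A}|\psi\rangle$ once one uses $\hat{A}=\hat{A}^\dagger$ to identify $\hat{A}^2$ with $\hat{A}^\dagger\hat{A}$. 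Either route yields $\big(\langle\psi|\hat{A}|\psi\rangle\big)^2\le\langle\psi|\hat{A}^\dagger\hat{A}|\psi\rangle$.

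The second estimate bounds $\langle\psi|\hat{A}^\dagger\hat{A}|\psi\rangle$ by the trace. Here I would diagonalize the positive-semidefinite operator $\hat{A}^\dagger\hat{A}$, fix an orthonormal eigenbasis $\{|s_i\rangle\}_{i=1}^d$ with eigenvalues $s_i\ge 0$, and expand $|\psi\rangle=\sum_i\langle s_i|\psi\rangle|s_i\rangle$. The expectation then reads $\sum_i s_i|\langle s_i|\psi\rangle|^2$; since the weights obey $0\le|\langle s_i|\psi\rangle|^2\le 1$ with $\sum_i|\langle s_i|\psi\rangle|^2=1$ by normalization, and since every $s_i$ is non-negative, replacing each weight by $1$ only increases the sum, giving $\sum_i s_i|\langle s_i|\psi\rangle|^2\le\sum_i s_i=Tr(\hat{A}^\dagger\hat{A})$. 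Chaining this with the first estimate produces $\big(\langle\psi|\hat{A}|\psi\rangle\big)^2\le Tr(\hat{A}^\dagger\hat{A})$, and taking square roots closes the proof.

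The one step that genuinely consumes the hypotheses is this last bound: it is the positive-semidefiniteness of $\hat{A}^\dagger\hat{A}$ (equivalently $s_i\ge 0$) that licenses replacing each $|\langle s_i|\psi\rangle|^2\le 1$ by $1$ without reversing the inequality. Were the spectrum allowed to be negative this step would fail, so I would flag the non-negativity of the eigenvalues as the key structural input; everything else is routine. I would also remark that the bound is typically loose, since it discards any spectral concentration of $|\psi\rangle$, but a finite, state-independent ceiling of this form is exactly what is needed downstream to control the penalty term uniformly in the sequence of minimizers.
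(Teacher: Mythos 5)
Your proposal is correct and follows essentially the same route as the paper's proof: the non-negativity of the variance (equivalently Cauchy--Schwarz) gives $\bigl(\langle\psi|\hat{A}|\psi\rangle\bigr)^2 \le \langle\psi|\hat{A}^\dagger\hat{A}|\psi\rangle$, and diagonalizing the positive-semidefinite operator $\hat{A}^\dagger\hat{A}$ bounds that expectation by $Tr(\hat{A}^\dagger\hat{A})$ using $s_i \ge 0$ and $|\langle s_i|\psi\rangle|^2 \le 1$. Your added remark disposing of the case $\langle\psi|\hat{A}|\psi\rangle \le 0$ before taking square roots is a small tidiness improvement over the paper, which passes from the squared inequality to the conclusion implicitly, but it does not change the argument.
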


\begin{lemma} \label{lem2}
$|\psi^*\rangle \in P$ and is a limit-point of the convergent sequence in $P$ if  $\langle \psi^*|(\hat{O}-\omega)^2|\psi^*\rangle =0  $
\end{lemma}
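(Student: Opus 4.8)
Given, per the definitions, that $|\psi^*\rangle$ is the limit point of the convergent sequence of minimizers in $P$, the substantive content to establish is that the variance term vanishes at this limit, $\langle\psi^*|(\hat{O}-\omega)^2|\psi^*\rangle = 0$. Combined with the positive-semidefiniteness of $(\hat{O}-\omega)^2$ recorded in the definitions, this is exactly what places $|\psi^*\rangle$ in $Null\big((\hat{O}-\omega)^2\big) = Null(\hat{O}-\omega)$ and hence in $S$, since $\langle\psi^*|(\hat{O}-\omega)^2|\psi^*\rangle$ is the squared norm of $(\hat{O}-\omega)|\psi^*\rangle$. The plan is to run the classical penalty-method squeeze argument, using Lemma \ref{lem1} to keep the energy term under control. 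First I would fix any feasible reference state $|x\rangle \in S$, which exists because $\omega$ is by assumption an eigenvalue of $\hat{O}$; for it $(\hat{O}-\omega)^2|x\rangle = 0$, so $\mathrm{F}(\lambda_i, \hat{H}, \hat{O}, |x\rangle) = \langle x|\hat{H}|x\rangle$, a constant independent of $i$.

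Next I would invoke the defining optimality of each $|\psi_i\rangle$: since $|\psi_i\rangle$ minimizes $\mathrm{F}(\lambda_i, \hat{H}, \hat{O}, \cdot)$, we have $\mathrm{F}(\lambda_i, \hat{H}, \hat{O}, |\psi_i\rangle) \le \langle x|\hat{H}|x\rangle$. Expanding the left-hand side and isolating the penalty term yields $\lambda_i \langle\psi_i|(\hat{O}-\omega)^2|\psi_i\rangle \le \langle x|\hat{H}|x\rangle - \langle\psi_i|\hat{H}|\psi_i\rangle$. The right-hand side must be bounded above uniformly in $i$, and this is precisely where Lemma \ref{lem1} enters: applying it to $-\hat{H}$ (equivalently, the variational lower bound $\langle\psi_i|\hat{H}|\psi_i\rangle \ge \sigma_0$) gives $\langle\psi_i|\hat{H}|\psi_i\rangle \ge -\sqrt{Tr(\hat{H}^2)}$, so the right-hand side is at most a fixed constant $C = \langle x|\hat{H}|x\rangle + \sqrt{Tr(\hat{H}^2)}$. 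Dividing by $\lambda_i > 0$ and using the non-negativity of the variance sandwiches it as $0 \le \langle\psi_i|(\hat{O}-\omega)^2|\psi_i\rangle \le C/\lambda_i$.

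The finish is to let $i\to\infty$. Because $\lambda_i \to \infty$, the squeeze forces $\langle\psi_i|(\hat{O}-\omega)^2|\psi_i\rangle \to 0$ along the sequence; then, since $|\psi\rangle \mapsto \langle\psi|(\hat{O}-\omega)^2|\psi\rangle$ is a continuous quadratic form on the finite-dimensional space $\mathbb{C}^d$ and $|\psi_i\rangle \to |\psi^*\rangle$, I can pass the limit inside to conclude $\langle\psi^*|(\hat{O}-\omega)^2|\psi^*\rangle = 0$. I expect the main obstacle to be the uniform energy bound rather than the limit step: one must guarantee that $\langle\psi_i|\hat{H}|\psi_i\rangle$ cannot run off to $-\infty$ as the penalty weight grows, for otherwise the product $\lambda_i \langle\psi_i|(\hat{O}-\omega)^2|\psi_i\rangle$ could remain bounded without the variance vanishing. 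The assumption of finite matrix entries together with Lemma \ref{lem1} supplies exactly this $i$-independent bound, after which the squeeze and the continuity of the quadratic form are routine.
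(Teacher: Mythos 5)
Your proof is correct and takes essentially the same route as the paper's: compare the cost of each minimizer $|\psi_i\rangle$ against a fixed feasible reference state in $S$, use the trace bound of Lemma \ref{lem1} to keep the energy term under uniform control, and conclude that the penalty (variance) term must vanish as $\lambda_i \to \infty$, passing to the limit point by continuity. If anything, your write-up is slightly more careful than the paper's at the one delicate spot: you correctly identify that what is needed is a \emph{lower} bound on $\langle\psi_i|\hat{H}|\psi_i\rangle$ (obtained by applying Lemma \ref{lem1} to $-\hat{H}$), and you make the $0 \le \langle\psi_i|(\hat{O}-\omega)^2|\psi_i\rangle \le C/\lambda_i$ squeeze and the final continuity step explicit, both of which the paper leaves somewhat implicit.
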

\begin{proof}
Let us consider a state $|\psi^\prime\rangle$ $\in$  $S$, where the set $S$ is defined as
\begin{align}
    S &= \{|x\rangle |\: \hat{O}|x\rangle = \omega|x\rangle\:\: \forall\: |x\rangle \in \mathbb{C}^d\}
\end{align}

The following is  then true
\begin{align}
    \langle \psi^\prime|\hat{H}|\psi^\prime\rangle &= \langle \psi^\prime| \hat{H}  |\psi^\prime\rangle + \lambda_k\langle \psi^\prime| (\hat{O}-\omega)^2|\psi^\prime \rangle \:\:\:\:\:\:\:\: \because |\psi^\prime\rangle \in  S \label{quad_form}\\
    &= \rm{F}(\lambda_k, \hat{H}, \hat{O}, |\psi^\prime\rangle) \label{F_psi_p} \\
    & \le  \sqrt{Tr(\hat{H}^\dagger H)} \:\:\:\:\: \because (\rm{Lemma \:\ref{lem1} \:\: using \:\hat{A}= \hat{H}})  \label{HH_tr}
\end{align}
Now since the $\hat{H}$ operator is assumed to have elements which are all finite (see definitions in \ref{defn}) (1), and since $\sqrt{Tr(\hat{H}^\dagger H)}$ is a polynomial on the matrix elements of $\hat{H}$, one can say $\rm{F}(\lambda_k, \hat{H}, \hat{O}, |\psi^\prime\rangle)$ in Eq. \ref{F_psi_p} is always upper-bounded by a finite number (see Eq. \ref{F_psi_p} and Eq. \ref{HH_tr}). One should note that $\sqrt{Tr(\hat{H}^\dagger H)}$ being a trace property is invariant to the choice of basis for 
expressing the matrix elements of $\hat{H}$ and also independent of any state $|\psi^\prime\rangle$ used for computing $\langle \psi^\prime|\hat{H}|\psi^\prime\rangle$. Also $ \rm{F}(\lambda_k, \hat{H}, \hat{O}, |\psi^\prime\rangle) \ge  \rm{F}(\lambda_k,\hat{H}, \hat{O}, |\psi_k\rangle)$ as $|\psi_k\rangle = \argmin_{\psi} {\rm{F}(\lambda_k, \hat{H}, \hat{O}, |\psi\rangle)}$. This is true for any $\lambda_k$ as $|\psi_k\rangle$ is the specific minimizer of the cost function in Eq. \ref{cost_fn_supp} for that $\lambda_k$ and hence will produce a cost-function in Eq. \ref{cost_fn_supp} lower in value
than with any state $|\psi^\prime\rangle$. With this information we see
\begin{align}
 \lim_{k \to \infty}\rm{F}(\lambda_k, \hat{H}, \hat{O},  |\psi_k\rangle) \nonumber
 &\le \sqrt{Tr(\hat{H}^\dagger H)} \nonumber \\
 \lim_{k \to \infty} \langle \psi_k| \hat {H} |\psi_k \rangle + \lim_{k \to \infty} \lambda_k \langle \psi_k|(\hat{O}-\omega)^2\rangle|\psi_k\rangle &\le \sqrt{Tr(\hat{H}^\dagger H)}  \\
  \lim_{k \to \infty} \lambda_k \langle\psi_k|(\hat{O}-\omega)^2|\psi_k\rangle &\le \sqrt{Tr(\hat{H}^\dagger H)} - \lim_{k \to \infty} \langle \psi_k| \hat {H} |\psi_k\rangle  \label{eq_8}\\
  \lim_{k \to \infty} \langle\psi_k|(\hat{O}-\omega)^2|\psi_k\rangle = \langle \psi^*|(\hat{O}-\omega)^2|\psi^*\rangle &=0 \:\:\:\:\:\:\:\:\:  (\lambda_k \rightarrow \infty)   \label{eq_9}
\end{align}

where in arriving at Eq. \ref{eq_9} from Eq. \ref{eq_8} we have used the fact that $\sqrt{Tr(\hat{H}^\dagger H)}$ is finite (as per definitions \ref{defn} (1) and the fact that $\sqrt{Tr(\hat{H}^\dagger H)}$ is a polynomial on the matrix elements of $\hat{H}$) and $\langle \psi_k|\hat{H}\psi_k\rangle$ being a quadratic form is also upper-bounded using same Lemma \ref{lem1} and hence is finite. Thus the RHS of Eq. \ref{eq_8} is a finite-upper bound on the LHS. Only way then the LHS of Eq. \ref{eq_8} can thus stay finite in the limit $\lambda \rightarrow \infty$ is when $\langle \psi_k|(\hat{O}-\omega)^2|\psi_k \rangle$ is pinned to zero. Since $|\psi^*\rangle = \lim_{k \to \infty} |\psi_k\rangle$ is the convergent limit point, the result immediately follows.
\end{proof}

\begin{lemma}\label{lem3}
$\langle \psi^*|(\hat{O}-\omega)^2|\psi^*\rangle =0  $ if and only if $|\psi^*\rangle \in Null(\hat{O}-\omega)^2)$
\end{lemma}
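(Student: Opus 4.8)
The statement is a routine but essential characterization, and since $(\hat{O}-\omega)^2$ has already been established to be positive-semidefinite, the proof splits naturally into the two directions of the biconditional. The forward implication is immediate: if $|\psi^*\rangle \in Null((\hat{O}-\omega)^2)$ then $(\hat{O}-\omega)^2|\psi^*\rangle = 0$ by definition, and contracting with $\langle \psi^*|$ on the left gives $\langle \psi^*|(\hat{O}-\omega)^2|\psi^*\rangle = 0$. All the content therefore lies in the converse.

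For the converse, the plan is to exploit the fact that $(\hat{O}-\omega)^2$ factors as a product of an operator with its adjoint. Because $\hat{O}=\hat{O}^\dagger$ and $\omega \in \mathbb{R}$, the operator $(\hat{O}-\omega)$ is itself Hermitian, so one may write $(\hat{O}-\omega)^2 = (\hat{O}-\omega)^\dagger(\hat{O}-\omega)$. I would then introduce the vector $|\phi\rangle = (\hat{O}-\omega)|\psi^*\rangle$ and observe that
\begin{align}
\langle \psi^*|(\hat{O}-\omega)^2|\psi^*\rangle = \langle \psi^*|(\hat{O}-\omega)^\dagger(\hat{O}-\omega)|\psi^*\rangle = \langle \phi|\phi\rangle.
\end{align}
The hypothesis $\langle \psi^*|(\hat{O}-\omega)^2|\psi^*\rangle = 0$ thus says the squared norm $\langle\phi|\phi\rangle$ vanishes, and positive-definiteness of the inner product on $\mathbb{C}^d$ forces $|\phi\rangle = 0$, i.e. $(\hat{O}-\omega)|\psi^*\rangle = 0$. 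Applying $(\hat{O}-\omega)$ once more yields $(\hat{O}-\omega)^2|\psi^*\rangle = 0$, which is precisely the statement $|\psi^*\rangle \in Null((\hat{O}-\omega)^2)$, completing the biconditional.

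An alternative route, which I would mention as a cross-check and which dovetails with the spectrum $\vec{\eta}=[\eta_0,\dots,\eta_n]^T$ introduced in the definitions, is to diagonalize $(\hat{O}-\omega)^2$ in an orthonormal eigenbasis $\{|\eta_i\rangle\}$ and expand $|\psi^*\rangle = \sum_i c_i |\eta_i\rangle$. Then $\langle \psi^*|(\hat{O}-\omega)^2|\psi^*\rangle = \sum_i \eta_i |c_i|^2$ is a sum of non-negative terms (since $\eta_i \ge 0$), so its vanishing forces $\eta_i |c_i|^2 = 0$ for every $i$; hence $c_i = 0$ whenever $\eta_i > 0$, meaning $|\psi^*\rangle$ is supported entirely on the zero-eigenvalue eigenspace, which is exactly $Null((\hat{O}-\omega)^2)$.

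I do not anticipate a genuine obstacle here; the only point requiring care is that $(\hat{O}-\omega)^2$ is merely positive-semidefinite rather than positive-definite, so the argument must rest on the positivity of the inner product (or, equivalently, on the term-by-term non-negativity in the spectral sum) rather than on any inverse of the operator. The possibility of a degenerate zero-eigenvalue subspace is harmless, since the conclusion only asserts membership in $Null((\hat{O}-\omega)^2)$ and not uniqueness of $|\psi^*\rangle$.
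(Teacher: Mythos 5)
Your proof is correct, and for the nontrivial (``only if'') direction it takes a genuinely different route from the paper. The paper's own proof of Lemma~\ref{lem3} is exactly what you describe as your cross-check: it expands $|\psi^*\rangle$ in an eigenbasis $\{|\eta_i\rangle\}$ of $(\hat{O}-\omega)^2$, splits the expansion into components inside and outside $Null((\hat{O}-\omega)^2)$, and uses the term-by-term non-negativity $\eta_i \ge 0$ to force every coefficient outside the null space to vanish. Your primary argument instead factors $(\hat{O}-\omega)^2 = (\hat{O}-\omega)^\dagger(\hat{O}-\omega)$ (valid since $\hat{O}=\hat{O}^\dagger$ and $\omega\in\mathbb{R}$), identifies the hypothesis with the vanishing of $\langle\phi|\phi\rangle$ for $|\phi\rangle = (\hat{O}-\omega)|\psi^*\rangle$, and invokes positive-definiteness of the inner product to get $|\phi\rangle = 0$. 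This is more economical and buys something concrete: it delivers the stronger conclusion $(\hat{O}-\omega)|\psi^*\rangle = 0$, i.e.\ $|\psi^*\rangle \in Null(\hat{O}-\omega)$, directly. In the paper's chain of reasoning, that extra strength is precisely what Lemma~\ref{lem4} (the identity $Null(\hat{O}-\omega) = Null((\hat{O}-\omega)^2)$, cited to a textbook rather than proved) is later invoked to supply in the proof of the main theorem; with your factorization argument the theorem could bypass that lemma entirely, since $(\hat{O}-\omega)|\psi^*\rangle = 0$ is already the eigenvalue equation defining membership in $S$. What the paper's spectral version buys in exchange is that it works entirely with the operator $(\hat{O}-\omega)^2$ appearing in the penalty term and never needs to take a ``square root'' of the hypothesis, keeping each lemma self-contained at the cost of one extra link in the chain.
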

\begin{proof}
\underline{If-part}\\
If $|\psi^*\rangle \in Null((\hat{O}-\omega)^2)$
\begin{align}
    (\hat{O}-\omega)^2|\psi^*\rangle &=0 \:\:\:\:\:\:\:\:\:\:\: (|\psi^*\rangle \in Null(\hat{O}-\omega)^2)) \nonumber \\
    \langle \psi^*|(\hat{O}-\omega)^2|\psi^*\rangle &=0 \nonumber
\end{align}
\underline{Only If-part}\\
Let us define a set of eigenvectors of $(\hat{O}-\omega)^2$ as $\{|\eta_i\rangle\}_i$. Since the set is complete one can expand 
\begin{align}
    |\psi^*\rangle &= \sum_i \langle \eta_i|\psi^*\rangle |\eta_i\rangle  \nonumber \\
    &=\sum_{|\eta_i\rangle \in Null((\hat{O}-\omega)^2)} \langle \eta_i|\psi^*\rangle |\eta_i\rangle + \sum_{|\eta_i\rangle \not\in Null((\hat{O}-\omega)^2)} \langle \eta_i|\psi^*\rangle |\eta_i\rangle \label{psi_exp}
\end{align}
One can use Eq. \ref{psi_exp} in 
$\langle \psi^*|(\hat{O}-\omega)^2|\psi^*\rangle$ to arrive at
\begin{align}
  \langle \psi^*|(\hat{O}-\omega)^2|\psi^*\rangle &= \sum_{|\eta_i\rangle \in Null((\hat{O}-\omega)^2)} 0 |\langle \eta_i|\psi^*\rangle|^2 + \sum_{|\eta_i\rangle \not\in Null((\hat{O}-\omega)^2)} \eta_i |\langle \eta_i|\psi^*\rangle|^2 \nonumber \\
  &= 0 \:\:\:\:\: \rm{(by\:\:\: condition)}\nonumber \\
  &\implies \langle \eta_i|\psi^*\rangle =0 \:\: \forall \:\:|\eta_i\rangle \:\:\not\in Null((\hat{O}-\omega)^2)   \:\:\:\:\: (\eta_i \ge 0, \:\:see\:\: \ref{defn}(2)) \nonumber \\
  &\implies |\psi^*\rangle \in Null((\hat{O}-\omega)^2) \:\:\:\: \because\:\:\rm{Using}\:\:Eq. \ref{psi_exp}
\end{align}
\end{proof}

\begin{lemma}\label{lem4}
$Null(\hat{O}-\omega)) = Null (\hat{O}-\omega)^2)$
\end{lemma}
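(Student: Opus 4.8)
The plan is to reduce the statement to the elementary fact that a Hermitian operator and its square share the same kernel. First I would set $\hat{A} = \hat{O}-\omega$ and record that $\hat{A}$ is Hermitian: since $\hat{O}=\hat{O}^\dagger$ and $\omega\in\mathbb{R}$ (it is an eigenvalue of the Hermitian operator $\hat{O}$, hence real), we have $\hat{A}^\dagger = \hat{O}^\dagger - \omega = \hat{O}-\omega = \hat{A}$. With this the claim becomes $\mathrm{Null}(\hat{A}) = \mathrm{Null}(\hat{A}^2)$, and the only structural ingredient needed is the identity $\hat{A}^\dagger\hat{A} = \hat{A}^2$.

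The forward inclusion $\mathrm{Null}(\hat{A}) \subseteq \mathrm{Null}(\hat{A}^2)$ is immediate and holds for any operator: if $\hat{A}|x\rangle = 0$ then $\hat{A}^2|x\rangle = \hat{A}(\hat{A}|x\rangle) = 0$. The reverse inclusion is where Hermiticity does the work. Given $|x\rangle\in\mathrm{Null}(\hat{A}^2)$, I would evaluate the expectation $\langle x|\hat{A}^2|x\rangle = \langle x|\hat{A}^\dagger\hat{A}|x\rangle = \|\hat{A}|x\rangle\|^2$. The left-hand side vanishes because $\hat{A}^2|x\rangle = 0$, so $\|\hat{A}|x\rangle\|^2 = 0$ forces $\hat{A}|x\rangle = 0$, i.e. $|x\rangle\in\mathrm{Null}(\hat{A})$. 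Combining the two inclusions yields the asserted equality.

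As an alternative route that dovetails with Lemma \ref{lem3}, I could argue spectrally: expand $|x\rangle$ in an orthonormal eigenbasis $\{|\eta_i\rangle\}$ of the Hermitian $\hat{A}$ with real eigenvalues $a_i$; then $\hat{A}^2$ is diagonal in the same basis with eigenvalues $a_i^2$, and since $a_i^2 = 0 \iff a_i = 0$ for real $a_i$, the two operators annihilate exactly the same eigenvectors and therefore have the same null space. This is precisely the observation already used in Lemma \ref{lem3}, where the non-negativity of the spectrum $\eta_i$ of $(\hat{O}-\omega)^2$ encodes $\eta_i = a_i^2$.

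I do not expect any genuine obstacle here — the lemma is elementary once one notes that $\hat{A}=\hat{O}-\omega$ is Hermitian. The single point worth stating carefully is that $\omega$ is real, so that subtracting it preserves Hermiticity and the identity $\hat{A}^\dagger\hat{A}=\hat{A}^2$ is legitimate; without it the norm argument would break down. With this lemma in place, Lemmas \ref{lem2}--\ref{lem4} chain together to give $|\psi^*\rangle\in\mathrm{Null}(\hat{O}-\omega)$, which is exactly the conclusion $|\psi^*\rangle\in S$ of Theorem \ref{main_thm}.
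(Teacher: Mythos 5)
Your proof is correct. Note, though, that the paper does not actually prove this lemma at all: its ``proof'' is the single remark that the statement is trivial, together with a citation to Axler's \emph{Linear Algebra Done Right}. Your argument is precisely the standard one that citation points to, so in effect you have supplied the details the paper deferred: the inclusion $\mathrm{Null}(\hat{O}-\omega)\subseteq \mathrm{Null}((\hat{O}-\omega)^2)$ holds for any operator, while the reverse inclusion uses Hermiticity via $\langle x|(\hat{O}-\omega)^2|x\rangle = \| (\hat{O}-\omega)|x\rangle \|^2$, which vanishes exactly when $(\hat{O}-\omega)|x\rangle=0$. You are also right to flag that the reality of $\omega$ is the one hypothesis doing real work (it is what makes $\hat{O}-\omega$ Hermitian); this is implicit in the paper's definitions, where $\omega$ is an eigenvalue of the Hermitian operator $\hat{O}$. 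Your alternative spectral argument is equally valid and dovetails with the expansion already used in Lemma \ref{lem3}, so either route closes the gap the paper left to the reference.
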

\begin{proof}
This is actually trivial to show. For proof one can see \cite{Axler_s}
\end{proof}

\subsection{Theorem 2.1 in main text}
Then using the Lemmas above the following is true.

\begin{theorem}
Let $\{\lambda_i\}_{i=1}^{\infty}$ be a sequence in the penalty parameter such that $\lambda_1 \le \lambda_2 \le \lambda_3..... \lambda_{\infty} \rightarrow \infty$
Also let $P=\{|\psi_i\rangle\}_{i=1}^{\infty}$ such that $\forall$ $|\psi_i\rangle$ $\in$ $P$ the following is true.
\begin{align}
    |\psi_i\rangle &= \argmin_{\psi} {\rm{F}(\lambda_i, \hat{H}, \hat{O}, |\psi\rangle)}
\end{align}
In other words $P$ is the set of minimizers for Eq. \ref{cost_fn_supp} for each penalty parameter $\lambda \in \{\lambda_i\}_{i=1}^{\infty}$
If $|\psi^*\rangle \in P$  is a limit-point of the convergent sequence $\{\psi_i\}_{i=1}^\infty$ in $P$ i.e $|\psi^*\rangle = \lim_{i \to \infty} |\psi_i\rangle$ then $|\psi^*\rangle$ $\in$ $S$ (defined in lemma \ref{lem2})
\end{theorem}

\begin{proof}
If $|\psi^*\rangle$ is a limit-point of the convergent sequence $\{\psi_i\}^\infty$ in $P$ then
\begin{align}
  \langle \psi^*|(\hat{O}-\omega)^2  |\psi^*\rangle &= 0 \:\:\:\:\:\:\because \rm{see\:\:Lemma\:\:} \ref{lem2}\nonumber \\
  &\implies |\psi^*\rangle \in Null((\hat{O}-\omega)^2) \:\:\:\:\:\:\because \rm{see\:\:Lemma\:\:} \ref{lem3}\nonumber \\
  &\implies |\psi^*\rangle \in Null((\hat{O}-\omega)) \:\:\:\:\:\:\because \rm{see\:\:Lemma\:\:} \ref{lem4}\nonumber \\
  &\implies |\psi^*\rangle \in S
\end{align}
\end{proof}

\section{Deduction of a generic lower bound for successful sampling and characterization of k-parameter}

After all the single qubit $R_y$ rotations (parameterized by the bias vectors of the network $(\vec{a}, \vec{b})$) and Controlled-Controlled Rotations ($C-C-R_y$) targeting the ancillas (parameterized by the interconnecting weights $\vec{W}$ between visible and hidden neurons), the state-vector $|\psi_{v,h,a}\rangle$ of the full set of $(m+n+m\times n)$ qubits is
\begin{align}
   |\psi_{v,h,a}\rangle &= \sum_{(\vec{\sigma}, \vec{h})} \sqrt{O(\vec{\sigma}, \vec{h}, \vec{a}, \vec{b})} |\vec{\sigma} \vec{h}\rangle_{vh} \otimes (\sqrt{(1-\eta(\vec{W},\vec{\sigma}, \vec{h})}|\vec{0}\rangle_a + \sqrt{\eta(\vec{W},\vec{\sigma}, \vec{h})}|\vec{1}\rangle_a ) \label{state_all}
\end{align}
where the following definitions is used.
\begin{enumerate}
    \item $|\psi_{v,h,a}\rangle$ is the combined state of the visible node qubits (abbreviated by subscript v), hidden node qubits (abbreviated by subscript h) and ancilla register (abbreviated by subscript a)
    
    \item $(\vec{\sigma}, \vec{h})$ denotes a sum over the $2^{m+n}$ bit strings where each $\{\sigma_i\}$ or  $\{h_j\} \in \{1,-1\}$
    
    \item $|\vec{\sigma}\vec{h}\rangle_{vh}$ is the $2^{m+n}$ dimensional state space of $n$ visible node qubits and $m$ hidden node qubits. Note the state $|0\rangle_v$ corresponds to $\sigma_i = -1$ as mentioned in the main text. Similar statement holds for $h_j=-1$ and $|0\rangle_h$ too.
    
    \item The distribution $O(\vec{\sigma}, \vec{h}, \vec{a}, \vec{b})$ is
    \begin{align}
        O(\vec{\sigma}, \vec{h}, \vec{a}, \vec{b}) &= \frac{e^{\frac{1}{k}(\sum_{i}a_i\sigma_i + \sum_{j}b_j h_j)}}{\sum_{\vec{\sigma}, \vec{h}}e^{\frac{1}{k}(\sum_{i}a_i\sigma_i + \sum_{j}b_j h_j)}} \label{O_dist}
    \end{align}
    \item The distribution $\eta(\vec{W},\vec{\sigma}, \vec{h})$ is
       \begin{align}
        \eta(\vec{W},\vec{\sigma}, \vec{h}) = \frac{e^{\frac{1}{k}(\sum_{i,j}w_{ij}\sigma_ih_j)}}{e^{\frac{1}{k}\sum_{i,j}|w_{ij}|}}
    \end{align}
    
    \item $|\vec{0}\rangle_a$ and $|\vec{1}\rangle_a$ are the states of the $m\times n$ ancilla qubits (abbreviated as superscript a)
\end{enumerate}

Thus from Eq. \ref{state_all} we see that when all the qubits are measured, the probability of selecting a bit string  $(\vec{\sigma}, \vec{h})$ \textbf{and} collapsing the ancilla qubits in state $|\vec{1}\rangle_a$ (only such states are important to this work as they are post-selected after measurement).

\begin{align}
    H((\vec{\sigma}, \vec{h}) \cap \vec{1}) &= O(\vec{\sigma}, \vec{h}, \vec{a}, \vec{b})\eta(\vec{W},\vec{\sigma}, \vec{h}) \nonumber \\
    &=\frac{e^{\frac{1}{k}(\sum_{i}a_i\sigma_i + \sum_{j}b_j h_j)}}{\sum_{{\vec{\sigma}, \vec{h}}}e^{\frac{1}{k}(\sum_{i}a_i\sigma_i + \sum_{j}b_j h_j)}} \times  \frac{e^{\frac{1}{k}(\sum_{i,j}w_{ij}\sigma_ih_j)}}{e^{\frac{1}{k}\sum_{i,j}|w_{ij}|}} \label{dist_H}
\end{align}
Now successful sampling would be an event wherein all ancilla would collapse to $|\vec{1}\rangle_a$. The probability of such events (denoted as $P_{success} = P(\vec{1}_a)$) irrespective of $(\vec{\sigma}, \vec{h})$ string selected can be obtained by marginalizing $H((\vec{\sigma}, \vec{h}) \cap \vec{1})$ \textbf{over all} bit-strings as follows: \newpage

\begin{align}
    P_{success} = P(\vec{1}_a) &= \sum_{\vec{\sigma},\vec{h}} H((\vec{\sigma}, \vec{h}) \cap \vec{1}) \nonumber \\
    & = \sum_{{\vec{\sigma}, \vec{h}}}\frac{e^{\frac{1}{k}(\sum_{i}a_i\sigma_i + \sum_{j}b_j h_j)}}{\sum_{{\sigma, h}}e^{\frac{1}{k}(\sum_{i}a_i\sigma_i + \sum_{j}b_j h_j)}} \times 
    \frac{e^{\frac{1}{k}(\sum_{i,j}w_{ij}\sigma_ih_j)}}{e^{\frac{1}{k}\sum_{i,j}|w_{ij}|}}  \label{P_act}\\
    & = \frac{\langle e^{\frac{1}{k}(\sum_{i,j}w_{ij}\sigma_ih_j)} \rangle _{O(\vec{\sigma}, \vec{h}, \vec{a}, \vec{b})}}{e^{\frac{1}{k}\sum_{i,j}|w_{ij}|}} \nonumber \\
    & \ge \frac{ e^{\frac{1}{k}(\sum_{i,j}w_{ij} \langle \sigma_ih_j\rangle _{O(\vec{\sigma}, \vec{h}, \vec{a}, \vec{b})} )} }{e^{\frac{1}{k}\sum_{i,j}|w_{ij}|}} \:\:\:\:\:\: \because (\rm{Jensen's \:\: inequality}) \label{P_succ} \\\nonumber\\
    \langle \sigma_ih_j\rangle _{O(\vec{\sigma}, \vec{h}, \vec{a}, \vec{b})} &= \sum_{\vec{\sigma}, \vec{h}} O(\vec{\sigma}, \vec{h}, \vec{a}, \vec{b}) \sigma_ih_j\nonumber \\
    &= \sum_{\vec{\sigma}} O_1(\vec{\sigma}, \vec{a}) \sigma_i\sum_{\vec{h}} h_j O_2(\vec{h}, \vec{b}) \:\:\:\:\:\:\:\: \because(Eq. \ref{O_dist}\:\: O(\vec{\sigma}, \vec{h}, \vec{a}, \vec{b}) = O_1(\vec{\sigma}, \vec{a})O_2(\vec{h}, \vec{b}) \nonumber \\
    &= \sum_{\vec{\sigma}} \prod_m O_1(\sigma_m, a_m) \sigma_i\sum_{\vec{h}} h_j \prod_p O_2(h_p, b_p) \:\:\:\:\:\: (\because \rm{no\:\: intralayer\:\: connections\:\: as\:\: RBM} ) \nonumber \\
    &= \sum_{\sigma_i \in \{1,-1\}} O_1(\sigma_i, a_i) \sigma_i\sum_{h_j \in \{1,-1\}} h_j O_2(h_j, b_j) \nonumber \\
    &= (\frac{e^{a_i/k}-e^{-a_i/k}}{e^{a_i/k}+e^{-a_i/k}})(\frac{e^{b_j/k}-e^{-b_j/k}}{e^{b_j/k}+e^{-b_j/k}}) \nonumber \\
    &= \tanh{(a_i/k)}\tanh{(b_j/k)} \label{tanh_ab}
\end{align}
where we have used $\langle ...\rangle_{O(\vec{\sigma}, \vec{h}, \vec{a}, \vec{b})}$ to denote an average over the distribution $O(\vec{\sigma}, \vec{h}, \vec{a}, \vec{b})$ defined in Eq. \ref{O_dist}. Using Eq. \ref{tanh_ab} in Eq. \ref{P_succ} we thus get
\begin{align}
    P_{success} &= P(\vec{1}_a) \nonumber \\
    &\ge \frac{ e^{\frac{1}{k}(\sum_{i,j}w_{ij} \langle \sigma_ih_j\rangle _{O(\vec{\sigma}, \vec{h}, \vec{a}, \vec{b})} )} }{e^{\frac{1}{k}\sum_{i,j}|w_{ij}|}} \nonumber \\
   \Aboxed{P_{success} &\ge \frac{ e^{\frac{1}{k}(\sum_{i,j}w_{ij} \tanh{(a_i/k)}\tanh{(b_j/k)}}}{e^{\frac{1}{k}\sum_{i,j}|w_{ij}|}}} \:\:\:\:\:\:\: \because Eq. \ref{tanh_ab}\label{low_bnd}
\end{align}

The above lower bound in Eq. \ref{low_bnd} is a generic lower bound independent of any occurrence of random variables $(\vec{\sigma}, \vec{h})$ and only dependant on the parameters of the network $(\vec{a}, \vec{b}, \vec{W}, k)$. {\color{black}In the plot below, we simulate the performance of the lower bound deduced in Eq. \ref{low_bnd} against the actual probability using \textbf{`RBM-qasm'} . In the plot, we designate the R.H.S of Eq. \ref{low_bnd} as $\rm{P}_{lb}$ signifying lower bound and the actual probability of the event (L.H.S of Eq. \ref{low_bnd}) as $P_{success}$ as used before}.

\begin{figure}[!htb]
    \centering
    \includegraphics[width=0.7\textwidth]{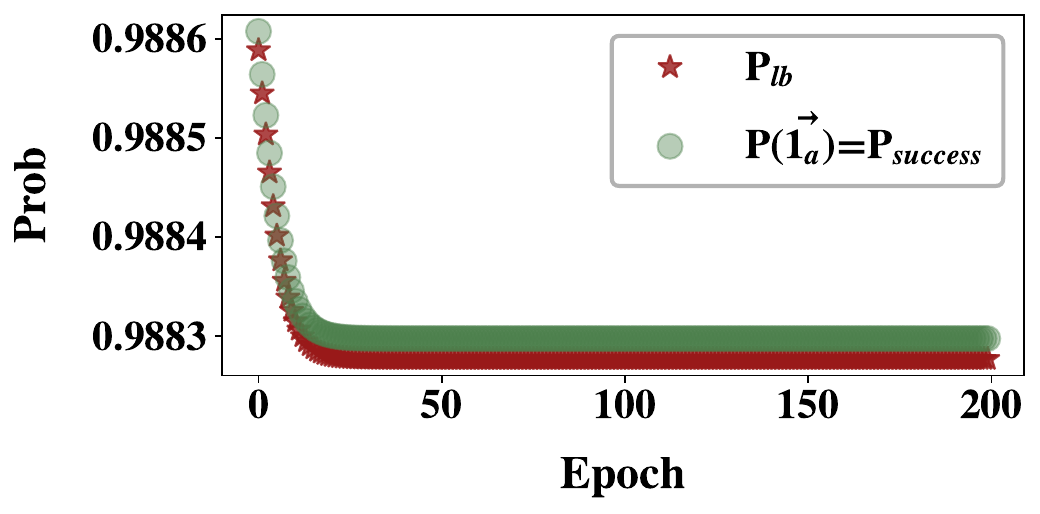}
    \caption{{\color{black} The lower bound for probability of successfully collapsing the ancilla register in state $|\vec{1_a}\rangle$ as deduced in Eq. \ref{low_bnd} (R.H.S of Eq. \ref{low_bnd}) is plotted in red and the actual probability of such an event (Eq. \ref{P_act} or L.H.S of Eq. \ref{low_bnd}) is plotted in green as a function of training iteration. We see that the green curve is always slightly above the red one in accordance with Eq. \ref{low_bnd} deduced above. The simulation is from the \textbf{`RBM-qasm'} variant for conduction band (CB) of $\rm{MoS}_2$ at $(k_x, k_y)=K$ symmetry point which is where the direct band-gap is lowest. The simulation is performed by warm starting with  initial parameter set from a converged run at a nearby k-point to ensure faster and accurate convergence. With the said warm start the desired accuracy of $\le 10^{-4}$ eV in energy error was reached within 200 iterations. The high value of the probability of the successful event (and the associated lower bound) as seen from the y-scale is problem-specific as it entirely depends on $(\vec{a}, \vec{b}, \vec{W}, k)$ (see Eq. \ref{P_act} and Eq. \ref{low_bnd}) . For these systems, even with a  moderate $k$ parameter i.e. $k \le 1.5$ for all iterations, the remaining set $(\vec{a}, \vec{b}, \vec{W})$ are such that a high value of $P_{success}$ as seen on the y-scale is attained. The parameter set $(\vec{a}, \vec{b}, \vec{W})$ depends on the updates from the cost-function and hence on the Hamiltonian of the system being treated. For this choice of $k$ parameter, the specific values of the y-scale in the plot is thus characteristic of the systems   being studied in this report and may be different for other systems. However the lower bound deduced in Eq. \ref{low_bnd} is mathematically generic and should be valid for any arbitrary system and a given $k$ parameter, even though the specific value it acquires during training may vary}}
    \label{Fig:Plb_vs_Psucc}
\end{figure}

From this lower bound, all pre-existing known bounds can be recovered as we shall see next.

\textbf{Limiting cases}
\begin{itemize}
\item \underline {$\tanh{(a_i/k)} \to \pm 1 , \tanh{(b_j/k)} \to \pm 1$}

In this case we get from Eq. \ref{low_bnd}
\begin{align}
    P_{success} &= P(\vec{1}_a) \nonumber \\
    &\ge \frac{ e^{\frac{1}{k}(\sum_{i,j}w_{ij} \tanh{(a_i/k)}\tanh{(b_j/k)}}}{e^{\frac{1}{k}\sum_{i,j}|w_{ij}|}} \nonumber \\
    &\ge \frac{ e^{-\frac{1}{k}|(\sum_{i,j}w_{ij} \tanh{(a_i/k)}\tanh{(b_j/k)}|}}{e^{\frac{1}{k}\sum_{i,j}|w_{ij}|}} \nonumber \:\:\:\:\: \because e^{\alpha x} \ge e^{-\alpha|x|} \:\:\forall x \in \mathbb{R}\:\:, \alpha \ge 0) \\
    &\ge \frac{ e^{-\frac{1}{k}(\sum_{i,j}|w_{ij} \tanh{(a_i/k)}\tanh{(b_j/k)}|}}{e^{\frac{1}{k}\sum_{i,j}|w_{ij}|}} \nonumber \:\:\:\:\: \because \rm{Triangle \:\: Inequality} \\
    &\ge \frac{ e^{-\frac{1}{k}(\sum_{i,j}|w_{ij}| |\tanh{(a_i/k)}||\tanh{(b_j/k)}|}}{e^{\frac{1}{k}\sum_{i,j}|w_{ij}|}} \nonumber  \\
    &\ge \frac{ e^{-\frac{1}{k}(\sum_{i,j}|w_{ij}| )}}{e^{\frac{1}{k}\sum_{i,j}|w_{ij}|}} \:\:\:\:\:\:\: \because |\tanh{(a_i/k)}| = |\tanh{(b_j/k)}|=1 \nonumber \\
    &\ge \frac{1}{e^{\frac{2}{k}(\sum_{ij}|w_{ij}|)}} \label{rongxin_bnd} 
\end{align}
Thus choosing $k = \max(\sum_{ij}\frac{|w_{ij}|}{2}, 1)$, the lower bound attained in Eq. \ref{rongxin_bnd} for the probability for successful sampling becomes a constant value of $e^{-4}$. This bound was deduced in \cite{Xia2018_s} in a completely different manner. Here we derived a master bound from which this is recovered. \newpage

\item \underline {$\tanh{(a_i/k)} \to (a_i/k) , \tanh{(b_j/k)} \to (b_j/k)$}

Using similar kind of reasoning as in the previous case one can show
\begin{align}
P_{success} &= P(\vec{1}_a) \nonumber \\
&\ge \frac{ e^{-\frac{1}{k}(\sum_{i,j}|w_{ij}| |\tanh{(a_i/k)}||\tanh{(b_j/k)}|}}{e^{\frac{1}{k}\sum_{i,j}|w_{ij}|}} \nonumber  \\
&\ge \frac{ e^{-\frac{1}{k}(\sum_{i,j}|w_{ij}| |(a_i/k)||(b_j/k)|}}{e^{\frac{1}{k}\sum_{i,j}|w_{ij}|}} \nonumber
\end{align}
Here one cannot do much to proceed unless some assumptions are made 
.We assume that $|a_i| \le a_0 \:\:\forall\:\: i$ and $|b_j| \le b_0 \:\:\forall\:\: j$. Then we get
\begin{align}
P_{success} &= P(\vec{1}_a) \nonumber \\
&\ge \frac{ e^{-\frac{1}{k}(\sum_{i,j}|w_{ij}| |(a_i/k)||(b_j/k)|}}{e^{\frac{1}{k}\sum_{i,j}|w_{ij}|}} \nonumber  \\
& \ge \frac{ e^{-\frac{a_0b_0}{k^3}(\sum_{i,j}|w_{ij}| }}{e^{\frac{1}{k}\sum_{i,j}|w_{ij}|}} \nonumber  \\
& \ge \frac{1}{e^{(\frac{1}{k}+\frac{a_0b_0}{k^3})(\sum_{i,j}|w_{ij}|}} \label{new_bnd}
\end{align}
One can then numerically choose $k$ to make the bound in this limit in Eq. \ref{new_bnd} a constant value greater than a user-defined real number.\newpage
\end{itemize}

\section{Hamiltonian for $\rm{MoS}_2$ and $\rm{WS}_2$}

For the Hamiltonian matrix used in this report for $\rm{MoS}_2$ and $\rm{WS}_2$, we use the 3-band third-nearest neighbor tight-binding description as adopted in Ref.\cite{Liu2013_s}. The Hamiltonian matrix elements are provided below for brevity and completeness.

The Hamiltonian matrix ($\hat{H}$) being a $3\times 3$ description is written as 
\begin{align}
\hat{H} &=
\begin{bmatrix}
H_{11} & H_{12} & H_{13} \\
H_{21} & H_{22} & H_{23} \\
H_{31} & H_{32} & H_{33} \\
\end{bmatrix}
\end{align}

Since the Hamiltonian is hermitian ($\hat{H} = \hat{H}^\dagger$), the only unique elements are the upper triangular block. Each such element is described below. For each of the elements we use the symbol $1.0i = \sqrt{-1} $ to denote the imaginary components. Also $a_0$ is the lattice constant which is 3.190 \AA\: for $\rm{MoS}_2$ and 3.191 \AA\: for $\rm{WS}_2$ \cite{Liu2013_s}

\begin{align}
    H_{11} &= \epsilon_1 + 2t_0(2\cos({\frac{k_x a_0}{2}})\cos({\frac{\sqrt{3}k_y a_0}{2}}) + \cos({k_x a_0}))+ 2r_0(2\cos({\frac{3k_x a_0}{2}})\cos({\frac{\sqrt{3}k_y a_0}{2}}) + \cos({\sqrt{3}k_y a_0}))  \nonumber \\ &+ 2u_0(2\cos({k_x a_0})\cos({\sqrt{3}k_y a_0}) + \cos({2k_x a_0})) \nonumber \\ \nonumber \\
 H_{12} &= -2\sqrt{3}t_2\sin({\frac{k_x a_0}{2}})\sin({\frac{\sqrt{3}k_y a_0}{2}})+ 2(r_1 + r_2)\sin({\frac{3k_x a_0}{2}})\sin({\frac{\sqrt{3}k_y a_0}{2}}) \nonumber \\ &-2\sqrt{3}u_2\sin({k_x a_0})\sin({\sqrt{3}k_y a_0}) +  1.0i*2t_1\sin({\frac{k_x a_0}{2}})(2\cos({\frac{k_x a_0}{2}})+\cos({\frac{\sqrt{3}k_y a_0}{2}})) \nonumber\\ &+ 2(r_1-r_2)(\sin({\frac{3k_x a_0}{2}})\cos({\frac{\sqrt{3}k_y a_0}{2}})) + 2u_1 \sin({k_x a_0})(2\cos({k_x a_0})+\cos({\sqrt{3}k_y a_0})) \nonumber \\ \nonumber \\
 H_{13} &= 2t_2(\cos({k_x a_0})-\cos({\frac{\sqrt{3}k_y a_0}{2}})\cos({\frac{k_x a_0}{2}})) - \frac{2}{\sqrt{3}}(r_1 + r_2)(\cos({\frac{3k_x a_0}{2}})\cos({\frac{\sqrt{3}k_y a_0}{2}}) -\cos({\sqrt{3}k_y a_0}))\nonumber \\
 &+ 2u_2(\cos({2k_x a_0}) - \cos({k_x a_0})\cos({\sqrt{3}k_y a_0})) + 1.0i*2\sqrt{3}t_1\cos({\frac{k_x a_0}{2}})\sin({\frac{\sqrt{3}k_y a_0}{2}})\nonumber \\ &+\frac{2}{\sqrt{3}}(r_1 - r_2)\sin({\frac{\sqrt{3}k_y a_0}{2}})(\cos({\frac{3k_x a_0}{2}}) + 2\cos({\frac{\sqrt{3}k_y a_0}{2}})) + 2\sqrt{3}u_1\cos({k_x a_0})\sin({\sqrt{3}k_y a_0}) \nonumber \\ \nonumber \\
 H_{22} &= \epsilon_2 + (t_{11} +3t_{22})\cos({\frac{k_x a_0}{2}})\cos({\frac{\sqrt{3}k_y a_0}{2}})) + 2t_{11}\cos({k_x a_0}) +
 4r_{11}\cos({\frac{3k_x a_0}{2}})\cos({\frac{\sqrt{3}k_y a_0}{2}}) \nonumber \\ &+ 2(r_{11} + \sqrt{3}r_{12})\cos({\sqrt{3}k_y a_0}) + (u_{11} + 3u_{22})\cos({k_x a_0})\cos({\sqrt{3}k_y a_0}) + 2u_{11}\cos({2k_x a_0}) \nonumber \\ \nonumber \\
 H_{23} &= \sqrt{3}(t_{22} - t_{11})\sin({k_x a_0})\sin({\sqrt{3}k_y a_0}) + 4r_{12}\sin({\frac{3k_x a_0}{2}})\sin({\frac{\sqrt{3}k_y a_0}{2}}) \nonumber \\ &+ \sqrt{3}(u_{22} - u_{11})\sin({k_x a_0})\sin({\sqrt{3}k_y a_0}) + 1.0i*4t_{12}\sin({\frac{k_x a_0}{2}})(\cos({\frac{k_x a_0}{2}}) - \cos({\frac{\sqrt{3}k_y a_0}{2}})) \nonumber \\
 &+ 4u_{12}\sin({k_x a_0})(\cos({k_x a_0}) - \cos({\sqrt{3}k_y a_0}))\nonumber \\ \nonumber \\
 H_{33} &= \epsilon_2 + (3t_{11} +t_{22})\cos({\frac{k_x a_0}{2}})\cos({\frac{\sqrt{3}k_y a_0}{2}}) + 2t_{22}\cos({k_x a_0}) +
 2r_{11}(2\cos({\frac{3k_x a_0}{2}})\cos({\frac{\sqrt{3}k_y a_0}{2}}) \nonumber \\ &+
 \cos({\sqrt{3}k_y a_0})) + 
 \frac{2}{\sqrt{3}}r_{12} (4\cos({\frac{3k_x a_0}{2}})\cos({\frac{\sqrt{3}k_y a_0}{2}}) - \cos({\sqrt{3}k_y a_0})) \nonumber \\ &+ (3u_{11} + u_{22})\cos({k_x a_0})\cos({\sqrt{3}k_y a_0}) + 2u_{22}\cos({2k_x a_0}) 
 \end{align}

The energy parameter set\cite{Liu2013_s} for both the systems $\rm{MoS}_2$ and $\rm{WS}_2$ is tabulated below\\

\begin{tabular}{ |p{3cm}||p{3cm}|p{3cm}|}
 \hline
 \multicolumn{3}{|c|}{Parameter List for three-band model from GGA calculations} \\
 \hline
 Parameter(eV) & $\rm{MoS}_2$ & $\rm{WS}_2$\\
 \hline
 $\epsilon_1$   & 0.683    & 0.717 \\
 $\epsilon_2$ & 1.707  & 1.916\\
 $t_0$ & -0.146 & -0.152\\
 $t_1$ & -0.114 & -0.097\\
 $t_2$ &   0.506  & 0.590\\
 $t_{11}$& 0.085  & 0.047\\
 $t_{12}$& 0.162 & 0.178\\
 $t_{22}$& 0.073 & 0.016\\
 $r_{0}$& 0.060 & 0.069\\
 $r_{1}$& -0.236 & -0.261\\
 $r_{2}$& 0.067 & 0.107\\
 $r_{11}$& 0.016 & -0.003\\
 $r_{12}$& 0.087 & 0.109\\
 $u_{0}$& -0.038 & -0.054\\
 $u_{1}$& 0.046 & 0.045\\
 $u_{2}$& 0.001 & 0.002\\
 $u_{11}$& 0.266 & 0.325\\
 $u_{12}$& -0.176 & -0.206\\
 $u_{22}$& -0.150 & -0.163\\
 \hline
\end{tabular}
\newpage

\section{Importance of Measurement Error Mitigation}\label{Mem_imp}

{\color{black}In this section we simulate the performance of the algorithm using the \textbf{`RBM-IBMQ'} variant
with and without the use of Measurement Error Mitigation(MEM) \cite{Barron2020_s}. In 
Fig. \ref{Fig:MEM_vs_NOMEM}(a) we plot the final energy error after the training process for four 4 arbitrarily chosen $(k_x, k_y)$ points on the conduction band (CB) of $\rm{MoS}_2$. These points are not coincident with the symmetry points of the system as symmetry points converges better regardless. We also plot in Fig. \ref{Fig:MEM_vs_NOMEM}(b)-(e) the change in the energy error during the training iterations/epoch for each of the four points marked as (1),(2),(3),(4). These points are marked on the x-axis in Fig. \ref{Fig:MEM_vs_NOMEM}(a). From each of the plots in Fig. \ref{Fig:MEM_vs_NOMEM}(b)-(e) we see that the the red curve (with MEM) converges smoothly whereas the green dots (without MEM) displays noisy oscillations leading to poor self-convergence. To account for this statistical uncertainty, the last 30 points from each of the curves Fig. \ref{Fig:MEM_vs_NOMEM}(b)-(e) is time averaged and the results constitutes the points in Fig. \ref{Fig:MEM_vs_NOMEM}(a). The error bars on each points for the results without MEM are the sample standard deviation of these 30 points and the points themselves are sample mean. For the results with MEM since the convergence is smooth the corresponding error bars over the last 30 iterations are an order of magnitude lesser than the sample mean and hence not displayed. All of the simulations are performed in IBM-Sydney in a single run and followed till 150 iterations. Each of the 4 points are warm-started with parameter set from nearby $k$-points in the \textbf{`RBM-qasm'} variant to hasten convergence . Also in each case, the operator $\hat{O}=|v_0\rangle \langle v_0| $ is the corresponding ground state from \textbf{`RBM-qasm'} variant. This together with the fact that the same initial parameter set is used for both the runs with and without MEM eliminates errors due to faulty initialization and erroneous construction of the operator $\hat{O}$ (due to ground state infidelity) and focuses only on the errors introduced in the algorithm in the presence and absence of MEM . We see that for some of the points in the main manuscript the energy errors in the CB for $\rm{MoS}_2$ in the \textbf{`RBM-IBMQ'} variant were higher due to infidelity in the corresponding ground state which adversely affects the operator $\hat{O}$. Thus the usage of MEM, appropriate warm starting and the fact that the probability of successfully sampling of the quantum circuit for these systems is naturally very favorable (see section \ref{Mes_stats}) explains the superior quality of the results on the actual IBM hardware for all systems studied in this report. Among other factors, quality of results on the hardware can also be affected by sparsity of the $H$ matrix for a given architecture of the network with $(n,m)$ even though that point has not been investigated much in this report. Also it is generally assumed that hybrid variational algorithms like ours are resilient to certain noises\cite{Alan_noise_s} which may be playing a role as well.}


\begin{figure}[!htb]
    \centering
    \includegraphics[width=1.0\textwidth]{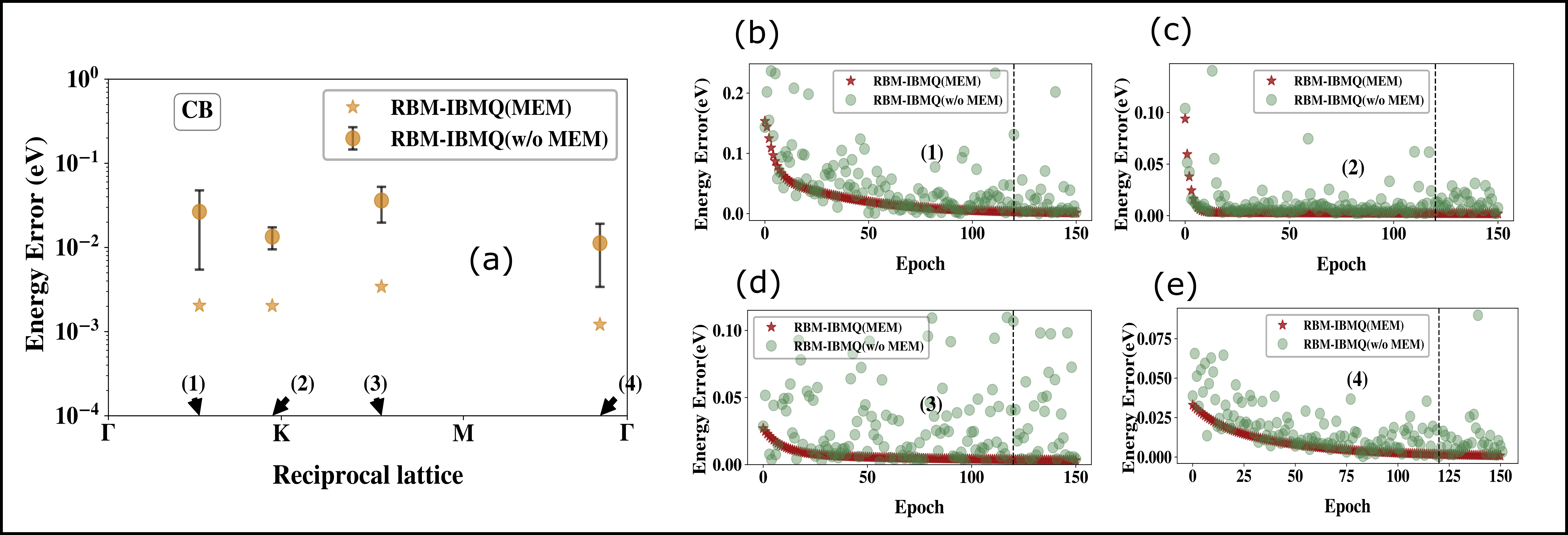}
    \caption{{\color{black} The energy error (eV) for 4 arbitrarily chosen $(k_x, k_y)$ points within the Brillouin zone in the $\Gamma-K-M-\Gamma$ path of the conduction band (CB) of $\rm{MoS}_2$ after training the network using the \textbf{`RBM-IBMQ'} variant with and without Measurement Error Mitigation (MEM). Each of the 4 points is marked on the x-axis as (1),(2),(3),(4). We also plot the energy error as a function of the training epoch/iteration with and without MEM in (b) for point index (1) in (c) for point index (2) (d) for point index (3)
    (e) for point index (4).
    We see from (a) that the results with MEM are of higher accuracy for all 4 points than without MEM. However by far the greatest impact which MEM has on the results is on improving self-convergence. This is best seen from (b)-(d). The error bars in (a) on the points without MEM are to highlight the statistical uncertainty due to time averaging from this poor self-convergence. Each such bar designates the sample standard deviation of the last 30 points (marked in (b)-(d) with a vertical dashed line) in the training process whereas the orange circles in (a) are the corresponding sample mean. Each calculation with and without MEM is done using a single run on IBM-Sydney and followed till 150 iterations. 
    All simulations are performed by warm starting with initial parameter set from a converged run at a nearby k-point in \textbf{`RBM-qasm'} variant. This is done so that same initial parameter set is used for simulations with and without MEM which eliminates biases due to random initial parameterization and affords a strictly fair comparison. The 4 points chosen are not the symmetry points as we have seen that symmetry points usually converges better regardless}}
    \label{Fig:MEM_vs_NOMEM}
\end{figure}
\newpage

\section{Measurement statistics} \label{Mes_stats}

{\color{black} For all the systems studied in the manuscript, while training the network using the quantum circuit we use $10^6$ measurement shots for the \textbf{`RBM-qasm'} variant. For the \textbf{`RBM-IBMQ'} variant (IBM-Sydney and IBM-Toronto) we use 8192 measurement shots which happens to be the maximum allowed value. We show in this section why the said number of measurement counts are adequate for all the systems we study in this report.

We simulate the statistics of the measurement using \textbf{`RBM-qasm'} in Fig. \ref{Fig:Pmeas_stat} wherein the probability of successfully collapsing the ancilla register in $|\vec{1_a}\rangle$ is plotted both in the ideal case from Eq. \ref{P_act} (see Fig. \ref{Fig:Pmeas_stat}(a)) and from the frequency distribution of the actual quantum measurement  (see Fig. \ref{Fig:Pmeas_stat}(b)). The total number of measurement count used is $10^6$. We see that for these systems, the favorable event is nearly exclusive indicating that in only less than 2\% of the measurements the ancilla register collapsed into the unwanted state $|\vec{0_a}\rangle$. The latter events are discarded when constructing the RBM distribution from the measurement statistics (see Eq. \ref{state_all}). As a result nearly $9.8\times10^5$ samples are available to faithfully construct the target distribution. Each such sample constitutes a 
$(\vec{\sigma}, \vec{h})$ pair. The frequency distribution of such samples is the simulated RBM distribution ($P_{meas}$) as obtained from the quantum circuit. The fact that this distribution agrees with the actual one ($P_{RBM}$) is quantified using the KL divergence between the two which is defined as 
\begin{eqnarray}
KL\:\:div = \langle -\rm{log} (\frac{P_{meas}}{P_{RBM}})\rangle_{P_{RBM}}
\end{eqnarray}
KL divergence can only be zero if the two distributions  $P_{meas}$ and $P_{RBM}$ exactly agree. This is seen to be the case in Fig. \ref{Fig:Pmeas_stat}(c) for all iterations during the training process. This indicates that with the $10^6$ total measurement shots , the quantum circuit faithfully produces the RBM distribution for the systems studied in the report.} 

\begin{figure}[!htb]
    \centering
    \includegraphics[width=1.0\textwidth]{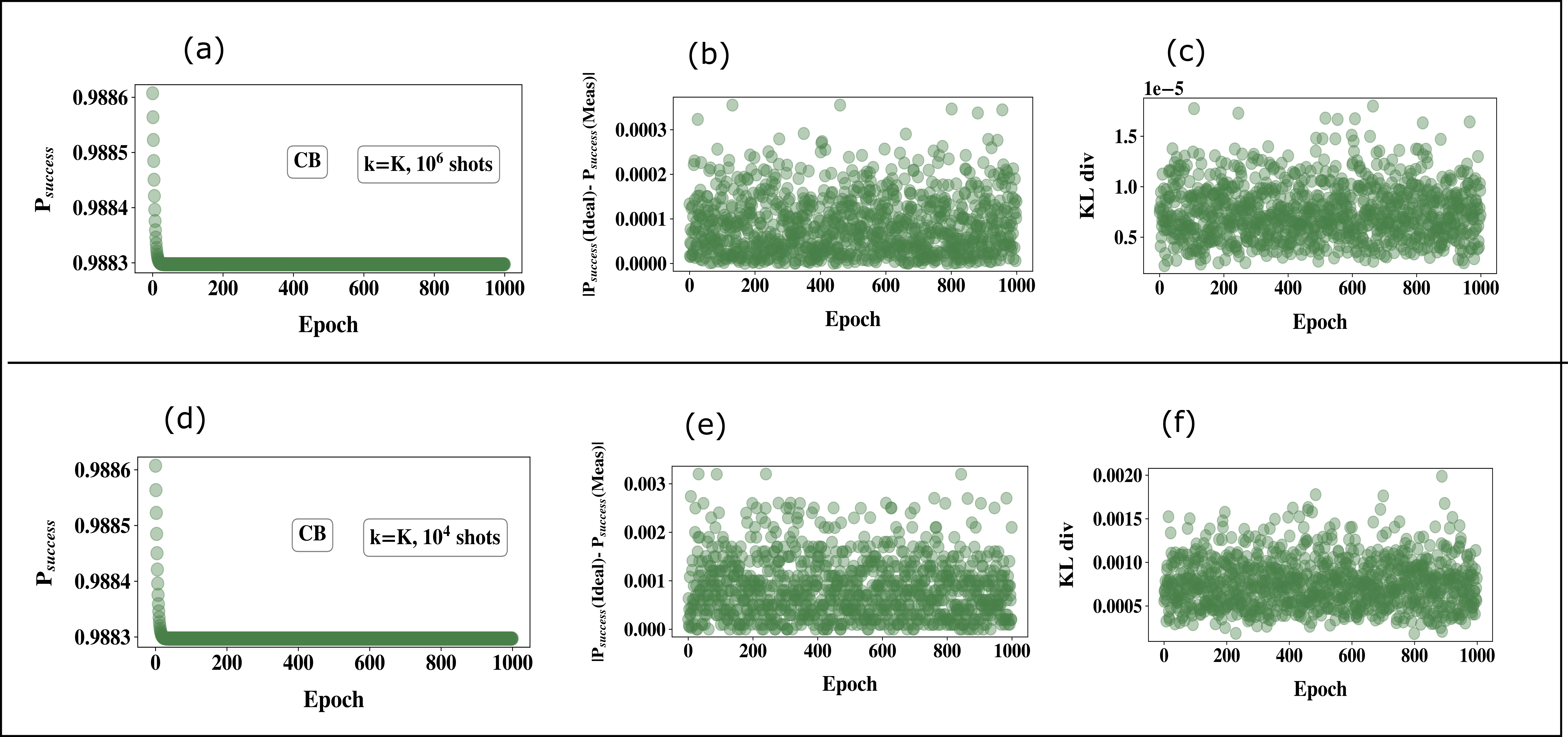}
    \caption{{\color{black}The ideal probability of successfully collapsing the ancilla register in state $|\vec{1_a}\rangle$ as computed from Eq. \ref{P_act} 
    (b) The difference between a) and the fraction of the total number of times the ancilla register collapsed in state $|\vec{1_a}\rangle$ as obtained from the direct measurement statistics in the quantum circuit. This quantity is procured by counting the number of times such an event happened while measuring all the qubits and dividing the count with the total number of measurement shots used ($10^6$ in this case). We see this value deviates only marginally from the ideal value in a) indicating that the desired event is extremely favorable. (c) The KL divergence of the distribution constructed from post-selecting all bit-strings $(\vec{\sigma}, \vec{h})$ for the visible and hidden-node qubits after the ancilla register collapsed in state $|\vec{1_a}\rangle$ and the exact RBM distribution. We see that the KL divergence is extremely close to zero indicating that the circuit can correctly learn the RBM distribution with the designated number of shots. This is because in most of them the favorable outcome of ancilla register collapsing to 
    $|\vec{1_a}\rangle$ happens naturally (see (a) and (b)) for the systems being studied in this report even with moderate $k$-parameter ($k \le 1.5$) (see text for discussion). (d)-(e)-(f) are similar plots as (a)-(b)-(c) but with $10^4$ total measurement shots. All the results are simulated in the \textbf{`RBM-qasm'} variant for conduction band (CB) of $\rm{MoS}_2$ at $(k_x, k_y)=K$ symmetry point which is where the direct band-gap is lowest. The simulation is performed by warm starting with initial parameter set from a converged run at a nearby k-point.}}
    \label{Fig:Pmeas_stat}
\end{figure} 

{\color{black} We see from Fig. \ref{Fig:Pmeas_stat} (d)-(f) that the distribution can be constructed with appreciable accuracy with $10^4$ measurement shots too. This situation simulates the measurements on \textbf{`RBM-IBMQ'} (without the noise) wherein only 8192 shots are maximally allowed. This is a consequence of the fact that the probability of successful sampling (see Fig.
\ref{Fig:Pmeas_stat}(a) and (d) is high and hence practically all measurements yields the favorable outcome. Even though from Fig. \ref{Fig:Pmeas_stat}(e) and Fig. \ref{Fig:Pmeas_stat}(f) one can see that the deviations from the expected value has increased than in \ref{Fig:Pmeas_stat}(b) or \ref{Fig:Pmeas_stat}(c) due to lesser number of shots yet such deviations are still too small to be of practicable consequences. For example, the final error in energy at the end of the training in both $10^6$ and $10^4$ shots is $\le 10^{-4}$ eV.
Thus we conclude for reasonably large number of measurements ($10^6 \:\:\rm{or}\:\: 10^4$) one can construct the distribution faithfully for the systems studied in this report with moderate $k$ parameter. Fig \ref{Fig:Pmeas_stat}(d)-(f) simulates only the ideal case for the hardware data. Of course for the \textbf{`RBM-IBMQ'} variant, the distribution during the training is further corrupted due to gate infidelities and qubit-decoherence error which is manifested in the higher errors in the final result for \textbf{`RBM-IBMQ'} variant as compared to the \textbf{`RBM-qasm'} variant in all the systems studied in the main manuscript.

It must be emphasized that 
the probability of sampling and collapsing the ancilla register in the desired state $|\vec{1_a}\rangle$ is entirely a function of the parameter set $(\vec{a}, \vec{b}, \vec{W}, k)$ encountered during the training process (see Eq. \ref{P_act}, Eq. \ref{low_bnd}). For these systems, the $k$-parameter during the course of the training is always found to be moderate $k \le 1.5$. Even with such a low $k$, the remaining set $(\vec{a}, \vec{b}, \vec{W})$ is such that a high value of $P_{success}$ is naturally obtained.
The parameter set $(\vec{a}, \vec{b}, \vec{W})$ is updated using the cost function which makes the values it acquires dependant on the Hamiltonian $\hat{H}$ and symmetry operator $\hat{O}$ of the specific system being treated. Thus the specific values of the successful sampling probability $P_{success}$ in Fig. \ref{Fig:Pmeas_stat}(a) and Fig. \ref{Fig:Pmeas_stat}(d) is somewhat characteristic of the system. This makes formulating a general expression for the total number of measurements required in our algorithm to successfully construct the target distribution (with a chosen error) difficult as even for a given architecture of the network (visible layer $n$ and hidden layer $m$) and a certain $k$ parameter, the parameter set $(\vec{a}, \vec{b}, \vec{W})$ may change for different systems which may in turn alter the probability of successful sampling thereby necessitating different number of total required measurements. 
But in systems wherein the the parameter set $(\vec{a}, \vec{b}, \vec{W})$ do not lead to favorable value of $P_{success}$ with low $k$ , $k$-parameter in our model will have to be tuned by the user adaptively to a higher value to make the lower bound for the probability of successful sampling (see Eq. \ref{low_bnd}) greater than a chosen preset thereby guaranteeing that at each iteration a good subset of these measurements are always fruitful with which the distribution can be constructed.}

\newpage

\section{Variation of the results with changing hidden node density}

{\color{black} In this section we simulate the effect of changing the hidden node density $\alpha = \frac{m}{n}$ on the results for the systems treated in this report. We take $\rm{MoS}_2$ as a prototypical example and train the network using the \textbf{`RBM-qasm'} and \textbf{`RBM-cl'} variant for the $(k_x, k_y)=K$ point for both the valence band (VB) and conduction band (CB). The number of visible node neurons $n$ as discussed for this system is 2. We vary the number of hidden neurons $m = [2,3,4]^T$ which corresponds to a changing density $\alpha = [1, 1.5, 2]^T$ respectively. Varying the hidden node density changes the circuit depth/gate-requirements and the number of parameters used for training the network. We see for these systems, the results do not change much as all energy errors are below $10^{-3}$ eV and hence way below the threshold of chemical accuracy. However for correlated systems one may need to make the RBM ansatz more expressive by enhancing the hidden node density\cite{Melko2019a_s}. We shall explore this point again for a molecular example in section \ref{LiH}}

\newpage

\begin{figure}[!htb]
    \centering
    \includegraphics[width=1.0\textwidth]{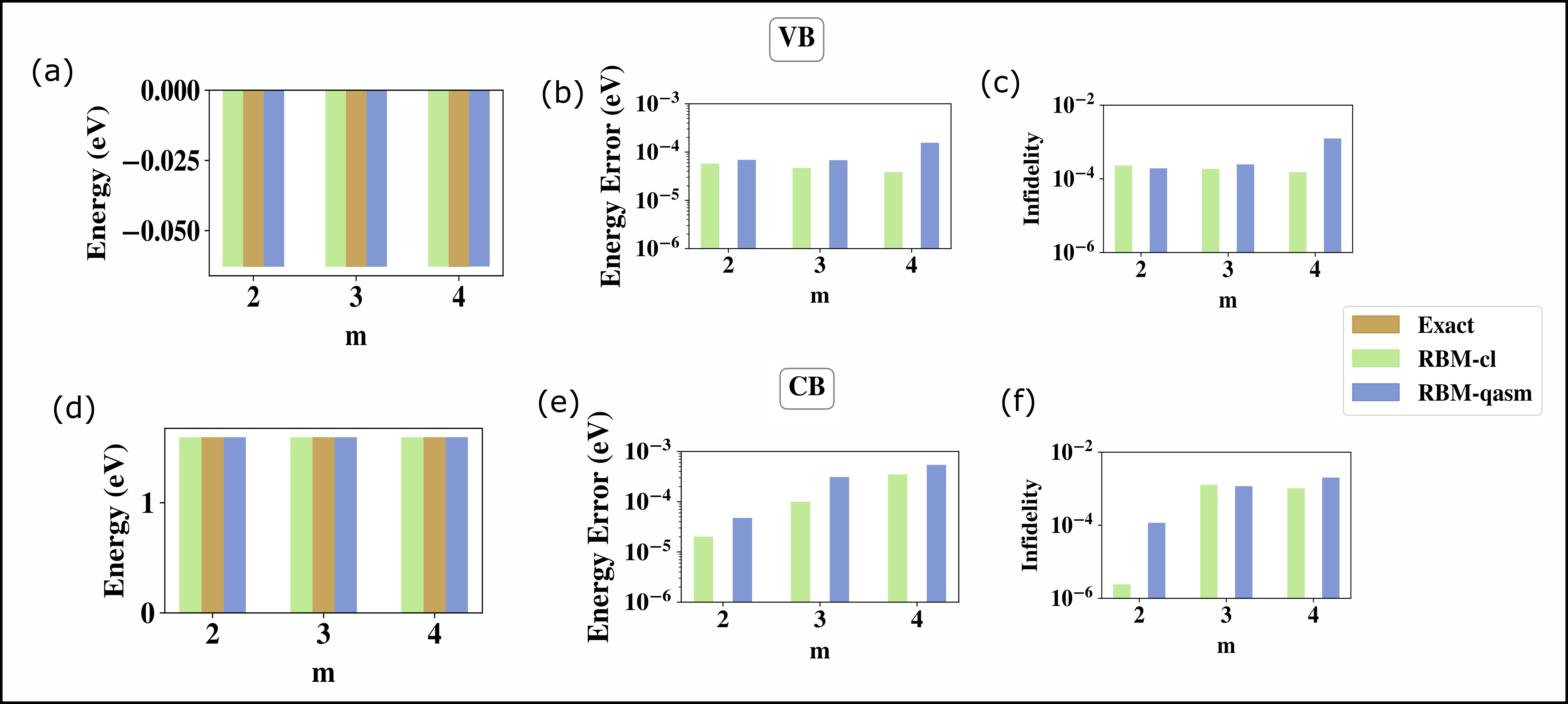}
    \caption{{\color{black}(a) The energy of the valence band (VB) at $(k_x, k_y)=K$ point for $\rm{MoS}_2$ for the exact case and the two flavors of RBM namely \textbf{`RBM-qasm'} and \textbf{`RBM-cl'} are plotted by changing the number of neurons $m$ in the hidden node. (b) The corresponding energy errors from the calculations in a) (c) The corresponding state infidelities from the calculations in a) (d) Similar result as in a) but for the conduction band (CB) at $(k_x, k_y)=K$ point for $\rm{MoS}_2$. (e) The corresponding energy errors from the calculations in d) (f) The corresponding state infidelities from the calculations in d)}}
    \label{Fig:MoS2_m_n}
\end{figure}

\newpage

\section{Spin-Orbit Coupling (SOC) data for $\rm{WS}_2$}

\begin{figure*}[!htb]
    \centering
    \includegraphics[width=0.6\textwidth]{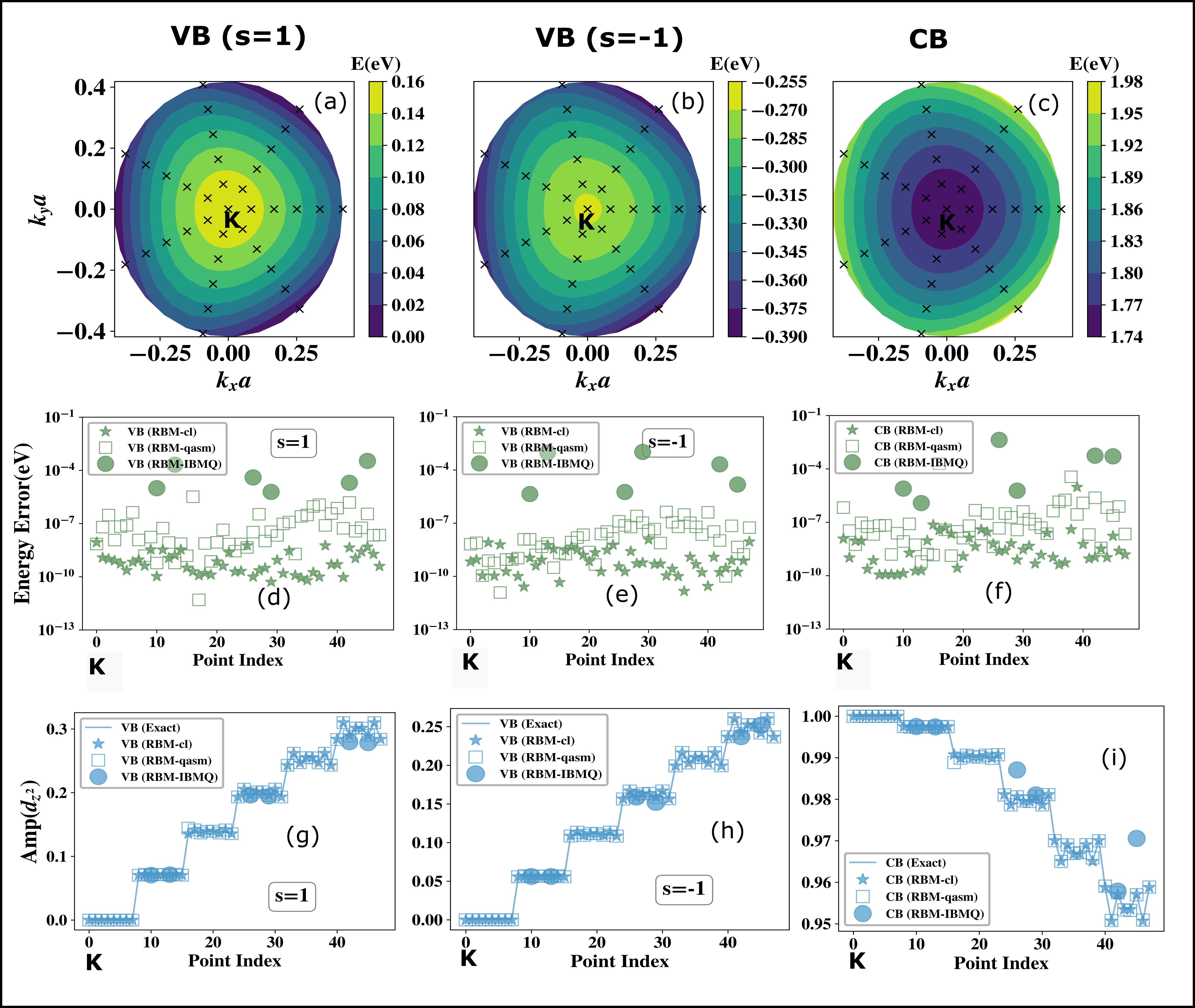}
    \caption{{\color{black}(a) The exact energy contours in valence band (s=1) within the three-band approximation for the Hamiltonian (see Eq. 8 in main manuscript) as a function of ($k_x$, $k_y$) near the $K$-point in $\rm{WS}_2$ (b) Same as in a) but for s=-1 (c) Same as in a) for the conduction band. The crosses in (a), (b) and (c) denotes the ($k_x$, $k_y$) pair wherein calculations for all three flavors of RBM have been executed. (d) Energy errors in eV from three flavors of RBM calculations for points denoted as cross in a) for the valence band (s=1) case computed using $\lambda=0$ in Eq.  \ref{cost_fn_supp} in $\rm{WS}_2$. The x-axis is a flattened point index with $(k_x, k_y)$ pairs marked as crosses in (a) mapped to integers such that the origin is at the $K$-point. From the $K$-point, the flattened point index scale moves spirally outwards grouping all $(k_x, k_y)$ pairs satisfying $|k| = \sqrt{k_x^2 + k_y^2}$ as consecutive integers and then proceeding to the next $|k|$ (e) Same as in d) but with points denoted in b) as crosses for other valence band with s=-1.(f) Same as in d) but for points denoted in c) as crosses for the conduction band computed with $(\lambda=5, \omega=0, \hat{O}=|\nu_0\rangle \langle \nu_0|)$ in Eq. \ref{cost_fn_supp} 
    (g) The amplitude for the occupancy of $d_{z^2}$ orbital on the metal for states computed at ($k_x$, $k_y$) pairs near the $K$-point from all three flavors of RBM as well as the exact states in valence band (s=1) for $\rm{WS}_2$. The amplitude of states with the same $|k| = \sqrt{k_x^2 + k_y^2}$ appear bunched together as 'steps' due to flattened point-index scale used. Near the $K$-point the amplitude is the same for all such pairs within a given step due to isotropy of the energy surface. However away from the $K$-point deviations appear due to trigonal warping owing to the $D_{3h}$ symmetry of the unit cells in TMDCs. The states from all flavors of RBM can resolve the influence of warping accurately with the performance worsened for the noisy variant.(h) Same as in g) for valence band (s=-1) (i) Same as in g) for conduction band. For all these calculations the warping parameters are kept the same as that for $\rm{MoS}_2$ even through the band energies are obtained within the three-band approximation calculated using RBM for $\rm{WS}_2$ in the main manuscript}}
    \label{Fig:WS2_SOC}
\end{figure*}
\newpage

\section{Eigenvectors of $L_z$ and $L^2$ operator for $\rm{MoS}_2$ and $\rm{WS}_2$}
The $L_z$ operator at $K-$ point in the three-band basis of $(d_{z^2}, d_{xy}, d_{x^2 - y^2})$ orbitals of the metal centre is given as \cite{Liu2013_s}
\begin{align}
\hat{L_z} &=
\begin{bmatrix}
0 & 0 & 0 \\
0 & 0 & 2i \\
0 & -2i & 0 \\
\end{bmatrix}
\end{align}
where $i=\sqrt{-1}$. The eigenvectors and eigenvalues of the $L_z$ and $L^2$ matrix is given as \\

\begin{tabular}{ |p{3cm}|p{3cm}|p{3cm}|p{3cm}|}
 \hline
 \multicolumn{4}{|c|}{Eigenvectors and Eigenvalues of $L_z/L^2$ operator} \\
 \hline
 Eigenvalue ($L_z$) & Eigenvalue ($L^2$) & Eigenvector & Band Index\\
 \hline
 -2   & 4 & $\frac{1}{\sqrt{2}}[0, -1, -i]^T$   & CB+1 \\
 0  & 0 & $[1, 0, 0]^T$   & CB \\
 2  &  4 & $\frac{1}{\sqrt{2}}[0, 1, -i]^T$   & VB \\
 \hline
\end{tabular}\\\\

Note that the basis is $d_{z^2} = [1,0,0]^T$, $d_{xy} = [0,1,0]^T$, $d_{x^2 - y^2} = [0,0,1]^T$ and the notation VB= valence band (ground state), CB = conduction band (1st excited state) and CB + 1 = Higher energy band (2nd excited state above conduction band). As mentioned in \cite{Liu2013_s}, in the chosen basis, the matrix elements of $L_x$ operator and $L_y$ operator are all zeros and hence $L^2$ operator enjoys exclusive contribution from $L_z$ operator given above. As a result, the eigenvectors of $L^2$ operator are the same as given in the table above but the eigenvalue pair (-2, 2) of $L_z$ maps to the same eigenspace of $L^2$ with eigenvalue =4. In other words $L^2$ has a doubly-degenrate eigenspace of eigenvalue =4 made from eigenvectors VB and CB+1 (see Table above) whereas a non-degenerate eigenspace of eigenvalue =0 with the eigenvector CB (see Table above).

\newpage

\section{Transferability of the Learning to Other systems}\label{Tr_learn_WSe2}

{\color{black} In this section we simulate the possibility of using our algorithm which has trained the network for one system and see if it is possible to 'transfer' the learning to get converged results in a different but closely related system. For unsupervised classical deep learning algorithms once 
the parameters of the neural network is tuned so that the probability distribution of the visible node mimics the unknown distribution of the training data closely, any new sample drawn from the visible node will be representative of a sample generated from the target distribution. No further training is necessary indicating that such models are highly transferable. However for quantum data as has been treated in this report, the meaning of transferability needs to be clarified. In our algorithm, the objective is to train the network to mimic the amplitude and phase distribution of an eigenstate of a given Hamiltonian $H$. Even if we use a fully trained network from some system yet some further amount of training would be necessary for a similar system as the Hamiltonian matrix $\hat{H}$ and the symmetry operator $\hat{O}$ are changing and hence the eigenstates of the new $\hat{H}$ are also slightly different than in the previous system. In that sense, the learning of our algorithm for quantum data is only partially transferable. That being said, a fully trained for a similar system does help and can enhance the rate of convergence and reduce the number of iterations for further re-training. We simulate this possibility using \textbf{`RBM-qasm'} and \textbf{`RBM-cl'} variant for $\rm{WSe}_2$ using the trained network for $\rm{MoS}_2$. Due to the symmetry partitioning of the metal orbitals as guaranteed in \cite{Liu2013_s}, to treat $\rm{WSe}_2$ too one would need $n=2$ qubits. We use $m=2$ as for other systems as well. The gate requirements is exactly the same as for $\rm{MoS}_2$ and $\rm{WS}_2$. We see that in both the variants starting with a trained network for $\rm{MoS}_2$ converged results can be obtained using just 1000-2000 iterations. All the results are displayed in Fig. \ref{Fig:Tr_Wse2}.
}
\newpage

\begin{figure}[!htb]
    \centering
    \includegraphics[width=1.0\textwidth]{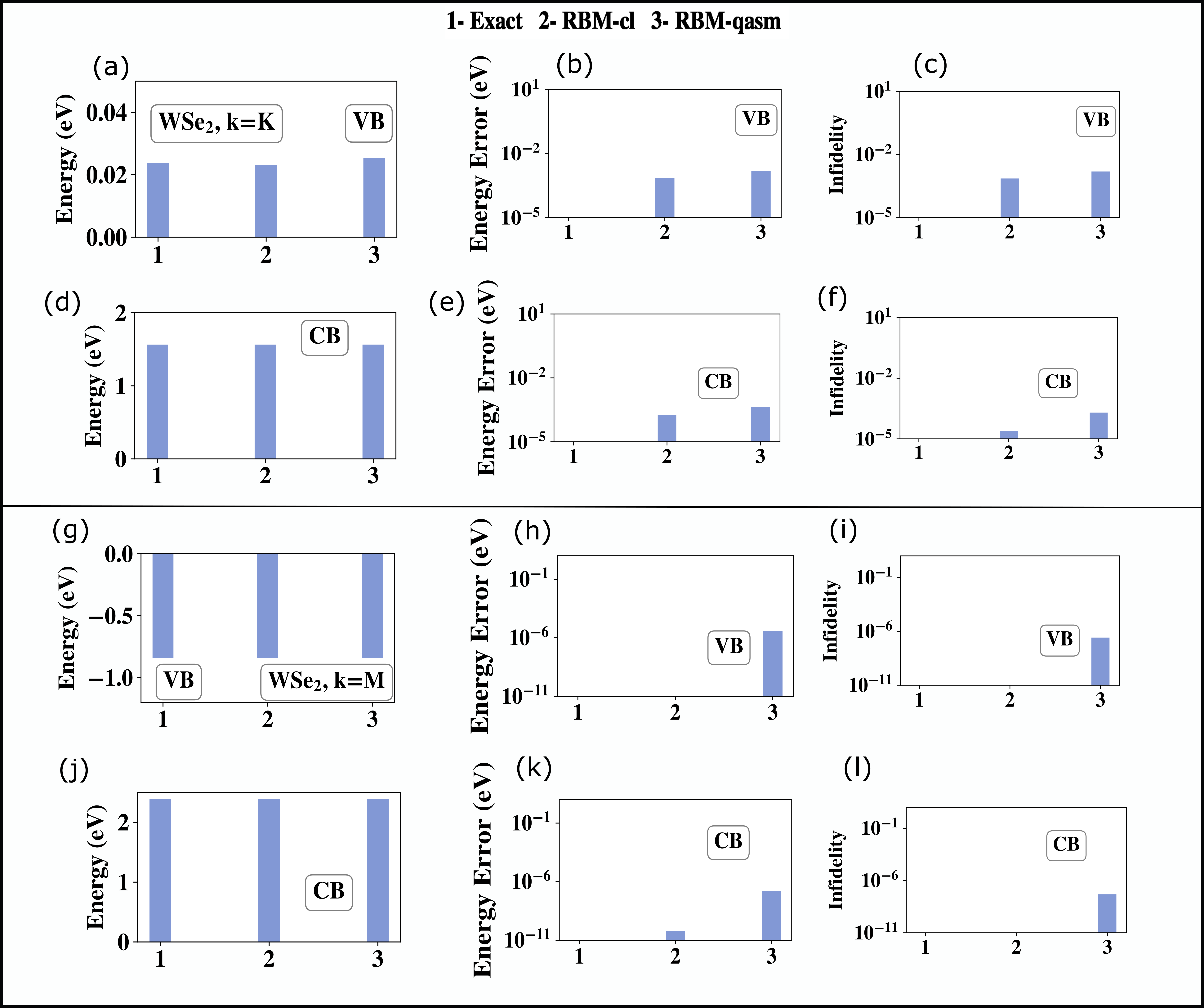}
    \caption{{\color{black}(a) The energy of the valence band (VB) at $(k_x, k_y)=K$ point for $\rm{WSe}_2$ in the three-band model \cite{Liu2013_s} for the exact case and the two flavors of RBM namely \textbf{`RBM-qasm'} and \textbf{`RBM-cl'}. The calculations are done by starting from a network trained with the converged results for the VB of $\rm{MoS}_2$ at the $K$ point. (b) The corresponding energy errors from the calculations in a) (c) The corresponding state infidelities from the calculations in a) (d) Similar result as in a) but for the conduction band (CB) at $(k_x, k_y)=K$ point for $\rm{WSe}_2$. The calculations are done by starting from a network trained with the converged results for the CB of $\rm{MoS}_2$ at the $K$ point (e) The corresponding energy errors from the calculations in d) (f) The corresponding state infidelities from the calculations in d). (g)-(l) are  results for $\rm{WSe}_2$ similar to (a)-(f) but at a different symmetry point i.e. $(k_x, k_y)=M$ point. The calculations in this case are done by starting from a network trained with the converged results for the VB/CB of $\rm{MoS}_2$ at the $M$ point}}
    \label{Fig:Tr_Wse2}
\end{figure}

\newpage
\section{LiH- A molecular example}\label{LiH}

{\color{black}In this section we benchmark the performance of our algorithm using a molecular example $\rm{LiH}$ in STO-6G basis set. The said basis generates 6 molecular orbitals (MOs) for LiH out of which the lower lying two MOs (HOMO and HOMO-1) are completely filled each with 2 electrons corresponding to the two spin orbitals. We freeze the core (HOMO-1) and use the HOMO and the LUMO in the active space\cite{Xia2018_s}. The LUMO+1, LUMO+2, LUMO+3 are a part of the virtual orbital set and are always unoccupied. The Hamiltonian matrix so constructed is a $16\times 16$ matrix . We therefore need $n=4$ neurons in the visible node and 4 qubits in the quantum circuit for these visible neurons. For the hidden neurons we have used $m=4$ as well as $m=6$ to show how the results change by varying hidden node density. The total number of qubits in the quantum circuit for the case $m=4$ is then $4+4+16 = 24$ where 16 ancillary qubits ($mn$) are used for mediating the interacting terms in the RBM distribution. For $m=6$ we consequently need 34 qubits. The total number of gates for the case of $m=4$ is thus 8 single qubit $R_y$ gates and 16 $C-C-R_y$ gates. For the case of $m=6$ the corresponding numbers are 10 single qubit $R_y$ gates and  24 $C-C-R_y$ gates. 

We simulate the ground and excited state potential energy surface as a function of stretching the $\rm{Li}$ and $\rm{H}$ bond length. The results are displayed in Fig. \ref{Fig:LiH_m4_n4} and in Fig. \ref{Fig:LiH_m6_n4} along with the corresponding errors from the exact diagonalization results. We use the 
\textbf{`RBM-qasm'} and \textbf{`RBM-cl'} variant for all benchmarking. While most runs on the \textbf{`RBM-cl'} variant are randomly initialized near the equilibrium bond length warm starting has been used extensively especially near dissociation limit where multi-reference correlation is important. For \textbf{`RBM-qasm'} variant the runs are sometimes warm started with the initial parameter set of a nearby bond length in the \textbf{`RBM-cl'} case. We see that in all cases away from the dissociation limit , errors are fairly low in the range of $10^{-5}-10^{-3}$ eV whereas it is $\leqslant 0.0022$ a.u. near the terminal bond lengths $r_{\rm{LiH}}$ studied in the plots ( $0.7\: \AA\:\: \leq r_{\rm{LiH}} \:\: \leq 2.8\: \AA$ )

In \cite{Xia2018_s} (see Fig. 4(b) and Fig. 4(d), the ground state potential energy surface for $\rm{LiH}$ is discussed with and without the effect of warm-starting (which the reference called `Transfer Learning'). Fig. 4(d) with warm-starting shows all points to have energy errors much lesser than in Fig. 4(b)) which is also found to be the case for the calculations in this report. The key differences between the results in Fig. \ref{Fig:LiH_m4_n4} and in \cite{Xia2018_s} needs to be emphasized at this point. \cite{Xia2018_s} simulated the results at the \textbf{`RBM-cl'} variant only as the circuit was not directly implementable on a quantum simulator (like Qiskit's $qasm  \textunderscore simulator$ as used in this report) or a NISQ device. Also \cite{Xia2018_s} used STO-3G as the basis set and number of hidden nodes $m=8$ unlike here. Besides the phase node included one neuron 
in \cite{Xia2018_s}. We see that for the ground state all errors in energy were $\le 0.001$ a.u. in \cite{Xia2018_s} which is likely due to the higher number of hidden nodes $m=8$ used which makes the network more expressive. We also changed the hidden node density and simulated the ground and the excited state for $\rm{LiH}$ in Fig. \ref{Fig:LiH_m6_n4} with $m=6$ hidden neurons apart from $m=4$. One must note that only \textbf{`RBM-cl'} variant is used in this case akin to \cite{Xia2018_s}. This is because to simulate the system at all bond lengths using $(n=4, m=6)$ neurons one would need 34 qubits in the quantum circuit which is beyond the current standards of $qasm  \textunderscore simulator$ in Qiskit Aer backend. We see the errors in the ground state calculation are usually lower in this case akin to \cite{Xia2018_s} but for excited states the trend is less clear. Possibly to lower the energy errors further for excited states one needs higher hidden node density than what is considered here. Although not used for Fig. \ref{Fig:LiH_m4_n4} or Fig. \ref{Fig:LiH_m6_n4}, optimizers like ADAM, RMSProp etc which are known to recover a neural network from locally trapped minima may also help.} 
\newpage
\begin{figure}[!htb]
    \centering
    \includegraphics[width=1.0\textwidth]{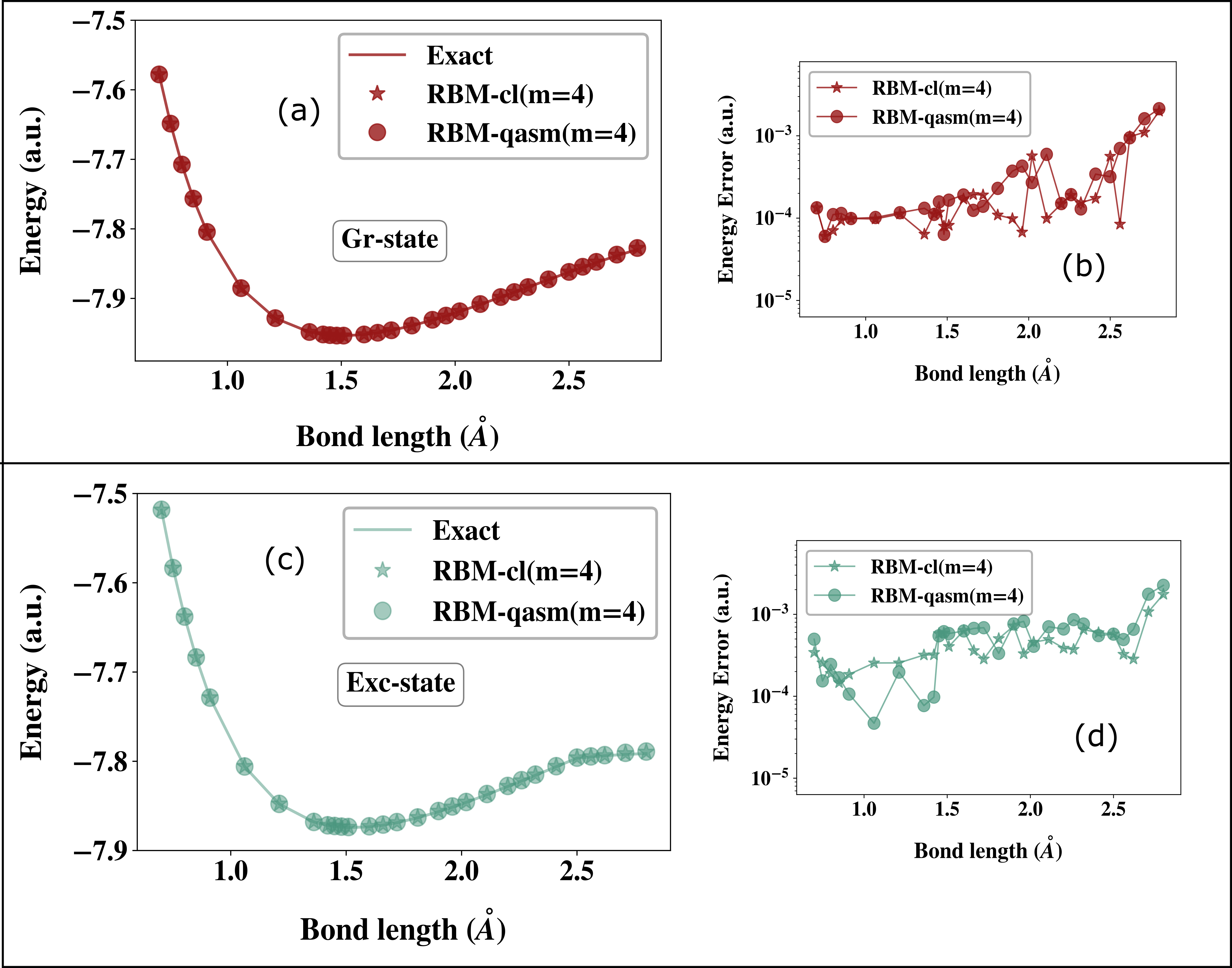}
    \caption{{\color{black}The dissociation curve for the ground state of $\rm{LiH}$ in 
    \textbf{`RBM-cl'} and 
    \textbf{`RBM-qasm'} variant overlayed against the exact value.
    (b) The error in energies from a) from the exact value.
    c)The dissociation curve for the excited state of $\rm{LiH}$ in 
    \textbf{`RBM-cl'} and 
    \textbf{`RBM-qasm'} variant overlayed against the exact value.
    (d) The error in energies from c) from the exact value.
    In all of the results in this panel we use $n=4$ and $m=4$
    }}
    \label{Fig:LiH_m4_n4}
\end{figure}
\newpage
\begin{figure}[!htb]
    \centering
    \includegraphics[width=1.0\textwidth]{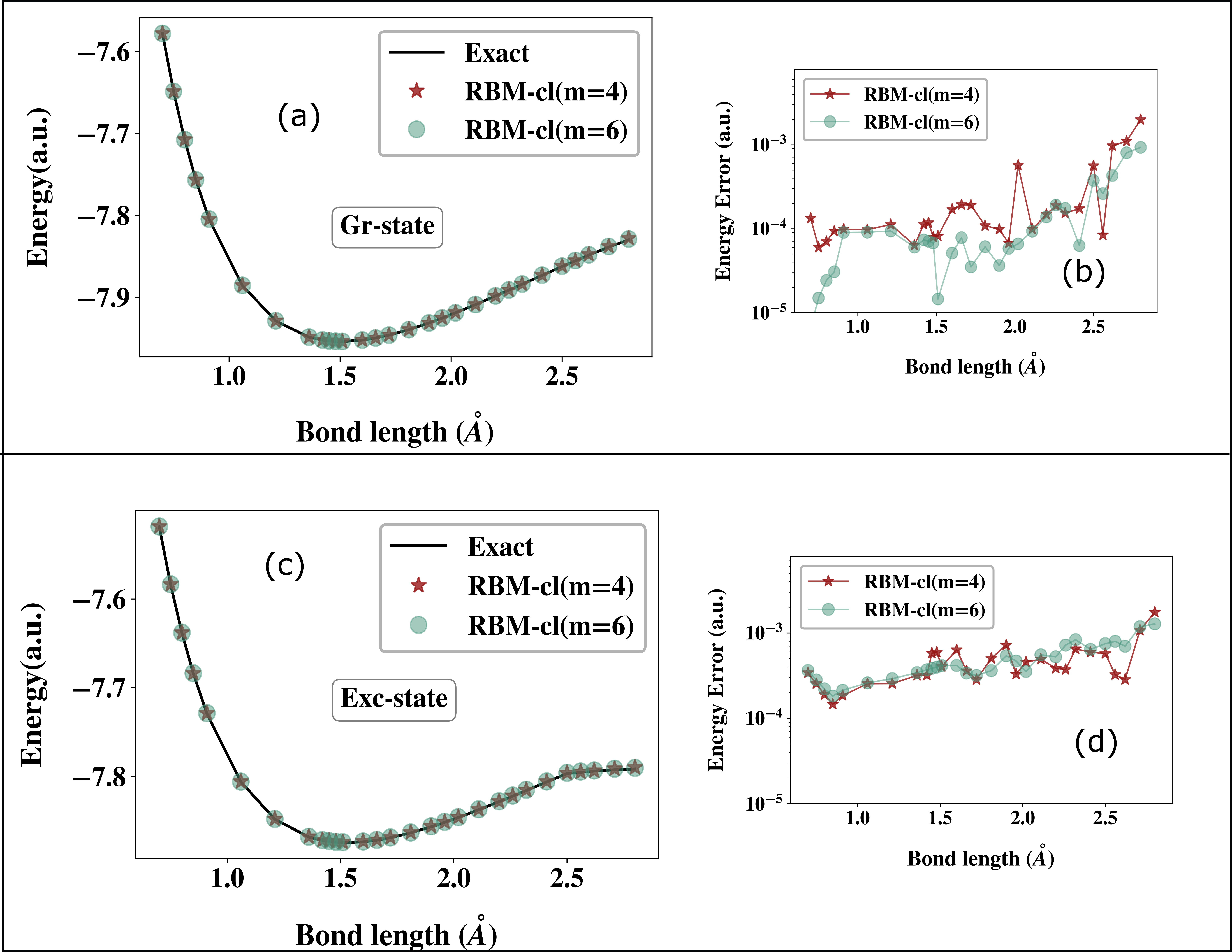}
    \caption{{\color{black}The dissociation curve for the ground state of $\rm{LiH}$ in 
    \textbf{`RBM-cl'}  variant overlayed against the exact value for two different $m$.
    (b) The error in energies from a) from the exact value for both $m$.
    c)The dissociation curve for the excited state of $\rm{LiH}$ in 
    \textbf{`RBM-cl'} variant overlayed against the exact value for two different $m$.
    (d) The error in energies from c) from the exact value for both $m$.
    In all of the results in this panel we use $n=4$ and $m=4$ and compare it with $n=4$ and $m=6$}}
    \label{Fig:LiH_m6_n4}
\end{figure}
\newpage


\end{document}